\documentclass[journal]{new-aiaa}
\usepackage[utf8]{inputenc}
\usepackage{textcomp}

\usepackage{amssymb,amsmath,latexsym,amsthm,amsfonts,mathtools}

\usepackage{mathrsfs}
\usepackage{multirow}
\usepackage{graphicx}
\usepackage{textcomp,siunitx}
\usepackage{float}
\usepackage{subcaption}
\usepackage{booktabs}
\usepackage{enumitem}
\usepackage{tikz}
\usepackage{array}
\usepackage{siunitx}
\usepackage{longtable,tabularx}
\usepackage{makecell}
\usepackage{booktabs}
\theoremstyle{plain}
\newtheorem{theorem}{Theorem}
\newtheorem{lemma}{Lemma}

\usepackage{amsmath}
\usepackage{bbding}
\theoremstyle{remark}
\newtheorem{remark}{Remark}

\usepackage{epstopdf} 
\theoremstyle{definition}
\newtheorem{definition}{Definition}

\newtheorem*{problem*}{Problem}
\usetikzlibrary{calc}
\usetikzlibrary{arrows.meta, calc, decorations.pathreplacing}

\newcommand{\sign}{\mathrm{sign}}

\usepackage{cleveref}
\usepackage{hyperref}
\Crefformat{figure}{#2Fig.~#1#3}
\Crefmultiformat{figure}{Figs.~#2#1#3}{ and~#2#1#3}{, #2#1#3}{ and~#2#1#3}
\Crefformat{equation}{#2Eq.~#1#3}
\Crefmultiformat{equation}{#2Eqs.~#1#3}{ and~#2#1#3}{, #2#1#3}{ and~#2#1#3}
\Crefname{algocfline}{Algorithm}{Algorithms}

\crefname{assumption}{assumption}{assumptions}
\Crefname{assumption}{Assumption}{Assumptions}

\usepackage[linesnumbered,ruled]{algorithm2e}
\SetKwInput{KwInput}{Input}         
\SetKwInput{KwOutput}{Output}

\usepackage{newtxmath}

\usepackage{anyfontsize}
\usepackage{accents}

\title{Terminal Time and Angle-Constrained Nonlinear Intercept Guidance}

\author{Shivam Bajpai\footnote{Ph.D. Research Scholar, \textbf{email}: \texttt{bajpaism@mail.uc.edu}}, and Abhinav Sinha \footnote{Assistant Professor, \textbf{email}: \texttt{abhinav.sinha@uc.edu} (Senior Member, AIAA).}}
\affil{Guidance, Autonomy, Learning, and Control for Intelligent Systems (GALACxIS) Lab,\\ Department of Aerospace Engineering and Engineering Mechanics,\\ University of Cincinnati, Cincinnati, OH, 45221,  USA}
\begin{document}
\maketitle
\begin{abstract}
This paper considers the problem of simultaneously controlling an interceptor's impact time and impact angle using its lateral acceleration as the sole control input. With a single control input, the nonlinear engagement kinematics is inherently underactuated, which complicates guidance law synthesis. To overcome this challenge, a hierarchical sliding mode-based guidance law is developed to concurrently regulate the two terminal constraints. The proposed architecture consists of a two-layer sliding manifold. The first layer comprises two sub-sliding surfaces corresponding to the impact time and impact angle error dynamics, respectively, while the second layer introduces a composite sliding manifold that combines the two individual sub-surfaces. Then, a variable-gain adaptive guidance law is designed to ensure time and angle-constrained interception against a stationary target, which is further extended to intercept a constant velocity target. Simulations are conducted for various engagement scenarios to attest to the efficacy of the proposed approach.
\end{abstract}
\section{Introduction}\label{sec:intro}
\lettrine[]{I}{n} last few years, rapid advancements in autonomous systems have necessitated the demand for terminal performance by requiring control over impact time and impact angle simultaneously to enhance mission effectiveness in adversarial scenarios. Control over the impact angle enables the interceptor to realize a prescribed terminal approach geometry, which is often essential for exploiting directional vulnerabilities or satisfying mission-dependent engagement requirements \cite{4103230}. At the same time, control over the impact time is central to coordinated multi-interceptor operations, especially in salvo settings where temporal synchronization can significantly improve the likelihood of mission success against defended targets \cite{1597196}. Despite substantial progress on impact-time guidance and impact-angle guidance as largely separate problems, the development of a unified guidance framework that achieves interception under both terminal constraints using only the interceptor's lateral acceleration remains a challenging and still unresolved problem.

A substantial body of literature has addressed the design of guidance laws for enforcing a prescribed impact angle at interception \cite{4103230,doi:10.2514/3.57212,640285,doi:10.2514/1.58454,doi:10.2514/1.G001547,doi:10.2514/1.37864,doi:10.2514/1.62910,Lee2012,Hou2017,doi:10.2514/1.60133,doi:10.2514/1.62737}. Much of this development has been rooted in proportional-navigation (PN)-type constructions \cite{4103230,doi:10.2514/3.57212,640285,doi:10.2514/1.58454,doi:10.2514/1.G001547,doi:10.2514/1.37864}. The earliest studies established the feasibility of shaping terminal approach geometry through modified PN laws, first for reentry guidance \cite{4103230} and subsequently in the presence of first-order autopilot dynamics \cite{doi:10.2514/3.57212}. Later works introduced biased PN formulations to explicitly drive the impact-angle error to zero \cite{640285,doi:10.2514/1.58454}, while other variants combined PN with pursuit-based strategies or switching logic to improve terminal angle regulation against stationary or nonmaneuvering targets \cite{doi:10.2514/1.G001547,doi:10.2514/1.37864}. Beyond the PN framework, optimal-control-based formulations have also been developed to achieve impact-angle regulation from a more systematic synthesis perspective \cite{doi:10.2514/1.62910,Lee2012}. These include linearized optimal guidance laws for moving-target interception \cite{doi:10.2514/1.62910} and linear time-varying designs derived through inverse optimality arguments to jointly reduce terminal miss and angle error \cite{Lee2012}. In parallel, sliding-mode-based strategies have been proposed to enhance robustness with respect to target maneuvers, model uncertainty, and bounded disturbances \cite{Hou2017,doi:10.2514/1.60133,doi:10.2514/1.62737}. Representative developments include dual-surface sliding-mode guidance with model-free target acceleration estimation for maneuvering targets \cite{Hou2017}, three-dimensional partial integrated guidance laws with angle constraints \cite{doi:10.2514/1.60133}, and finite-time sliding-mode designs that guarantee convergence to the desired terminal geometry \cite{doi:10.2514/1.62737}.

Against this backdrop, significant attention has also been devoted to guidance strategies that enforce impact time constraints, motivated by the need for simultaneous target interception and enhanced mission effectiveness \cite{1597196,doi:10.2514/1.G001618,doi:10.2514/1.G007174,doi:10.2514/1.G004669,8395020,doi:10.2514/1.G005180,doi:10.2514/1.G004424,doi:10.2514/1.G006190,doi:10.2514/1.G007595,doi:10.2514/1.G008834}. The earliest formulation of this problem appears in \cite{1597196}, where a guidance strategy was developed by augmenting the PN guidance with the feedback of impact time error. Subsequent efforts extended this line of research in several directions. A modified pure PN guidance law was proposed in \cite{doi:10.2514/1.G001618} to achieve interception of stationary targets at a prescribed impact time under nonlinear engagement kinematics, while the work in \cite{doi:10.2514/1.G007174} introduced a varying-gain PN guidance strategy capable of precise impact-time regulation without resorting to linearized engagement models. Beyond shaping the navigation constant, alternative design paradigms have also been explored. In \cite{doi:10.2514/1.G004669}, a data-driven impact-time guidance law was developed to reduce timing error, with particular relevance to hypersonic engagements involving substantial velocity variation. Sliding-mode-based formulations have likewise been employed, including the construction in \cite{8395020}, where a sliding surface combining relative range and time-to-go was used to guarantee interception at a prescribed time, and the three-dimensional nonlinear design in \cite{doi:10.2514/1.G005180}, which addressed impact-time-constrained interception within an SMC framework. In addition, the work in \cite{doi:10.2514/1.G005367} developed a salvo guidance strategy against a stationary target. The authors in \cite{doi:10.2514/1.G006190} derived a guidance strategy by solving a minimum-effort optimal control problem with fixed impact time and a quadratic approximation of the kinematic equations. Based on the relative virtual framework and the classical differential geometry curve theory, a polynomial guidance method was proposed in \cite{doi:10.2514/1.G007595} to intercept maneuvering targets at the desired impact time. In \cite{doi:10.2514/1.G008834}, an output trajectory shaping guidance law with a Bezier curve was proposed for impact time control against a stationary target.

Unlike the formulations that regulate only one terminal objective, the simultaneous treatment of both impact time and angle substantially increases the difficulty of the guidance design. Recently, the problem of simultaneously controlling impact time and impact angle has attracted significant attention. Early developments along this direction include radial-tangential guidance formulations in \cite{9266104,9348900}. The work in \cite{9266104} developed multi-interceptor guidance laws based on finite-time sliding-mode control and the super-twisting algorithm, whereas the authors in \cite{9348900} proposed a cooperative strategy founded on fixed-time control and a leader-follower architecture to achieve simultaneous arrival under prescribed impact-angle constraints. In parallel, several studies pursued time-to-go-estimation-independent designs \cite{doi:10.2514/1.G003765,doi:10.2514/1.G000414,doi:10.2514/1.G006229,10225257}. These include the shaping-function-based analytical construction in \cite{doi:10.2514/1.G003765}, the optimal-control formulation in \cite{doi:10.2514/1.G000414}, the time-varying lead-angle tracking law in \cite{doi:10.2514/1.G006229}, and the geometric guidance approach reported in \cite{10225257}. Sliding-mode-based solutions have also continued to play a prominent role. The work in \cite{Chen_Wang} addressed the simultaneous enforcement of impact time and impact angle through a dedicated sliding-mode structure, while \cite{doi:10.1177/09544100211029817} incorporated explicit time-to-go estimation within the sliding mode framework. Related developments include the polynomial time-to-go based design in \cite{6060928} and the two-stage strategy in \cite{Hu2018NewIT}, which combines sliding mode and PN guidance through a virtual-target switching mechanism. In our previous work \cite{Scitech26}, this line of research was further extended to account for first-order autopilot dynamics, and the interception of a non-maneuvering target via the concept of predicted interception point \cite{doi:10.2514/1.G007122}.

Although the aforementioned studies have significantly advanced the state of the art, several limitations remain. Note that a class of previous methods is based on 
radial-tangential guidance formulations for regulating the desired impact time and impact angle. While such formulations are analytically convenient, their practical realizability is limited in many interceptor settings. The reason is that direct manipulation of the radial velocity component is generally not available as an independent control input. Instead, the interceptor is typically equipped to generate only lateral acceleration normal to its velocity vector, while its longitudinal speed is either fixed, weakly actuated, or constrained by propulsion and airframe limitations. Consequently, guidance laws that presuppose direct regulation of radial motion may not be consistent with the true control authority of the vehicle and may therefore be difficult to implement in realistic engagements. A different class of methods achieves simultaneous terminal constraint satisfaction by imposing a particular sliding-surface structure or by explicitly embedding a time-to-go estimate into the guidance design. Although such constructions can be effective under nominal conditions, they often inherit reduced flexibility and robustness. Hence, the guidance performance gets largely dictated by the chosen surface parameterization or by the fidelity of the chosen time-to-go estimate. Moreover, optimal-control-based and polynomial-parameterization-based schemes often incur a nontrivial computational burden, especially when extended to cooperative or multi-interceptor scenarios. In such settings, the simultaneous enforcement of impact time and angle constraints across multiple agents can lead to increased online computational complexity, which may limit scalability and real-time deployability. Accordingly, there remains a clear need for guidance designs that enforce both terminal constraints through the interceptor's available lateral-acceleration input alone, without being tied to a particular sliding-mode structure or to explicit time-to-go estimation. We summarize the main contribution of this work below.

A general hierarchical sliding-manifold structure is introduced to regulate the terminal objectives through submanifolds. This construction provides a systematic mechanism for decomposing the simultaneous impact-time and impact-angle regulation problem into interconnected objectives while preserving an integrated design architecture. Owing to this general construction, the proposed method is flexible, not tied to any particular time-to-go expression, and can accommodate arbitrary time-to-go estimates and varying target motions. 

A variable-gain adaptive guidance law is proposed for the simultaneous control of impact time and impact angle using the interceptor's lateral acceleration as the sole control input. This is particularly important from a practical standpoint, since lateral acceleration constitutes the physically available guidance input in many interceptor systems, whereas direct regulation of radial motion is generally not realizable. By constructing the guidance law directly through the available control channel, the proposed design remains consistent with realistic interceptor control authority and is therefore more suitable for practical implementation.

The resulting framework is modular and flexible, making it amenable to the incorporation of additional terminal or in-flight constraints without fundamentally altering the baseline guidance architecture. Any new constraint can be accommodated by introducing an associated submanifold, after which an augmented composite manifold may be constructed to integrate the new objective within the existing framework.

The proposed guidance framework is formulated directly under nonlinear engagement kinematics, which avoids reliance on linearized approximations and broadens its applicability across diverse operating conditions. The design is first established for stationary targets and is then extended to non-maneuvering targets through the predicted interception point concept, thus enlarging the class of target scenarios that can be handled within a unified framework.

The proposed guidance law is systematically compared with existing (simultaneous) impact time and impact angle guidance strategies to highlight its structural and practical advantages. In addition, the efficacy and generality of the proposed framework are further demonstrated by redesigning the guidance law using a different time-to-go expression, thereby showing that the methodology is not tied to a particular time-to-go model and can retain its effectiveness under alternative time-to-go constructions.

\section{Problem Formulation}\label{sec:problem}
As shown in \Cref{fig:enggeo}, we consider a planar two-agent engagement scenario consisting of an interceptor ($\mathrm{P}$) and a target ($\mathrm{T}$).
 \begin{figure}[h!]
    \centering
\includegraphics[width=.75\linewidth]{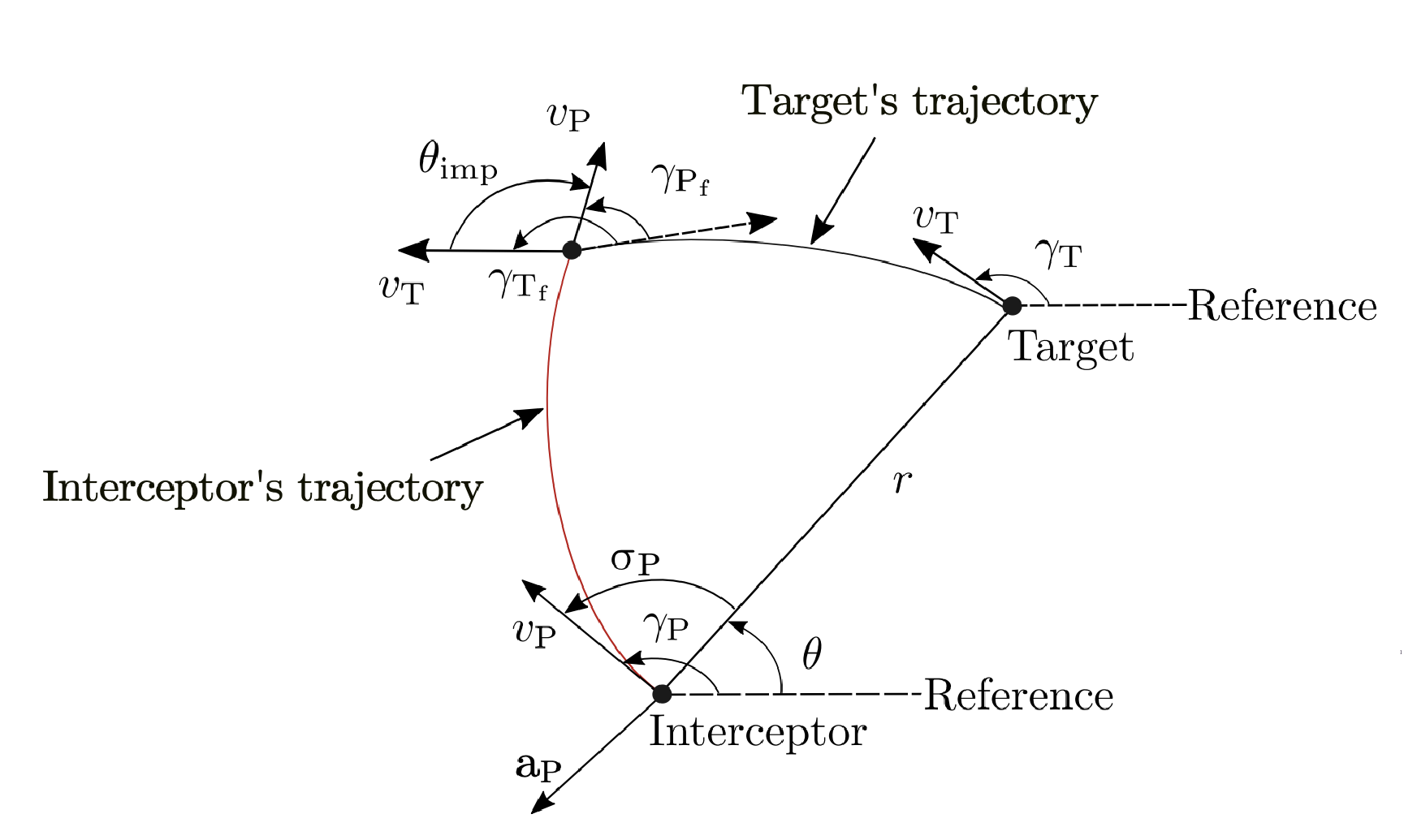}
    \caption{Interceptor-target engagement geometry in a plane.} 
    \label{fig:enggeo}
\end{figure}
The interceptor is assumed to move with constant speed $v_\mathrm{P}$ and is tasked with intercepting the target subject to prescribed terminal constraints on both the impact time and the impact angle. The planar engagement assumption is appropriate for scenarios in which the interaction occurs at approximately constant altitude and captures the essential geometric features of many air-combat settings. In \Cref{fig:enggeo}, $r$ and $\theta$ denote the relative range and the line-of-sight (LOS) angle, respectively, while $\gamma_\mathrm{P}$ and $\sigma_\mathrm{P}$ represent the interceptor heading angle and lead angle. Under these definitions, the kinematics of relative motion between the agents are given by
\begin{subequations}\label{eq:enggeo}
    \begin{align} 
        \dot{r}=&-v_\mathrm{P}\cos\sigma_\mathrm{P} \label{eq: r_dot}, \\
        r\dot{\theta}=&-v_\mathrm{P}\sin\sigma_\mathrm{P}  \label{eq: theta_dot},
    \end{align}
    where $\sigma_\mathrm{P}=\gamma_\mathrm{P}-\theta$, and 
    \end{subequations}
    \begin{align}
    \dot{\gamma}_\mathrm{P}=&~\dfrac{a_\mathrm{P}}{v_\mathrm{P}} , \label{eq:gamma_dot}
\end{align}
where $a_\mathrm{P}$ is the interceptor's sole control authority.
\begin{lemma} \label{lemma 2}
    The dynamics of the lead angle of the interceptor has a relative degree of one with respect to its lateral acceleration, $a_\mathrm{P}$.
    \end{lemma}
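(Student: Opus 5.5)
The plan is to differentiate the lead angle $\sigma_\mathrm{P}=\gamma_\mathrm{P}-\theta$ once with respect to time along the engagement kinematics in \Cref{eq:enggeo} and \Cref{eq:gamma_dot}. We then check that $a_\mathrm{P}$ appears explicitly in $\dot{\sigma}_\mathrm{P}$ with a coefficient that never vanishes. By definition, this is relative degree one with respect to $a_\mathrm{P}$.

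Concretely, we write $\dot{\sigma}_\mathrm{P}=\dot{\gamma}_\mathrm{P}-\dot{\theta}$. We substitute $\dot{\gamma}_\mathrm{P}=a_\mathrm{P}/v_\mathrm{P}$ from \Cref{eq:gamma_dot}. We obtain $\dot{\theta}=-v_\mathrm{P}\sin\sigma_\mathrm{P}/r$ from \Cref{eq: theta_dot}, which is legitimate for $r>0$, i.e., before interception. This yields
\begin{equation*}
\dot{\sigma}_\mathrm{P}=\frac{a_\mathrm{P}}{v_\mathrm{P}}+\frac{v_\mathrm{P}\sin\sigma_\mathrm{P}}{r}.
\end{equation*}
This is a control-affine expression $\dot{\sigma}_\mathrm{P}=f(r,\sigma_\mathrm{P})+g\,a_\mathrm{P}$ with drift $f=v_\mathrm{P}\sin\sigma_\mathrm{P}/r$ and input gain $g=1/v_\mathrm{P}$.

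The remaining step is to argue that $g\neq 0$ everywhere on the domain of interest. The interceptor speed $v_\mathrm{P}$ is a positive constant by assumption, so $g=1/v_\mathrm{P}$ is finite, nonzero and constant. Hence $\partial\dot{\sigma}_\mathrm{P}/\partial a_\mathrm{P}\neq 0$: the input appears at the first derivative and not at order zero, since $\sigma_\mathrm{P}$ itself does not depend on $a_\mathrm{P}$. The relative degree is therefore exactly one.

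The only subtle point is the domain. The drift term is singular at $r=0$, so the statement should be understood on the engagement phase $r>0$, which is the natural setting before interception. I would state this restriction explicitly rather than treat it as a genuine obstacle.
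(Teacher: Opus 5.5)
Your proposal is correct and follows the same route as the paper. You differentiate $\sigma_\mathrm{P}=\gamma_\mathrm{P}-\theta$ and substitute \eqref{eq:gamma_dot} and \eqref{eq: theta_dot} to get $\dot{\sigma}_\mathrm{P}=a_\mathrm{P}/v_\mathrm{P}+v_\mathrm{P}\sin\sigma_\mathrm{P}/r$, then observe that $a_\mathrm{P}$ appears at the first derivative. Your explicit checks that the input gain $1/v_\mathrm{P}$ never vanishes and that the statement is meant on $r>0$ are small additions the paper leaves implicit.
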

\begin{proof}
    Differentiating the lead angle $\sigma_\mathrm{P}$ with respect to time yields
  \begin{equation} 
        \dot{\sigma}_\mathrm{P}  = \dot{\gamma}_\mathrm{P} - \dot{\theta}, \label{eq: sigma_dot} 
  \end{equation}
  which can be further simplified using \eqref{eq: theta_dot} and \eqref{eq:gamma_dot} to
  \begin{equation}
      \dot{\sigma}_\mathrm{P}  = \dfrac{a_\mathrm{P}}{v_\mathrm{P}} + \dfrac{v_\mathrm{P} \sin \sigma_\mathrm{P}}{r}. \label{eq: sigma_dot_expression}
  \end{equation}
   It can be observed that from \eqref{eq: sigma_dot_expression} that the interceptor's lateral acceleration appears in the first derivative of the lead angle, indicating that it influences ${\sigma}_\mathrm{P}$. This completes the proof.
\end{proof}
\Cref{lemma 2} indicates that the lead angle responds directly to the interceptor's lateral acceleration. Therefore, by modulating the lateral acceleration, one shapes the evolution of the lead angle, which in turn modifies the interceptor trajectory, the engagement geometry, and ultimately the terminal conditions of the engagement.
\begin{definition}\label{def:time}
    Impact time, $t_\mathrm{f}$, is the time instant when the interceptor intercepts the target. At any instant of time, the remaining time till interception (also known as the time-to-go), $t_\mathrm{go}$, can be defined as the difference between $t_\mathrm{f}$ and the current time, $t$, that is $t_\mathrm{go} = t_\mathrm{f} - t$.
\end{definition}
\begin{definition}\label{def:angle}
    Impact angle, $\theta_\mathrm{imp}$, is the angle between the velocity vectors of the target and the interceptor at the interception time. Mathematically, it is equivalent to
    \begin{align}
        \theta_\mathrm{imp} = \gamma_\mathrm{P_f}-\gamma_\mathrm{T_f},
    \end{align}
    where $\gamma_\mathrm{T_f}$ and $\gamma_\mathrm{P_f}$ are the headings of the target and the interceptor at the interception time, respectively. 
\end{definition}
Without loss of generality, one can consider the target's heading, $\gamma_\mathrm{T_f}$, to be zero in the case of a stationary target, which results in $\theta_\mathrm{imp}=\gamma_\mathrm{P_f}$. When the interceptor is on a collision course, i.e. $r\dot{\theta}=0$, it follows from \eqref{eq: theta_dot}
\begin{align}\label{eq:thetadotdef}
   \dot{\theta} = -\dfrac{v_\mathrm{P}\sin(\gamma_\mathrm{P_f}- \theta_\mathrm{d})}{r} = 0  \implies  -\dfrac{v_\mathrm{P}\sin(\theta_\mathrm{imp}- \theta_\mathrm{d})}{r} = 0.
\end{align}
where $\theta_\mathrm{d}$ is the desired impact angle. Solving \eqref{eq:thetadotdef} over $(-\pi, \pi]$ yields two solutions as $\theta_\mathrm{d}= \theta_\mathrm{imp},~ \pi + \theta_\mathrm{imp}$.

Accordingly, the control objective is to design the interceptor lateral acceleration command such that the interceptor achieves interception at a prescribed impact time and a prescribed impact angle simultaneously.
\begin{problem*}
    For given desired terminal values $t_\mathrm{d}$ and $\theta_\mathrm{d}$, design a nonlinear guidance law (the interceptor's lateral acceleration command) that leads to impact time- and angle-constrained interception of a stationary target, that is, $r(t_\mathrm{d})=0$ and $\theta_\mathrm{imp}(t_\mathrm{d})=\theta_\mathrm{d}$.
\end{problem*}

\section{Impact Time- and Angle-Constrained Guidance Strategy}
From \Cref{def:time}, it follows that the desired time-to-go $t_\mathrm{go}^\mathrm{d} = t_\mathrm{d} - t$, and hence, regulating the impact time is essentially regulating the engagement duration time-to-go. As a representative formulation of the engagement duration, consider the time-to-go estimate against the stationary target, which accounts for larger heading angle errors \cite{9000526}, given by 
\begin{equation} \label{eq: time-to-go expression}
    t_\mathrm{go}  = \dfrac{r}{v_\mathrm{P}} \left( 1 + \dfrac{\sin^2 \sigma_\mathrm{P}}{K}  \right),
\end{equation}
where $K = 4N -2$, with $N \geq3$ denotes the navigation constant in PN guidance. It follows from \eqref{eq: time-to-go expression} that the interception occurs when $t_\mathrm{go}=0$, which corresponds to $r = 0$ (and vice versa). Thus $t_\mathrm{go}=0 \iff r=0$.
\begin{lemma}\label{lemma: dynamics of time-to-go}
    The dynamics of the time-to-go of the interceptor has a relative degree of one with respect to its lateral acceleration, $a_\mathrm{P}$.
\end{lemma}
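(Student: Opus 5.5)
We prove this the same way as \Cref{lemma 2}: differentiate the time-to-go estimate \eqref{eq: time-to-go expression} once along the engagement kinematics and check that $a_\mathrm{P}$ appears explicitly with a coefficient that is not identically zero. Since $v_\mathrm{P}$ and $K$ are constant, the product rule gives
\begin{equation*}
\dot{t}_\mathrm{go} = \dfrac{\dot r}{v_\mathrm{P}}\left(1+\dfrac{\sin^2\sigma_\mathrm{P}}{K}\right) + \dfrac{r}{v_\mathrm{P}}\,\dfrac{2\sin\sigma_\mathrm{P}\cos\sigma_\mathrm{P}}{K}\,\dot\sigma_\mathrm{P}.
\end{equation*}
In the first term we substitute $\dot r = -v_\mathrm{P}\cos\sigma_\mathrm{P}$ from \eqref{eq: r_dot}. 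In the second we substitute $\dot\sigma_\mathrm{P}$ from \eqref{eq: sigma_dot_expression}, which is established in \Cref{lemma 2}.

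Collecting terms, we expect to obtain
\begin{equation*}
\dot{t}_\mathrm{go} = -\cos\sigma_\mathrm{P}\left(1+\dfrac{\sin^2\sigma_\mathrm{P}}{K}\right) + \dfrac{\sin 2\sigma_\mathrm{P}\,\sin\sigma_\mathrm{P}}{K} + \dfrac{r\sin 2\sigma_\mathrm{P}}{K v_\mathrm{P}^2}\,a_\mathrm{P},
\end{equation*}
which is affine in the control:
\begin{equation*}
\dot{t}_\mathrm{go} = f(r,\sigma_\mathrm{P}) + g(r,\sigma_\mathrm{P})\,a_\mathrm{P}, \qquad g = \dfrac{r\sin 2\sigma_\mathrm{P}}{K v_\mathrm{P}^2}.
\end{equation*}
Because $a_\mathrm{P}$ enters the first derivative of $t_\mathrm{go}$ explicitly, the relative degree is one.

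The main obstacle is that the input gain $g$ vanishes when $r=0$ (interception, where the problem is over) and when $\sigma_\mathrm{P}\in\{0,\pm\pi/2,\pi\}$. We will handle this the same way \Cref{lemma 2} is stated, namely generically: for $r>0$ and $\sin 2\sigma_\mathrm{P}\neq 0$, $g\neq 0$, so the relative degree is well defined on that open set. We will add a remark that $\sigma_\mathrm{P}=0$ is a collision-course condition, which the impact-time design must avoid before $t_\mathrm{d}$ anyway. The guidance law can be kept away from these singular lead angles, or $g$ can be treated as a varying gain in the later sliding-mode design.
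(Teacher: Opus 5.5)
Your proposal is correct and follows the paper's proof. You differentiate \eqref{eq: time-to-go expression}, substitute \eqref{eq: r_dot} and \eqref{eq: sigma_dot_expression}, and read off the control-affine form; your drift term reduces to the paper's $-\cos\sigma_\mathrm{P}\left(1-\frac{\sin^2\sigma_\mathrm{P}}{K}\right)$ because $\frac{\sin 2\sigma_\mathrm{P}\sin\sigma_\mathrm{P}}{K}=\frac{2\sin^2\sigma_\mathrm{P}\cos\sigma_\mathrm{P}}{K}$, and your input gain $\frac{r\sin 2\sigma_\mathrm{P}}{Kv_\mathrm{P}^2}$ is identical to the paper's. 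Your caveat that this gain vanishes when $r=0$ or $\sin 2\sigma_\mathrm{P}=0$ is more careful than the lemma's own proof, which defers the issue to \Cref{rem:nonsingularity} and \Cref{thm:leadangle}; note, however, that the gain $1/v_\mathrm{P}$ in \Cref{lemma 2} never vanishes, so that lemma involves no such restriction to model yours on.
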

    \begin{proof}
        Differentiating \eqref{eq: time-to-go expression} with respect to time, one may obtain
\begin{align}\label{eq;tgo_dot12}
    \dot{t}_\mathrm{go}  &= \dfrac{\dot{r}}{v_\mathrm{P}} +  \dfrac{1}{Kv_\mathrm{P}} \left( \dot{r}\sin^2\sigma_\mathrm{P} + 2r \sin \sigma_\mathrm{P} \cos \sigma_\mathrm{P} \dot{\sigma}_\mathrm{P}
    \right). 
\end{align}
Further simplification using the result from \Cref{lemma 2} and substituting \eqref{eq: r_dot} into \eqref{eq;tgo_dot12} yields
\begin{align}\label{eq: time-to-go dynamics}
      \dot{t}_\mathrm{go} 
      &= -\cos \sigma_\mathrm{P} + \dfrac{1}{Kv_\mathrm{P}} \left( 2v_\mathrm{P}\sin^2\sigma_\mathrm{P} \cos \sigma_\mathrm{P} -  v_\mathrm{P}\sin^2\sigma_\mathrm{P} \cos \sigma _\mathrm{P}+ \dfrac{r \sin 2\sigma_\mathrm{P}}{v_\mathrm{P}}a_\mathrm{P}\right), \nonumber  \\
        &= -\cos \sigma_\mathrm{P} \left( 1 - \dfrac{\sin^2 \sigma_\mathrm{P}}{K}\right) + \left( \dfrac{r \sin 2\sigma_\mathrm{P}}{K v^2_\mathrm{P}}\right)a_\mathrm{P}.
       \end{align}
It is evident from \eqref{eq: time-to-go dynamics} that the time-to-go dynamics possess a relative degree of one with respect to the lateral acceleration of the interceptor.
\end{proof}
\Cref{lemma: dynamics of time-to-go} shows that the interceptor’s lateral acceleration directly influences the time-to-go, which is critical for the design of time-constrained guidance law.
\begin{lemma}\label{lemma:losangle}
    The dynamics of the LOS angle has a relative degree of two with respect to the interceptor's lateral acceleration.
\end{lemma}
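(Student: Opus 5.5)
The plan is to follow the template of \Cref{lemma 2} and \Cref{lemma: dynamics of time-to-go}: differentiate the LOS angle repeatedly and stop at the first derivative in which $a_\mathrm{P}$ appears with a coefficient that is generically nonzero.

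\textbf{First derivative.} From \eqref{eq: theta_dot} we have
\begin{equation*}
\dot{\theta} = -\dfrac{v_\mathrm{P}\sin\sigma_\mathrm{P}}{r}.
\end{equation*}
This depends only on the states $r$ and $\sigma_\mathrm{P}$ and contains no $a_\mathrm{P}$. Hence the relative degree is at least two.

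\textbf{Second derivative.} Differentiating once more with the quotient rule, and using $v_\mathrm{P}$ constant, gives
\begin{equation*}
\ddot{\theta} = -\dfrac{v_\mathrm{P}\cos\sigma_\mathrm{P}\,\dot{\sigma}_\mathrm{P}}{r} + \dfrac{v_\mathrm{P}\sin\sigma_\mathrm{P}\,\dot{r}}{r^2}.
\end{equation*}
Next I substitute \eqref{eq: r_dot} for $\dot r$ and \eqref{eq: sigma_dot_expression} from \Cref{lemma 2} for $\dot\sigma_\mathrm{P}$. Collecting terms should give
\begin{equation*}
\ddot{\theta} = -\dfrac{2v_\mathrm{P}^2\sin\sigma_\mathrm{P}\cos\sigma_\mathrm{P}}{r^2} - \dfrac{\cos\sigma_\mathrm{P}}{r}\,a_\mathrm{P} = -\dfrac{v_\mathrm{P}^2\sin 2\sigma_\mathrm{P}}{r^2} - \dfrac{\cos\sigma_\mathrm{P}}{r}\,a_\mathrm{P}.
\end{equation*}
The drift term comes from two equal contributions: one from the $\dot\sigma_\mathrm{P}$ substitution and one from the $\dot r$ substitution.

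\textbf{Main obstacle.} The only delicate step is justifying that the input coefficient $-\cos\sigma_\mathrm{P}/r$ does not vanish. I would argue this on the engagement domain of interest. Before interception, $r>0$. Moreover, $|\sigma_\mathrm{P}|<\pi/2$ throughout a closing engagement, since \eqref{eq: r_dot} requires $\dot r<0$ for the range to decrease to zero, and this forces $\cos\sigma_\mathrm{P}>0$. Hence the coefficient is nonzero, and $a_\mathrm{P}$ first appears in $\ddot\theta$. This establishes relative degree two, in the same sense as the earlier lemmas.
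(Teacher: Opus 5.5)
Your computation is exactly the paper's proof: differentiate \eqref{eq: theta_dot}, substitute \eqref{eq: r_dot} and \eqref{eq: sigma_dot_expression}, and arrive at \eqref{eq:losrate_final} with input coefficient $-\cos\sigma_\mathrm{P}/r$. The paper stops at the observation that $a_\mathrm{P}$ first appears there, so up to that point you and the paper agree.

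\textbf{The gap is in your ``main obstacle'' paragraph.} Its claim that $|\sigma_\mathrm{P}|<\pi/2$ throughout the engagement is false. Interception requires only that $r$ eventually reach zero, not that $\dot r<0$ at every instant. Impact-time control in fact deliberately uses detours. The paper's own scenarios start outside your claimed region. With initial LOS $150^\circ$ and heading $45^\circ$ one has $\sigma_\mathrm{P}(0)=-105^\circ$, so $\dot r(0)=-v_\mathrm{P}\cos\sigma_\mathrm{P}(0)>0$. The alternative time-to-go case is similar, with $\sigma_\mathrm{P}(0)=-100^\circ$. Since $r$ must later decrease to zero, $\dot r$, and hence $\cos\sigma_\mathrm{P}$, must change sign. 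The input coefficient $-\cos\sigma_\mathrm{P}/r$ therefore vanishes at some intermediate instant. The paper acknowledges this: \Cref{thm:leadangle} rules out only an \emph{equilibrium} at $\sigma_\mathrm{P}=\pm\pi/2$, and the discussion after it concedes that trajectories may cross these isolated points in the transient.

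So no argument can give a nonvanishing coefficient along the whole engagement. The correct, and sufficient, conclusion is local: relative degree two holds on the open set $\{r>0,\ \cos\sigma_\mathrm{P}\neq 0\}$. At $\sigma_\mathrm{P}=\pm\pi/2$ the relative degree is not well defined, which is the same kind of caveat as in \Cref{rem:nonsingularity}. Either state the lemma with this restriction, or impose $|\sigma_\mathrm{P}|<\pi/2$ as an explicit assumption, noting that it excludes the large-heading-error engagements the paper treats.
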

\begin{proof}
    Differentiating \eqref{eq: theta_dot} with respect to time yields
    \begin{align}\label{eq:losrate}
    \ddot{\theta} &= - \left(\dfrac{v_\mathrm{P}\cos \sigma_\mathrm{P}\dot{\sigma}_\mathrm{P}}{r} - \dfrac{v_\mathrm{P}\sin \sigma_\mathrm{P}\dot{r}}{r^2}\right).   
    \end{align}
Using the results from \Cref{lemma 2} and substituting \eqref{eq: r_dot} into \eqref{eq:losrate} leads to 
\begin{align}\label{eq:losrate_final}
    \ddot{\theta}   &= - \left(\dfrac{v_\mathrm{P}\cos \sigma_\mathrm{P}}{r}\left(\dfrac{a_\mathrm{P}}{v_\mathrm{P}} + \dfrac{v_\mathrm{P}\sin \sigma_\mathrm{P}}{r}\right) + \dfrac{v^2_\mathrm{P}\sin \sigma_\mathrm{P}\cos \sigma_\mathrm{P}}{r^2} \right) \nonumber \\
    &= -\dfrac{v_\mathrm{P}^2 \sin 2\sigma_\mathrm{P}}{r^2} - \dfrac{\cos \sigma_\mathrm{P}}{r}a_\mathrm{P}.
\end{align}
    From \eqref{eq:losrate_final}, it can be observed that the dynamics of the LOS angle is also influenced by the interceptor's lateral acceleration.
\end{proof}
Consider the sub-sliding surface representing the error between the estimated time-to-go and the desired time-to-go, given as
\begin{equation} \label{eq: time-to-go error}
    s_\mathrm{t}    = t_\mathrm{go} - t_\mathrm{go}^\mathrm{d} = t_\mathrm{go} - (t_\mathrm{d} -t),
\end{equation}
where $\dot{t}_\mathrm{go}^\mathrm{d} = -1$. 
\begin{lemma}\label{lem:steq}
    Consider the sub-sliding surface \eqref{eq: time-to-go error}. The equivalent control associated with the manifold $s_\mathrm{t}=0$ is given by
    \begin{align}
        u_\mathrm{t} = - \dfrac{Kv_\mathrm{P}^2 \tan \frac{\sigma_\mathrm{P}}{2}}{2 r \cos \sigma_\mathrm{P}} - \dfrac{v_\mathrm{P}^2 \sin \sigma_\mathrm{P}}{2 r} .\label{eq:ueq1}
    \end{align}
    Under this control, the manifold $s_\mathrm{t}=0$ is rendered invariant, that is, $s_\mathrm{t}=0\implies \dot{s}_\mathrm{t}=0$.
\end{lemma}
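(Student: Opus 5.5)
The plan is to differentiate the sub-sliding surface \eqref{eq: time-to-go error} along the engagement kinematics, set the derivative to zero, and solve for $a_\mathrm{P}$. Since $\dot{t}_\mathrm{go}^\mathrm{d}=-1$, we have $\dot{s}_\mathrm{t}=\dot{t}_\mathrm{go}+1$. Substituting the time-to-go dynamics \eqref{eq: time-to-go dynamics} from \Cref{lemma: dynamics of time-to-go} gives
\begin{equation*}
\dot{s}_\mathrm{t} = 1-\cos\sigma_\mathrm{P}\left(1-\dfrac{\sin^2\sigma_\mathrm{P}}{K}\right)+\dfrac{r\sin 2\sigma_\mathrm{P}}{K v_\mathrm{P}^2}\,a_\mathrm{P}.
\end{equation*}
This is affine in $a_\mathrm{P}$, consistent with the relative-degree-one property. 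The equivalent control $u_\mathrm{t}$ is defined as the value of $a_\mathrm{P}$ that makes this expression vanish. It exists whenever the control coefficient $r\sin2\sigma_\mathrm{P}/(Kv_\mathrm{P}^2)$ is nonzero.

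Solving gives
\begin{equation*}
u_\mathrm{t}=\dfrac{Kv_\mathrm{P}^2}{2r\sin\sigma_\mathrm{P}\cos\sigma_\mathrm{P}}\left[\cos\sigma_\mathrm{P}-1-\dfrac{\sin^2\sigma_\mathrm{P}\cos\sigma_\mathrm{P}}{K}\right].
\end{equation*}
I would then split this into two terms. The term involving $\sin^2\sigma_\mathrm{P}\cos\sigma_\mathrm{P}/K$ cancels directly to $-v_\mathrm{P}^2\sin\sigma_\mathrm{P}/(2r)$. For the other term, I would use the half-angle identity $(1-\cos\sigma_\mathrm{P})/\sin\sigma_\mathrm{P}=\tan(\sigma_\mathrm{P}/2)$, which turns it into $-Kv_\mathrm{P}^2\tan(\sigma_\mathrm{P}/2)/(2r\cos\sigma_\mathrm{P})$. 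Together these give exactly \eqref{eq:ueq1}.

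The step I expect to need the most care is the domain. The raw solve divides by $\sin 2\sigma_\mathrm{P}$, but after the half-angle rewriting the $\sin\sigma_\mathrm{P}=0$ singularity disappears. At $\sigma_\mathrm{P}=0$ we get $1-\cos\sigma_\mathrm{P}=0$, so \eqref{eq:ueq1} is the continuous extension there. I would remark that \eqref{eq:ueq1} is well defined for $r>0$ and $\cos\sigma_\mathrm{P}\neq0$, that is, $|\sigma_\mathrm{P}|<\pi/2$ along an approaching trajectory.

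For invariance, I would substitute $a_\mathrm{P}=u_\mathrm{t}$ back into the expression for $\dot{s}_\mathrm{t}$; by construction this gives $\dot{s}_\mathrm{t}\equiv0$. Hence $s_\mathrm{t}$ is constant along solutions, so $s_\mathrm{t}(t_0)=0$ implies $s_\mathrm{t}(t)=0$ for all subsequent $t$, which is the claimed invariance of the manifold $s_\mathrm{t}=0$.
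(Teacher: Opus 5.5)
Your proposal is correct and follows the paper's proof essentially step for step: you differentiate $s_\mathrm{t}$, substitute \eqref{eq: time-to-go dynamics}, impose $\dot{s}_\mathrm{t}=0$ and solve for $a_\mathrm{P}$, and your identity $(1-\cos\sigma_\mathrm{P})/\sin\sigma_\mathrm{P}=\tan(\sigma_\mathrm{P}/2)$ is equivalent to the paper's use of $\cos\sigma_\mathrm{P}=1-2\sin^2(\sigma_\mathrm{P}/2)$. Your observation that \eqref{eq:ueq1} extends continuously (to zero) at $\sigma_\mathrm{P}=0$, where the drift term also vanishes so that $\dot{s}_\mathrm{t}=0$ for any input, is a small refinement over \Cref{rem:nonsingularity}; note, however, that $\cos\sigma_\mathrm{P}\neq0$ alone does not exclude the $\tan(\sigma_\mathrm{P}/2)$ singularity at $\sigma_\mathrm{P}=\pi$, which your stronger restriction $|\sigma_\mathrm{P}|<\pi/2$ does.
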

\begin{proof}
    On differentiating \eqref{eq: time-to-go error} with respect to time, we obtain
\begin{equation} \label{eq: error dot 1}
    \dot{s}_\mathrm{t}   = \dot{t}_\mathrm{go} + 1.
\end{equation}
Using the results from \Cref{lemma: dynamics of time-to-go}, we obtain the dynamics of the sub-sliding surface $s_\mathrm{t}$ as
\begin{align} \label{eq: error rate for time-to-go}
    \dot{s}_\mathrm{t}  = 1 - \cos \sigma_\mathrm{P} \left( 1 - \dfrac{\sin^2\sigma_\mathrm{P}}{K}
    \right) +  \dfrac{r\sin2\sigma_\mathrm{P}}{Kv_\mathrm{P}^2} a_\mathrm{P}.
\end{align}
By definition, the equivalent control is the continuous control input required to maintain motion on the sliding manifold once the trajectory has reached it. Hence, on the manifold $s_\mathrm{t}=0$, one imposes the invariance condition $\dot{s}_\mathrm{t}=0$. Using the control-affine dynamics of $s_\mathrm{t}$ in \eqref{eq: error rate for time-to-go}, this yields
\begin{align} 
    0  = 1 - \cos \sigma_\mathrm{P} \left( 1 - \dfrac{\sin^2\sigma_\mathrm{P}}{K}
    \right) +  \dfrac{r\sin2\sigma_\mathrm{P}}{Kv_\mathrm{P}^2} a_\mathrm{P},
\end{align}
which leads to the expression in \eqref{eq:ueq1} after simplifying using $\cos \sigma_\mathrm{P}= 1 - 2 \sin^2\dfrac{\sigma_\mathrm{P}}{2}$.
\end{proof}
\begin{remark}
    The quantity \eqref{eq:ueq1} represents the nominal lateral acceleration command required to exactly preserve the time-channel manifold $s_\mathrm{t}$ once it is reached. This essentially means that it characterizes the ideal control action associated with the impact time objective alone.
\end{remark}
In the present problem, however, the same input $a_\mathrm{P}$ must also regulate the impact angle channel. To enforce the impact angle constraints, we first define the impact angle error as 
\begin{equation}\label{eq: Error variable 2}
    e_{\theta}       = \theta - \theta_\mathrm{d}. 
\end{equation}
 Differentiating $e_\theta$ with respect to time yields
\begin{equation}\label{eq: error_2 dynamics}
    \dot{e}_{\theta}  = \dot{\theta} = -\dfrac{v_\mathrm{P}\sin \sigma_\mathrm{P}}{r}.
\end{equation}
The second time derivative of $e_\theta$, together with the results from \Cref{lemma 2}, yields
\begin{equation}
    \ddot{e}_{\theta}  =- \dfrac{v_\mathrm{P}^2 \sin 2\sigma_\mathrm{P}}{r^2} -\dfrac{ \cos \sigma_\mathrm{P}}{r}a_\mathrm{p}. \label{eq: Error variable 2 double derivative}
\end{equation}
Now, consider the second sub-sliding surface corresponding to the impact angle error, given as
\begin{equation}\label{eq: Sliding manifold for LOS case}
    s_{\theta}     = {e}_{\theta} + m\dot{e}^{c_1/c_2}_{\theta},  
\end{equation}
where $m>0$ is a design parameter governing the rate of convergence of $e_\theta$, whereas $c_1$ and $c_2$ are odd positive integers such that  $c_1>c_2$ and $1<c_1/c_2<2$. Such a sub-sliding surface facilitates a finite-time error convergence once sliding mode is enforced. 
\begin{lemma}\label{lem:sthetaeq}
    Consider the sub-sliding surface \eqref{eq: Sliding manifold for LOS case}. The equivalent control associated with the manifold $s_\theta=0$ is given by
    \begin{align}
        u_\theta = \dfrac{rc_2}{m c_1 \cos \sigma_\mathrm{P}} \left[ \left (- \dfrac{v_\mathrm{P} \sin \sigma_\mathrm{P}}{r} \right)^{(2 - c_1/c_2)} - \dfrac{m c_1 v^2_\mathrm{P} \sin 2 \sigma_\mathrm{P}}{r^2 c_2} \right].\label{eq:ueq2}
    \end{align}
    Under this control, the manifold $s_\theta=0$ is rendered invariant, that is, $s_\theta=0\implies \dot{s}_\theta=0$.
\end{lemma}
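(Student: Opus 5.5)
The plan mirrors the proof of \Cref{lem:sthetaeq}'s time-channel counterpart, \Cref{lem:steq}: compute $\dot{s}_\theta$, substitute the known expressions for $\dot{e}_\theta$ and $\ddot{e}_\theta$, and solve the invariance condition $\dot{s}_\theta=0$ for $a_\mathrm{P}$.

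\textbf{Step 1: differentiate the surface.} Differentiating \eqref{eq: Sliding manifold for LOS case} with the chain rule gives
\begin{equation*}
\dot{s}_\theta = \dot{e}_\theta + m\,\frac{c_1}{c_2}\,\dot{e}_\theta^{\,c_1/c_2-1}\,\ddot{e}_\theta .
\end{equation*}
Since $c_1$ and $c_2$ are odd positive integers, $\dot{e}_\theta^{\,c_1/c_2}$ is a well-defined real odd power. Because $0<c_1/c_2-1<1$, its derivative exists wherever $\dot{e}_\theta\neq 0$.

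\textbf{Step 2: impose invariance and isolate $\ddot{e}_\theta$.} By definition, the equivalent control is the input that keeps the trajectory on $s_\theta=0$, so we set $\dot{s}_\theta=0$. For $\dot{e}_\theta\neq0$, divide by $m\frac{c_1}{c_2}\dot{e}_\theta^{\,c_1/c_2-1}$. Using $\dot{e}_\theta/\dot{e}_\theta^{\,c_1/c_2-1}=\dot{e}_\theta^{\,2-c_1/c_2}$, this gives
\begin{equation*}
\ddot{e}_\theta = -\frac{c_2}{m c_1}\,\dot{e}_\theta^{\,2-c_1/c_2}.
\end{equation*}

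\textbf{Step 3: substitute the kinematics and solve for $a_\mathrm{P}$.} Equate this to the control-affine expression \eqref{eq: Error variable 2 double derivative}, which comes from \Cref{lemma:losangle}, and use $\dot{e}_\theta=-v_\mathrm{P}\sin\sigma_\mathrm{P}/r$ from \eqref{eq: error_2 dynamics}. Solving the resulting linear equation in $a_\mathrm{P}$ gives
\begin{equation*}
a_\mathrm{P} = \frac{r}{\cos\sigma_\mathrm{P}}\left[\frac{c_2}{m c_1}\left(-\frac{v_\mathrm{P}\sin\sigma_\mathrm{P}}{r}\right)^{2-c_1/c_2} - \frac{v_\mathrm{P}^2\sin 2\sigma_\mathrm{P}}{r^2}\right].
\end{equation*}
Factoring out $c_2/(m c_1)$ recovers \eqref{eq:ueq2}. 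Applying $u_\theta$ then makes $\dot{s}_\theta\equiv 0$, so $s_\theta=0$ is invariant.

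\textbf{Main obstacle: the singular sets.} The only delicate point is the division step and where the formula is valid. It requires:
\begin{itemize}
\item $\cos\sigma_\mathrm{P}\neq0$, which is needed for the control to appear at all in \eqref{eq: Error variable 2 double derivative};
\item $r\neq0$, i.e., before interception;
\item $\dot{e}_\theta\neq0$ when dividing by $\dot{e}_\theta^{\,c_1/c_2-1}$.
\end{itemize}
On the set $\dot{e}_\theta=0$, the coefficient of $\ddot{e}_\theta$ in $\dot{s}_\theta$ vanishes. There, $\dot{s}_\theta=\dot{e}_\theta=0$ holds automatically, and \eqref{eq:ueq2} remains finite because $2-c_1/c_2>0$. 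Invariance is therefore not violated on that set. I would state these restrictions explicitly rather than attempt to remove them.
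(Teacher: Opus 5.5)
Your proposal is correct and follows the same route as the paper. Both differentiate $s_\theta$, substitute $\dot e_\theta$ and $\ddot e_\theta$, impose $\dot s_\theta=0$, and solve the resulting linear equation for $a_\mathrm{P}$ to recover \eqref{eq:ueq2}. Your explicit accounting of the conditions $\cos\sigma_\mathrm{P}\neq0$, $r\neq0$ and $\dot e_\theta\neq0$ is a bit more careful than the paper's proof, which leaves these singular configurations to \Cref{rem:nonsingularity} and \Cref{thm:leadangle}.
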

\begin{proof}
    On differentiating \eqref{eq: Sliding manifold for LOS case} with respect to time, one may obtain
\begin{align}\label{eq: s_theta_dot1}
    \dot{s}_\theta = \dot{e}_\theta + \left(\dfrac{mc_1}{c_2}\right)\dot{e}_\theta^{\left(c_1/c_2-1\right)} \ddot{e}_\theta.
\end{align}
Substituting \eqref{eq: error_2 dynamics} and \eqref{eq: Error variable 2 double derivative} in \eqref{eq: s_theta_dot1} yields
\begin{align}\label{eq: Dynamics of S_theta}
    \dot{s}_\theta  =& -\dfrac{v_\mathrm{P}\sin \sigma_\mathrm{P}}{r} + \dfrac{mc_1}{c_2}\left(-\dfrac{v_\mathrm{P}\sin \sigma_\mathrm{P}}{r}\right)^{(c_1/c_2-1)}\left(- \dfrac{v_\mathrm{P}^2 \sin 2\sigma_\mathrm{P}}{r^2} -\dfrac{ \cos \sigma_\mathrm{P}}{r}a_\mathrm{p}\right) \nonumber \\
    =& - \left(\dfrac{v_\mathrm{P}\sin \sigma_\mathrm{P}}{r} + \dfrac{mc_1v^2_\mathrm{P}\sin2\sigma_\mathrm{P}}{r^2c_2}\left(-\dfrac{v_\mathrm{P}\sin \sigma_\mathrm{P}}{r} \right)^{(c_1/c_2-1)}\right) - \dfrac{mc_1 \cos \sigma_\mathrm{P}}{rc_2}\left(-\dfrac{v_\mathrm{P}\sin \sigma_\mathrm{P}}{r} \right)^{(c_1/c_2-1)}a_\mathrm{P}.
\end{align}
On the manifold $s_\theta=0$, one imposes the invariance condition $\dot{s}_\theta=0$. Using the control-affine dynamics of $s_\theta$ in \eqref{eq: Dynamics of S_theta}, this yields
\begin{align} 
    0  = - \left(\dfrac{v_\mathrm{P}\sin \sigma_\mathrm{P}}{r} + \dfrac{mc_1v^2_\mathrm{P}\sin2\sigma_\mathrm{P}}{r^2c_2}\left(-\dfrac{v_\mathrm{P}\sin \sigma_\mathrm{P}}{r} \right)^{(c_1/c_2-1)}\right) - \dfrac{mc_1 \cos \sigma_\mathrm{P}}{rc_2}\left(-\dfrac{v_\mathrm{P}\sin \sigma_\mathrm{P}}{r} \right)^{(c_1/c_2-1)}a_\mathrm{P},
\end{align}
which leads to the expression in \eqref{eq:ueq2} after simplifying using $\sin 2 \sigma_\mathrm{P}= 4 \sin \dfrac{\sigma_\mathrm{P}}{2}\cos \dfrac{\sigma_\mathrm{P}}{2}\cos \sigma_\mathrm{P}$.
\end{proof}
\begin{remark}
    From \eqref{eq: error rate for time-to-go} and \eqref{eq: Dynamics of S_theta}, it is evident that the dynamics of both sub-sliding surfaces depend on the interceptor's lead angle. Moreover, since the interceptor has access only to a single control input, namely the lateral acceleration $a_\mathrm{P}$, the resulting guidance problem, as seen from \eqref{eq: error rate for time-to-go} and \eqref{eq: Dynamics of S_theta}, is underactuated in nature.
\end{remark}
\begin{remark}\label{rem:nonsingularity}
    The validity of the equivalent-control expressions in \Cref{lem:steq,lem:sthetaeq} requires that the corresponding input gain remain nonzero along the engagement. This essentially means that the equivalent-control construction is well defined only when $\sigma_\mathrm{P}$ avoids the singular configurations associated with $\sigma_\mathrm{P}\in\{0,\pi/2,\pi\}$. The treatment of such isolated cases and the conditions under which they can be excluded or systematically handled will be addressed later to demonstrate that the proposed design is nonsingular throughout.
\end{remark}
The underactuated nature of the problem constitutes difficulty in the guidance design since improving performance with respect to one objective may directly influence the evolution of the other. In other words, the same lateral acceleration designed to stabilize $s_\mathrm{t}$ alone does not guarantee stabilization of $s_\theta$ and vice versa. The challenge is to design a single unified lateral acceleration that can drive both $s_\mathrm{t}$ and $s_\theta$ to $0$ simultaneously. To satisfy the terminal constraints on both time and angle, we consider a composite sliding surface as
\begin{equation} \label{eq: Sliding manifold net}
    s    = \lambda s_\mathrm{t} + s_{\theta},
\end{equation}
where $\lambda$ is a time-varying design parameter. 
\begin{lemma}\label{lem:compositesdot}
    The dynamics of the composite sliding surface possesses a relative degree of one with respect to the interceptor's lateral acceleration.
\end{lemma}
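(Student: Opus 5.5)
Differentiating the composite surface \eqref{eq: Sliding manifold net} along the engagement kinematics gives
\begin{align*}
\dot{s} = \dot{\lambda} s_\mathrm{t} + \lambda \dot{s}_\mathrm{t} + \dot{s}_\theta .
\end{align*}
Since $\lambda$ is a time-varying design parameter, $\dot{\lambda}$ is treated as a known signal that does not depend on $a_\mathrm{P}$. I would substitute the control-affine expression \eqref{eq: error rate for time-to-go} for $\dot{s}_\mathrm{t}$, obtained from \Cref{lemma: dynamics of time-to-go}. I would also substitute the control-affine expression \eqref{eq: Dynamics of S_theta} for $\dot{s}_\theta$, obtained from \Cref{lemma 2,lemma:losangle}. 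Collecting terms then puts $\dot s$ in the form $\dot{s}=F(r,\sigma_\mathrm{P},t)+G(r,\sigma_\mathrm{P},t)\,a_\mathrm{P}$. The drift term is
\begin{align*}
F = \dot\lambda s_\mathrm{t} + \lambda\left[1-\cos\sigma_\mathrm{P}\left(1-\dfrac{\sin^2\sigma_\mathrm{P}}{K}\right)\right] - \dfrac{v_\mathrm{P}\sin\sigma_\mathrm{P}}{r} - \dfrac{mc_1 v_\mathrm{P}^2\sin 2\sigma_\mathrm{P}}{r^2c_2}\left(-\dfrac{v_\mathrm{P}\sin\sigma_\mathrm{P}}{r}\right)^{(c_1/c_2-1)},
\end{align*}
and the input gain is
\begin{align*}
G = \lambda\dfrac{r\sin 2\sigma_\mathrm{P}}{Kv_\mathrm{P}^2} - \dfrac{mc_1\cos\sigma_\mathrm{P}}{rc_2}\left(-\dfrac{v_\mathrm{P}\sin\sigma_\mathrm{P}}{r}\right)^{(c_1/c_2-1)} .
\end{align*}

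Relative degree one then amounts to showing that $G\neq 0$ along the engagement, away from the isolated configurations noted in \Cref{rem:nonsingularity}. This is the main obstacle, because the two contributions to $G$ could cancel for particular values of $\lambda$. I would handle it by factoring out $\cos\sigma_\mathrm{P}$, using $\sin 2\sigma_\mathrm{P} = 2\sin\sigma_\mathrm{P}\cos\sigma_\mathrm{P}$. Since $c_1/c_2-1=(c_1-c_2)/c_2$ with $c_1-c_2$ even and $c_2$ odd, the quantity $\left(-v_\mathrm{P}\sin\sigma_\mathrm{P}/r\right)^{(c_1/c_2-1)}$ is nonnegative and vanishes only when $\sin\sigma_\mathrm{P}=0$. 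The bracket multiplying $\cos\sigma_\mathrm{P}$ is therefore
\begin{align*}
\dfrac{2\lambda r\sin\sigma_\mathrm{P}}{Kv_\mathrm{P}^2}-\dfrac{mc_1}{rc_2}\left|\dfrac{v_\mathrm{P}\sin\sigma_\mathrm{P}}{r}\right|^{(c_1-c_2)/c_2}.
\end{align*}
This bracket is nonzero whenever $\lambda\,\sin\sigma_\mathrm{P}<0$, and it is also nonzero whenever $\lambda$ avoids the single value that makes the two terms equal. I would state that $\lambda$ is to be chosen in this way, for instance by letting $\lambda$ take the sign opposite to $\sin\sigma_\mathrm{P}$, or by bounding it away from the critical value. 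This makes $G\neq0$ for $\sigma_\mathrm{P}\notin\{0,\pi/2,\pi\}$.

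With $G\neq 0$ established, $a_\mathrm{P}$ appears explicitly and nontrivially in the first derivative of $s$, which is exactly relative degree one. Finally, I would note that this structure lets the composite equivalent control be written as $-F/G$, and that $G$ can be re-expressed as $\lambda g_\mathrm{t}+g_\theta$, a weighted combination of the sub-surface gains behind $u_\mathrm{t}$ and $u_\theta$ in \Cref{lem:steq,lem:sthetaeq}. That form is presumably what the later design uses. The singular cases themselves I would defer to the paper's subsequent treatment, as \Cref{rem:nonsingularity} indicates.
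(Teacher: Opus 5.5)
The differentiation of $s=\lambda s_\mathrm{t}+s_\theta$ is exactly the paper's proof. So is the substitution of \eqref{eq: error rate for time-to-go} and \eqref{eq: Dynamics of S_theta} and the collection into $\dot s=F+G\,a_\mathrm{P}$; your $F$ and $G$ match \eqref{eq:ssdott}. The paper stops there and reads off relative degree one from the explicit appearance of $a_\mathrm{P}$, without checking $G\neq 0$. Your premise that $\dot\lambda$ carries no $a_\mathrm{P}$ is consistent with the later adaptation law \eqref{eq: adaptive law}, whose right-hand side depends only on $(r,\sigma_\mathrm{P},s_\mathrm{t},\lambda)$. In the sense the paper intends, the first half of your proposal proves the lemma.

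The nondegeneracy step you add goes beyond the paper, and it fails as you close it. Your algebra is right: after factoring out $\cos\sigma_\mathrm{P}$, the bracket vanishes (for $\sin\sigma_\mathrm{P}\neq0$) only at $\lambda=\lambda^\ast$, where $\lambda^\ast=\dfrac{Kmc_1v_\mathrm{P}^2}{2c_2r^2\sin\sigma_\mathrm{P}}\left|\dfrac{v_\mathrm{P}\sin\sigma_\mathrm{P}}{r}\right|^{(c_1-c_2)/c_2}$. The problem is that $\lambda$ is not at your disposal.

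\begin{itemize}
\item Under \eqref{eq: adaptive law}, $\lambda$ is an additional state. Its evolution enforces neither $\lambda\sin\sigma_\mathrm{P}\le0$ nor $\lambda\neq\lambda^\ast$.
\item If you treat $\lambda$ as a free signal, an open-loop $\lambda(t)$ cannot be guaranteed to avoid $\lambda^\ast(r(t),\sigma_\mathrm{P}(t))$. Nor can it be guaranteed to have sign opposite to $\sin\sigma_\mathrm{P}(t)$, since both are closed-loop quantities.
\item If you enforce your condition by feedback, $\lambda=\Lambda(r,\sigma_\mathrm{P},t)$, then $\dot\lambda$ picks up $\partial_{\sigma_\mathrm{P}}\Lambda\,\dot\sigma_\mathrm{P}$, which by \eqref{eq: sigma_dot_expression} contains $a_\mathrm{P}/v_\mathrm{P}$. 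This contradicts your own premise that $\dot\lambda$ is independent of $a_\mathrm{P}$. It also changes the input gain to $G+s_\mathrm{t}\,\partial_{\sigma_\mathrm{P}}\Lambda/v_\mathrm{P}$, so the nonvanishing argument would have to be redone. A discontinuous sign choice would additionally make $s$ jump whenever $\sigma_\mathrm{P}$ crosses zero with $s_\mathrm{t}\neq0$.
\end{itemize}

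The defensible conclusion is that $G\neq0$ is a standing nondegeneracy hypothesis along trajectories, not something you can establish here. It holds away from $\sigma_\mathrm{P}\in\{0,\pm\pi/2,\pi\}$, where $G=0$ for every $\lambda$ (note you omitted $-\pi/2$), and away from the set $\lambda=\lambda^\ast$. The paper carries the same hypothesis implicitly: for $\chi_1=\chi_2=0$, the denominators of $\bar{\psi}_1,\bar{\psi}_2$ in \eqref{eq:etabar}--\eqref{eq:kappabar} are exactly your $G$.
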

\begin{proof}
Differentiating \eqref{eq: Sliding manifold net} with respect to time yields
\begin{align}\label{eq:sdot}
\dot{s}  = \dot{\lambda}s_\mathrm{t} + \lambda \dot{s}_\mathrm{t} + \dot{s}_\theta. 
\end{align}
On substituting \eqref{eq: error rate for time-to-go} and \eqref{eq: Dynamics of S_theta} in \eqref{eq:sdot}, one may obtain
\begin{align}\label{eq:ssdott}
    \dot{s}  &= \dot{\lambda}s_\mathrm{t} + \lambda \left(1 - \cos \sigma_\mathrm{P} \left( 1 - \dfrac{\sin^2\sigma_\mathrm{P}}{K}
    \right) +  \dfrac{r\sin2\sigma_\mathrm{P}}{Kv_\mathrm{P}^2} a_\mathrm{P} \right)  - \left(\dfrac{v_\mathrm{P}\sin \sigma_\mathrm{P}}{r} + \dfrac{mc_1v^2_\mathrm{P}\sin2\sigma_\mathrm{P}}{r^2c_2}\left(-\dfrac{v_\mathrm{P}\sin \sigma_\mathrm{P}}{r} \right)^{(c_1/c_2-1)}\right)  \nonumber \\
    &- \dfrac{mc_1 \cos \sigma_\mathrm{P}}{rc_2}\left(-\dfrac{v_\mathrm{P}\sin \sigma_\mathrm{P}}{r} \right)^{(c_1/c_2-1)}a_\mathrm{P} ,
\end{align}
which indicates that the dynamics of the composite sliding surface is related to the lateral acceleration command with a relative degree of one. 
\end{proof}
We now present the proposed lateral acceleration command $a_\mathrm{P}$ in the next theorem.
\begin{theorem} \label{theorem 1}
    Consider the interceptor-target engagement kinematics whose relative motion is governed by \eqref{eq:enggeo}, the time-to-go formulation in \eqref{eq: time-to-go expression}, and the impact angle error \eqref{eq: Error variable 2}. The proposed interceptor's lateral acceleration command,
    \begin{align}\label{eq:lateral  acceleration}
        a_\mathrm{P}  =&  \dfrac{rc_2}{m c_1 \cos \sigma_\mathrm{P}} \left[ \left (- \dfrac{v_\mathrm{P} \sin \sigma_\mathrm{P}}{r} \right)^{(2 - c_1/c_2)} - \dfrac{m c_1 v^2_\mathrm{P} \sin 2 \sigma_\mathrm{P}}{r^2 c_2} \right]  - \dfrac{Kv_\mathrm{P}^2 \tan \frac{\sigma_\mathrm{P}}{2}}{2 r \cos \sigma_\mathrm{P}} - \dfrac{v_\mathrm{P}^2 \sin \sigma_\mathrm{P}}{2 r} -\bar{\psi}_1~ \sign(s) - \bar{\psi}_2 s, 
    \end{align}
   where 
   \begin{align}
       \bar{\psi}_1   =& \dfrac{ \psi_1}{\chi_2 + \left(-\frac{mc_1 \cos \sigma_\mathrm{P}}{rc_2}\right)\left(-\frac{v_\mathrm{P}\sin \sigma_\mathrm{P}}{r} \right)^{(c_1/c_2 -1)} + \lambda \left(\chi_1 + \left(\frac{r \sin 2\sigma_\mathrm{P}}{Kv_\mathrm{P}^2} \right) \right)} ,\label{eq:etabar} \\
    \bar{\psi}_2=& \dfrac{\psi_2}{\chi_2 + \left(-\frac{mc_1 \cos \sigma_\mathrm{P}}{rc_2}\right)\left(-\frac{v_\mathrm{P}\sin \sigma_\mathrm{P}}{r} \right)^{(c_1/c_2 -1)} + \lambda \left(\chi_1 + \left(\frac{r \sin 2\sigma_\mathrm{P}}{Kv_\mathrm{P}^2} \right) \right)}.\label{eq:kappabar}
   \end{align}
   are adaptive gains such that the design parameters satisfy $\psi_1,\psi_2>0$, $\chi_1,\chi_2\ge0$, ensures that the target is intercepted at the prescribed impact time and impact angle.
 \end{theorem}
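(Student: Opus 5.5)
The argument runs in three stages: reach the composite manifold $s=0$ in finite time, show that on $s=0$ both sub-surfaces are driven to zero, and then read off the terminal conditions.

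\emph{Reaching phase.} Write \eqref{eq:ssdott} compactly as $\dot s = F + G a_\mathrm{P}$, where
\[
G=\lambda\frac{r\sin2\sigma_\mathrm{P}}{Kv_\mathrm{P}^2}-\frac{mc_1\cos\sigma_\mathrm{P}}{rc_2}\left(-\frac{v_\mathrm{P}\sin\sigma_\mathrm{P}}{r}\right)^{(c_1/c_2-1)}
\]
is the input gain from \Cref{lem:compositesdot}. The command \eqref{eq:lateral  acceleration} has the form $u_\theta+u_\mathrm{t}-\bar\psi_1\sign(s)-\bar\psi_2 s$, with $u_\mathrm{t}$ and $u_\theta$ the equivalent controls of \Cref{lem:steq,lem:sthetaeq}. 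Substituting it gives
\[
\dot s=\dot\lambda s_\mathrm{t}+\lambda\frac{r\sin2\sigma_\mathrm{P}}{Kv_\mathrm{P}^2}u_\theta-\frac{mc_1\cos\sigma_\mathrm{P}}{rc_2}\left(-\frac{v_\mathrm{P}\sin\sigma_\mathrm{P}}{r}\right)^{(c_1/c_2-1)}u_\mathrm{t}-G\big(\bar\psi_1\sign(s)+\bar\psi_2 s\big).
\]
Each equivalent control cancels its own channel's drift, so only the cross terms and $\dot\lambda s_\mathrm{t}$ remain. The time-varying parameter $\lambda$ is at our disposal. I would choose $\lambda$ (and hence $\dot\lambda$) so that these residual terms are absorbed, which amounts to choosing $\lambda$ such that the cross terms plus $\dot\lambda s_\mathrm{t}$ vanish or are dominated. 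With the denominator $D=\chi_2+G_\theta+\lambda(\chi_1+G_\mathrm{t})$ of \eqref{eq:etabar}--\eqref{eq:kappabar}, one has $G\bar\psi_i=\psi_i\,G/D$.

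Take $V=\tfrac12 s^2$. Then
\[
\dot V\le -\frac{G}{D}\left(\psi_1|s|+\psi_2 s^2\right),
\]
provided the ratio $G/D$ stays bounded below by a positive constant. The nonnegative constants $\chi_1,\chi_2$ are introduced precisely so that $D$ is nonzero and has the sign of $G$. This yields $\dot V\le-\kappa\sqrt{2V}$, hence $s=0$ in finite time.

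\emph{Sliding phase.} On $s=0$ we have $s_\theta=-\lambda s_\mathrm{t}$. Since the dynamics of $s_\mathrm{t}$ and $s_\theta$ under the common input are coupled through $\lambda$, I would design $\lambda$ (for example $\lambda\to0$ proportionally to a function of $s_\mathrm{t}$, or a switching $\lambda$ that first prioritizes the time channel) so that a second Lyapunov function $V_2=\tfrac12 s_\mathrm{t}^2$ decreases on the composite manifold. With $s_\mathrm{t}\to0$, the constraint $s=0$ then forces $s_\theta\to0$. Next, on $s_\theta=0$ the terminal sliding relation $\dot e_\theta=-(e_\theta/m)^{c_2/c_1}$ drives $e_\theta$ to zero in finite time $m^{c_2/c_1}|e_\theta(0)|^{1-c_2/c_1}/(1-c_2/c_1)$, so $\theta\to\theta_\mathrm{d}$ and $\dot\theta\to0$.

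\emph{Terminal conditions.} With $s_\mathrm{t}=0$ we get $t_\mathrm{go}=t_\mathrm{d}-t$, so $t_\mathrm{go}=0$ at $t=t_\mathrm{d}$ and therefore $r(t_\mathrm{d})=0$, because $t_\mathrm{go}=0\iff r=0$. Since $\dot\theta=0$ implies $\sin\sigma_\mathrm{P}=0$, we obtain $\gamma_\mathrm{P}=\theta=\theta_\mathrm{d}$, and by \eqref{eq:thetadotdef} this gives $\theta_\mathrm{imp}=\theta_\mathrm{d}$.

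\emph{Main obstacle.} The hardest part is the underactuation combined with the singularities of \Cref{rem:nonsingularity}. One must show that $G/D>0$ uniformly and that the cross terms are dominated. Along $s_\theta=0$, $\sigma_\mathrm{P}\to0$, so both the input gains and $\sin\sigma_\mathrm{P}$ vanish. I would handle this by showing that $\sigma_\mathrm{P}=0$ is not attractive before the time error is zero, using \eqref{eq: sigma_dot_expression} together with the choice of $\chi_i$. I would also use $2-c_1/c_2>0$ to ensure that $u_\theta$ remains bounded as $\sigma_\mathrm{P}\to0$.
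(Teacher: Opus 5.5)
Your architecture matches the paper's: the split $a_\mathrm{P}=u_\mathrm{t}+u_\theta+u_\mathrm{c}$, the function $V=s^2/2$, and the normalized gains $\bar\psi_i$. However, the ingredient that makes the reaching argument work is left as a placeholder, and your justification of the gain ratio is wrong.

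Write $G_\mathrm{t}=r\sin2\sigma_\mathrm{P}/(Kv_\mathrm{P}^2)$ and $G_\theta=-\frac{mc_1\cos\sigma_\mathrm{P}}{rc_2}\big(-\frac{v_\mathrm{P}\sin\sigma_\mathrm{P}}{r}\big)^{c_1/c_2-1}$, so $G=\lambda G_\mathrm{t}+G_\theta$. In the paper $\lambda$ is not a free parameter. It is generated by the adaptive law \eqref{eq: adaptive law}, built so that $\dot\lambda s_\mathrm{t}$ cancels the cross terms $\lambda G_\mathrm{t}u_\theta+G_\theta u_\mathrm{t}$; see \eqref{eq: lambda_dot}. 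You only say you ``would choose'' such a $\lambda$. No $\dot\lambda$ can absorb these terms at instants where $s_\mathrm{t}=0\neq s$, and ``domination'' by the switching term is never quantified.

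The role you give $\chi_1,\chi_2$ also fails. Since $D=G+\chi_2+\lambda\chi_1$, nonnegative $\chi_i$ do not make $D$ share the sign of $G$: for $G<0$ the shift can make $D$ vanish or flip sign. Moreover, $G$ itself vanishes at $\sigma_\mathrm{P}\in\{0,\pm\pi/2\}$, in particular along your sliding phase as $\sigma_\mathrm{P}\to0$. So a uniform positive lower bound on $G/D$, and hence $\dot V\le-\kappa\sqrt{2V}$, is not available.

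The decisive gap is the sliding phase, where you plan to design $\lambda$ a second time so that $V_2=s_\mathrm{t}^2/2$ decreases. On $s\equiv0$, solving $\dot s=0$ for the input and using \Cref{lem:steq,lem:sthetaeq} gives
\begin{equation*}
\dot s_\mathrm{t}=\frac{G_\mathrm{t}}{G}\Big(G_\theta\,(u_\theta-u_\mathrm{t})-\dot\lambda\,s_\mathrm{t}\Big).
\end{equation*}
Both ways of using $\lambda$ here fail:
\begin{itemize}
\item If $\dot\lambda$ already performs the reaching-phase cancellation, this collapses to $\dot s_\mathrm{t}=G_\mathrm{t}u_\theta$, exactly as in \Cref{rem:s_rel}. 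That is a function of $(r,\sigma_\mathrm{P})$ only, containing no $\lambda$ and having no definite sign relative to $s_\mathrm{t}$, so nothing is left to design.
\item If instead you free $\dot\lambda$, you lose the cancellation. Enforcing $\dot s_\mathrm{t}=-ks_\mathrm{t}$ then requires $\dot\lambda=G_\theta(u_\theta-u_\mathrm{t})/s_\mathrm{t}+kG/G_\mathrm{t}$, which is singular as $s_\mathrm{t}\to0$ unless $u_\theta-u_\mathrm{t}$ vanishes at the same rate. Meanwhile your step ``$s=0$ forces $s_\theta\to0$'' needs $\lambda$ to stay bounded.
\end{itemize}

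The obstruction is the underactuation itself. Holding $s_\mathrm{t}=s_\theta=0$ requires $G_\mathrm{t}(a_\mathrm{P}-u_\mathrm{t})=G_\theta(a_\mathrm{P}-u_\theta)=0$, i.e.\ $u_\mathrm{t}=u_\theta$ or vanishing gains, which is essentially the collision course $\sigma_\mathrm{P}=0$. The missing idea is an analysis of the one-dimensional internal dynamics on $s=0$ showing that this set is reached with $s_\mathrm{t}(t_\mathrm{d})=0$. Convergence as $t\to\infty$ does not yield $r(t_\mathrm{d})=0$.

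The paper's route does not close this either. It applies LaSalle to $V=s^2/2$, whose derivative vanishes on the whole manifold $s=0$, so LaSalle alone cannot single out the origin of the $(s_\mathrm{t},s_\theta)$-plane. \Cref{rem:s_rel} simply imposes $s_\mathrm{t}\dot s_\mathrm{t}\le-\varrho_\mathrm{t}s_\mathrm{t}^2$ as a sufficient condition rather than proving it.
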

\begin{proof}
   We design the lateral acceleration as a sum of three components:
\begin{equation}\label{eq: a_P form}
    a_\mathrm{P} = u_\mathrm{t} + u_\theta + u_\mathrm{c},
\end{equation}
where $u_\mathrm{t}$ and $u_\theta$ are equivalent control terms corresponding to \eqref{eq: error rate for time-to-go} and \eqref{eq: Dynamics of S_theta}, respectively (see \Cref{lem:steq,lem:sthetaeq}), and $u_\mathrm{c}$ is to be designed as a switching term that ensures sliding mode convergence. 

To derive the switching control law $u_\mathrm{ct}$, consider a Lyapunov function candidate as
\begin{align}\label{eq: Lyapunov candidate function}
     V= \dfrac{s^2}{2}.
\end{align}
Differentiating $V$ with respect to time yields
\begin{align}\label{eq:V dot}
    \dot{V} &= s\dot{s}, \nonumber \\
            &= s\left( \lambda\dot{s}_\mathrm{t} + \dot{\lambda}s_\mathrm{t} +    \dot{s}_\theta \right). 
\end{align}
Using the results in \Cref{lem:compositesdot}, one may write 
\begin{align} \label{eq: V_dot simplification}
   \dot{V} &= s \left[ \dot{\lambda}s_\mathrm{t} + \lambda \left(1 - \cos \sigma_\mathrm{P} \left( 1 - \dfrac{\sin^2\sigma_\mathrm{P}}{K}
    \right) +  \dfrac{r\sin2\sigma_\mathrm{P}}{Kv_\mathrm{P}^2} a_\mathrm{P} \right)  - \left(\dfrac{v_\mathrm{P}\sin \sigma_\mathrm{P}}{r} + \dfrac{mc_1v^2_\mathrm{P}\sin2\sigma_\mathrm{P}}{r^2c_2}\left(-\dfrac{v_\mathrm{P}\sin \sigma_\mathrm{P}}{r} \right)^{(c_1/c_2-1)}\right) \right.  \nonumber \\
      & \left. - \dfrac{mc_1 \cos \sigma_\mathrm{P}}{rc_2}\left(-\dfrac{v_\mathrm{P}\sin \sigma_\mathrm{P}}{r} \right)^{(c_1/c_2-1)}a_\mathrm{P}  \right].
\end{align}
Using \eqref{eq: a_P form} and the results from \Cref{lem:steq} and \Cref{lem:sthetaeq}, one may write 
\begin{align}\label{eq:v_dot SS_dot}
    \dot{V} = &s\left[\lambda\dfrac{r \sin 2 \sigma_\mathrm{P}}{Kv^2_\mathrm{P}} \left(u_\theta + u_\mathrm{ct}\right) + \dot{\lambda}s_\mathrm{t} -\dfrac{mc_1\cos \sigma_\mathrm{P}}{rc_2}\left(-\dfrac{v_\mathrm{P} \sin \sigma_\mathrm{P}}{r}\right)^{(c_1/c_2-1)} \left(u_\mathrm{t} + u_\mathrm{c}\right) \right], \nonumber \\
    =& s \left[ \lambda \dfrac{r \sin 2\sigma_\mathrm{P}}{Kv^2_\mathrm{P}}u_\theta - \dfrac{mc_1\cos \sigma_\mathrm{P}}{rc_2}\left(-\dfrac{v_\mathrm{P} \sin \sigma_\mathrm{P}}{r}\right)^{(c_1/c_2-1)} u_\mathrm{t} \right.\nonumber\\
    &\left.+ \dot{\lambda}s_\mathrm{t} + \left( \lambda \dfrac{r \sin 2\sigma_\mathrm{P}}{Kv^2_\mathrm{P}}    - \dfrac{mc_1\cos \sigma_\mathrm{P}}{rc_2}\left(-\dfrac{v_\mathrm{P}\sin \sigma_\mathrm{P}}{r} \right)^{(c_1/c_2-1)} \right) u_\mathrm{c} \right].
\end{align} 
To ensure the stability of the composite sliding surface, $s$, let the design parameter, $\lambda$, evolve according to the adaptation law \cite{6283395}
\begin{align}\label{eq: adaptive law}
    \dot{\lambda} = -\lambda \dfrac{\left(\chi_1 + \frac{r \sin 2\sigma_\mathrm{P}}{Kv_\mathrm{P}^2}  \right)u_\theta s_\mathrm{t}}{||s_\mathrm{t}||^2 + \Omega} -\dfrac{\chi_2 + \left(-\frac{mc_1 \cos \sigma_\mathrm{P}}{rc_2}\right)\left(-\frac{v_\mathrm{P}\sin \sigma_\mathrm{P}}{r} \right)^{(c_1/c_2 -1)}u_\mathrm{t}s_\mathrm{t}}{||s_\mathrm{t}||^2 + \Omega}
\end{align}
where $\Omega\geq0$ is a design parameter which keeps $\dot{\lambda}$ non-singular. It is apparent from \eqref{eq: adaptive law} that $\lambda$ converges to a constant value as $s_\mathrm{t} \to 0$. Then one has,
\begin{align}
      \dot{\lambda}s_\mathrm{t} &= \dfrac{mc_1\cos \sigma_\mathrm{P}}{rc_2}\left(-\dfrac{v_\mathrm{P} \sin \sigma_\mathrm{P}}{r}\right)^{(c_1/c_2-1)} u_\mathrm{t} -\lambda \dfrac{r \sin 2\sigma_\mathrm{P}}{Kv^2_\mathrm{P}}u_\theta  \label{eq: lambda_dot}
\end{align}
and the switching component is chosen as
\begin{align}
    u_\mathrm{c}  & =  -\bar{\psi}_1~\sign(s) -\bar{\psi}_2s, \label{eq: switching case}
\end{align}
where $\bar{\psi}_1$ and $\bar{\psi}_2$ are given in \eqref{eq:etabar}--\eqref{eq:kappabar}.

On substituting the expressions from \eqref{eq: lambda_dot} and \eqref{eq: switching case} into \eqref{eq:v_dot SS_dot}, one may obtain
\begin{align} \label{eq: Lyapunov Derivative}
    \dot{V}  &= - \psi_1~ s ~\sign(s) - \psi_2 s^2, \nonumber \\
    &= -\psi_1|s| -\psi_2 s^2,
\end{align}
which is negative definite for all $s\neq 0$, thus ensuring the stability of the composite sliding mode dynamics. 

Moreover, integrating \eqref{eq: Lyapunov Derivative} gives
\begin{align}
    & V(t)-V(0) =\int_0^t \left(-\psi_1|s| -\psi_2 s^2\right)dt,
\end{align}
which implies
\begin{align}
    V(0) =& V(t) + \int_0^t \left(\psi_1|s| + \psi_2 s^2\right)dt \geq  \int_0^t \left(\psi_1|s| +\psi_2 s^2\right)dt. \label{eq: Integrate 1}
\end{align}
From \eqref{eq: Integrate 1}, it follows that
\begin{align}\label{eq: lim}
 \lim_{t \to \infty} \int_0^t   \left(\psi_1|s| +\psi_2 s^2\right)dt < \infty.
\end{align}
From \eqref{eq: lim}, we infer that $\lim_{t \to \infty} \int_0^t   \psi_1|s| dt < \infty$ and $\lim_{t \to \infty} \int_0^t   \psi_2 s^2dt < \infty$. This implies $s \in \mathcal{L}_1$ and $s \in \mathcal{L}_2$. 
From \eqref{eq: Lyapunov candidate function} and \eqref{eq: Integrate 1}, we obtain
\begin{align}\label{eq: s bound}
    \dfrac{s^2}{2}=V(t)= V(0) - \int_0^t \left(\psi_1|s| +\psi_2 s^2\right)dt < \infty; \quad  \forall ~t\geq 0,
\end{align}
which implies that $s\in \mathcal{L}_\infty$, thus showing that the composite sliding surface $s$ is bounded. From \eqref{eq: Lyapunov Derivative}, we have
\begin{align}\label{eq: s_dot bound}
    \dfrac{dV}{dt}= s\dfrac{ds}{dt}= -\psi_1|s|-\psi_2 s^2 < \infty \quad \forall ~t\geq 0,
\end{align}
so that $\dot{s} \in \mathcal{L}_\infty$. Therefore, by Barbalat's lemma, the composite sliding surface, $s$, is asymptotically stable.
    
Once the sliding mode is enforced, \eqref{eq: Sliding manifold net} becomes equal to zero. To guarantee target interception at the prescribed impact time and angle, it is essential to ensure that both $s_\mathrm{t}$ and $s_\theta$ become zero thereafter. To this end, we now demonstrate that the sub-sliding surfaces, $s_\mathrm{t}$, and $s_\theta$ are also asymptotically stable after sliding mode is enforced on $s$. The time horizon can be divided into two phases by the time instant  $t_\alpha$. In the interval $[0, t_\alpha]$, the system state trajectories move towards the composite sliding surface, $s$. In the interval $(t_\alpha,  \infty)$, the trajectories stay on the composite sliding surface and converge to the origin.

To examine the stability of the sub-sliding surfaces within the interval $[0, t_\alpha]$, define a truncation operator as 
\begin{equation}\label{eq: function}
   h_\mathrm{t_\alpha}(t)= \left\{
    \begin{aligned}
        &1, \quad \quad t\leq t_\alpha \\
        &0, \quad \quad t > t_\alpha
    \end{aligned}
    \right.
\end{equation}
and define the truncated sub-sliding surface variables as
   \begin{align}
       s_{i_\mathrm{t_\alpha}} = s_{i} h_\mathrm{t_\alpha}(t), \quad \dot{s}_{i_\mathrm{t_\alpha}}= \dot{s}_{i}h_\mathrm{t_\alpha}(t),
   \end{align}
where $i \in \{\mathrm{t}, \theta\}$. With this construction\footnote{Here $\dot s_{i,t_\alpha}$ is defined as above rather than obtained by differentiation. It is to be considered only as an analysis device.}, the analysis is restricted to the interval $[0,t_\alpha]$, over which $s_{i,t_\alpha}$ and $\dot{s}_{i,t_\alpha}$ coincide with $s_i$ and $\dot{s}_i$, respectively. For $t>t_\alpha$, both truncated variables vanish identically. Consequently, $s_{i,t_\alpha}$ and $\dot{s}_{i,t_\alpha}$ remain bounded on $[0,t_\alpha]$, and once sliding mode is enforced at $t_\alpha$, the truncated representation is consistent with convergence to the corresponding sliding manifold.

In the subsequent interval $(t_\alpha,\infty)$, the state trajectory has reached the composite sliding manifold $s$ and the reduced-order dynamics evolve autonomously on this manifold. To characterize the limiting behavior, define the compact positively invariant set
\begin{align}\label{eq: s_w}
S_\mathrm{w} \coloneqq \left\{ s\in\mathbb{R}^2 \middle| \dot V(s)\leq 0 \right\},
\end{align}
and the set
\begin{align}
S_\mathrm{u} \coloneqq \left\{ s\in S_\mathrm{w} \middle| \dot V(s)=0 \right\}.
\end{align}
Since $\dot V\leq 0$ in $S_\mathrm{w}$, LaSalle's invariance principle implies that every trajectory starting in $S_\mathrm{w}$ approaches, as $t\to\infty$, the largest invariant set contained in $S_\mathrm{u}$. It therefore remains to identify this invariant set.

On the composite manifold,
\begin{align}
s=\lambda s_\mathrm{t}+s_\theta.
\end{align}
Hence, any invariant motion in $S_\mathrm{u}$ must satisfy
\begin{align}\label{eq: intersection}
\lambda s_\mathrm{t}+s_\theta=0,~~\dfrac{d}{dt}\left(\lambda s_\mathrm{t}+s_\theta\right)=0.
\end{align}
The only invariant solution consistent with $\dot{V}=0$ is the coordinate origin of the sub-sliding-surface space, namely
\begin{align}\label{eq: final step}
s_\mathrm{t}=0,~~s_\theta=0
\end{align}
Therefore, the largest invariant set contained in $S_\mathrm{u}$ reduces to the origin. By LaSalle's invariance principle, it follows that the sub-sliding surfaces $s_\mathrm{t}$ and $s_\theta$ also converge asymptotically to zero. Hence, the sub-sliding surfaces are asymptotically stable.

After sliding mode is enforced on $s_\mathrm{t}$, the interceptor aligns on the requisite trajectory that would lead to a time-constrained interception since $s_\mathrm{t}=0$ implies that $t_\mathrm{go}=t_\mathrm{go}^\mathrm{d}$ from \eqref{eq: time-to-go error}. Furthermore, after $s_\theta=0$, the impact angle error \eqref{eq: Error variable 2} also converges to zero.
    \end{proof}
\begin{remark}\label{remark}
    If \eqref{eq: final step} does not hold, then the composite sliding surface would converge to a point other than the coordinate origin defined by the axes $s_\mathrm{t}$ and $s_\theta$. In the phase plane spanned by $s_\mathrm{t}$ and $s_\theta$,
    \begin{align}
        \lim_{t \to \infty}s = \beta;~ \beta \in \mathbb{R}\setminus \{0\}.
    \end{align}
This indicates that the composite sliding surface is not asymptotically stable, which contradicts the asymptotic stability of $s$.

\end{remark}
\begin{remark}\label{rem:s_rel}
    Following \Cref{lem:steq} and \eqref{eq: a_P form}, one may write
    \begin{align} 
    \dot{s}_\mathrm{t}  = 1 - \cos \sigma_\mathrm{P} \left( 1 - \dfrac{\sin^2\sigma_\mathrm{P}}{K}
    \right) +  \dfrac{r\sin2\sigma_\mathrm{P}}{Kv_\mathrm{P}^2} \left(u_\mathrm{t} + u_\theta + u_\mathrm{c}\right),
\end{align}
which, after the composite surface is reached, is equivalent to
\begin{align} 
    \dot{s}_\mathrm{t}  = \dfrac{r\sin2\sigma_\mathrm{P}}{Kv_\mathrm{P}^2}  u_\theta,
\end{align}
since $u_\mathrm{c}=0$ in the ideal sliding-mode sense. If we let $W=\dfrac{1}{2}s_\mathrm{t}^2$, then it follows that
\begin{align}
    s_\mathrm{t}\dot{s}_\mathrm{t} = s_\mathrm{t}\dfrac{r\sin2\sigma_\mathrm{P}}{Kv_\mathrm{P}^2}  u_\theta.
\end{align}
To guarantee that $s_\mathrm{t}\to 0$, one may impose a sufficient condition on the reduced-order dynamics such that
\begin{align}
    s_\mathrm{t}\dot{s}_\mathrm{t} \leq - \varrho_\mathrm{t}s_\mathrm{t}^2 = -2\varrho_\mathrm{t}W;~~\forall~\varrho_\mathrm{t}>0,
\end{align}
so one has
\begin{align}
    W(t) \leq W(t_r)\exp \left\{-2\varrho_\mathrm{t}\left(t-t_r\right)\right\}.
\end{align}
The above result implies that $\lim_{t\to\infty} s_\mathrm{t}\to 0$. On the composite manifold, from \eqref{eq: intersection}, $s_\theta = -\lambda s_\mathrm{t}$. Assume that $\exists~\bar{\lambda}>0$ such that $\sup_{t\geq t_r}|\lambda(t)|\leq \bar{\lambda}$. Then, 
\begin{align}
    |s_\theta(t)| \leq |\lambda(t)|\, |s_\mathrm{t}(t)| \leq \bar{\lambda}|s_\mathrm{t}(t)|.
\end{align}
Since $s_\mathrm{t}\to 0$, it follows immediately that $s_\theta\to 0$. Thus, the proposed strategy allows for time correction first, and thereafter angle correction is effected once $s_\theta$ (and therefore $e_\theta$) converges to zero.
\end{remark}
Recall \Cref{rem:nonsingularity} that the equivalent-control expressions remain well defined only when $\sigma_\mathrm{P}\notin \{0,\pi/2,\pi\}$. This motivates the following analysis of equilibrium points of the lead angle dynamics and nonsingularity conditions, through which the admissible operating region of the proposed guidance command in \Cref{theorem 1} is characterized.
\begin{theorem}\label{thm:leadangle}
Consider the interceptor-target engagement kinematics whose relative motion is governed by \eqref{eq:enggeo}, the time-to-go formulation in \eqref{eq: time-to-go expression}, and the impact angle error \eqref{eq: Error variable 2}. Under the proposed interceptor's lateral acceleration command \eqref{eq:lateral  acceleration}, the lead angle dynamics admits no equilibrium at $\sigma_\mathrm{P}=\pm\dfrac{\pi}{2}$. Moreover, the configuration $\sigma_\mathrm{P}=\pi$ corresponds to inverse collision course, whereas $\sigma_\mathrm{P}=0$ can occur only at the final impact time if $\sigma_\mathrm{P}(0)\neq 0$. 
\end{theorem}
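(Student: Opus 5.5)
The plan is to substitute the guidance command \eqref{eq:lateral  acceleration} into the lead angle dynamics \eqref{eq: sigma_dot_expression} from \Cref{lemma 2} and treat the three configurations separately. For each candidate value $\sigma_\mathrm{P}^\ast\in\{\pm\pi/2,\pi,0\}$ we evaluate the closed-loop $\dot{\sigma}_\mathrm{P}$ there, or examine its limit when an individual term of $a_\mathrm{P}$ is singular. We then check whether $\dot\sigma_\mathrm{P}=0$ can hold identically. Recall that an equilibrium requires $\dot\sigma_\mathrm{P}=0$ on an open time interval, not just at an instant.

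\emph{Case $\sigma_\mathrm{P}=\pm\pi/2$.} Here $\cos\sigma_\mathrm{P}=0$ and $\sin 2\sigma_\mathrm{P}=0$. First I would combine the terms of \eqref{eq:lateral  acceleration} carrying the factor $1/\cos\sigma_\mathrm{P}$ and multiply $\dot\sigma_\mathrm{P}$ through by $\cos\sigma_\mathrm{P}$. Evaluating at $\sigma_\mathrm{P}=\pm\pi/2$, the $\sin 2\sigma_\mathrm{P}$ term vanishes, but the term $\frac{rc_2}{mc_1}\bigl(\mp v_\mathrm{P}/r\bigr)^{2-c_1/c_2}$ and the term $-\frac{Kv_\mathrm{P}^2}{2r}\tan(\pm\pi/4)$ do not. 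Whenever their sum is nonzero, $|\dot\sigma_\mathrm{P}|\to\infty$ as $\sigma_\mathrm{P}\to\pm\pi/2$, so $\dot\sigma_\mathrm{P}$ cannot vanish and $\pm\pi/2$ is not an equilibrium. The leading-order $1/\cos\sigma_\mathrm{P}$ behaviour also gives the sign of $\dot\sigma_\mathrm{P}$ on either side of $\pm\pi/2$, which I would use to argue the point is crossed or repelled rather than rested on.

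\emph{Case $\sigma_\mathrm{P}=\pi$.} From \eqref{eq: r_dot}, $\dot r=v_\mathrm{P}>0$, and from \eqref{eq: theta_dot}, $\dot\theta=0$. So the interceptor flies along the LOS directly away from the target, which is an inverse collision course, and this identifies the configuration. For $\sigma_\mathrm{P}=0$ (and $\pi$), $\sin\sigma_\mathrm{P}=0$, so \eqref{eq: error_2 dynamics} gives $\dot e_\theta=0$ and, from \eqref{eq: error rate for time-to-go}, $\dot s_\mathrm{t}=0$ with the input gain vanishing. If $\sigma_\mathrm{P}\equiv0$ held on an interval before $t_\mathrm{f}$, then $e_\theta$ would be frozen at a nonzero value. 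Moreover $\dot t_\mathrm{go}=-1$ with $t_\mathrm{go}=r/v_\mathrm{P}$, so $s_\mathrm{t}$ would also be frozen. Together these contradict the convergence of $s$ (and of $s_\mathrm{t},s_\theta$) established in \Cref{theorem 1} unless the errors are already zero, which happens exactly at interception.

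\emph{Main obstacle.} The hard part is a rigorous version of the $\sigma_\mathrm{P}=0$ claim and of the nonzero leading coefficient at $\pm\pi/2$. At $\sigma_\mathrm{P}\to0$, $u_\mathrm{t}$ and the adaptive-gain denominators behave like ratios of vanishing quantities. I would first expand $u_\mathrm{t}$, $u_\theta$, and $\dot\sigma_\mathrm{P}$ to first order in $\sigma_\mathrm{P}$ to show that, with $\sigma_\mathrm{P}(0)\neq0$, $\sigma_\mathrm{P}=0$ is reached only in the limit $r\to0$, i.e.\ at $t_\mathrm{f}$. Similarly, I would use the free parameters $m,c_1,c_2,K$ (or the gains $\chi_1,\chi_2$) to exclude the nongeneric cancellation at $\pm\pi/2$.
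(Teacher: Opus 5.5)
Your plan follows the paper's route. You substitute the command into \eqref{eq: sigma_dot_expression}, clear the $\cos\sigma_\mathrm{P}$ denominator at $\pm\pi/2$, and identify $\sigma_\mathrm{P}=\pi$ kinematically. You then get the $\sigma_\mathrm{P}=0$ claim from a small-lead-angle relation between $\sigma_\mathrm{P}$ and $r$; this is the paper's \eqref{eq:rvssigma_dot2}--\eqref{eq: sigmaVsrange}, i.e.\ the expansion you defer to your last paragraph.

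The gap is the step you left open at $\pm\pi/2$. No tuning of $m,c_1,c_2,K$ is needed there, because no cancellation is possible. Since $c_1,c_2$ are odd, $2-c_1/c_2=(2c_2-c_1)/c_2$ has odd numerator and odd denominator. So $(\mp v_\mathrm{P}/r)^{2-c_1/c_2}$ has sign $\mp$, the same sign as $-K\tan(\pm\pi/4)=\mp K$. Hence
\begin{align*}
\cos\sigma_\mathrm{P}\,\dot\sigma_\mathrm{P}\;\longrightarrow\;\mp\left[\frac{rc_2}{mv_\mathrm{P}c_1}\left(\frac{v_\mathrm{P}}{r}\right)^{2-c_1/c_2}+\frac{Kv_\mathrm{P}}{2r}\right]\neq0 \quad \text{as } \sigma_\mathrm{P}\to\pm\frac{\pi}{2}
\end{align*}
for every $r>0$. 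Therefore $|\dot\sigma_\mathrm{P}|\to\infty$, and the sign pattern shows both points are repelling from either side.

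This is also the sound way to close the paper's own step. Under the same odd-root convention, \eqref{eq:steady} forces $r<0$. The paper's stated reason (``$r$ varies with time'') is weak, because $\dot r=-v_\mathrm{P}\cos\sigma_\mathrm{P}=0$ at $\sigma_\mathrm{P}=\pm\pi/2$.

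Like the paper's \eqref{eq: sigma_dot_2}, this computation ignores $u_\mathrm{c}$. You should either state that you work on $s=0$, or assume $\chi_2+\lambda\chi_1\neq0$. Otherwise the denominators of $\bar\psi_{1,2}$ are themselves $O(\cos\sigma_\mathrm{P})$, and $u_\mathrm{c}$ enters the singular coefficient with an $s$-dependent sign.

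For $\sigma_\mathrm{P}=0$, your freezing argument does not give what you claim. On an interval with $\sigma_\mathrm{P}\equiv0$ you need $a_\mathrm{P}=0$. Since $u_\mathrm{t}=u_\theta=0$ at $\sigma_\mathrm{P}=0$, this forces $u_\mathrm{c}=0$, i.e.\ $s=0$. So the frozen state satisfies $\lambda s_\mathrm{t}+e_\theta=0$ and is entirely consistent with convergence of $s$. A contradiction can come only from the separate convergence of $s_\mathrm{t}$ and $s_\theta$ asserted in \Cref{theorem 1}.

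Even then, ``unless the errors are already zero, which happens exactly at interception'' is the claim to be proved, not a consequence. The argument also does not address isolated zero crossings of $\sigma_\mathrm{P}$ before $t_\mathrm{f}$. As in the paper, the substantive step is the small-angle $r$--$\sigma_\mathrm{P}$ relation. Make it the main argument, and verify explicitly that the resulting $\sigma_\mathrm{P}(r)$ has no zero on $(0,r(0)]$.
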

\begin{proof}
Recall the lead angle dynamics from \Cref{lemma 2} and note that $\sigma_\mathrm{P}=\pi$ corresponds to an inverse-collision geometry and is incompatible with the interception scenario of interest. Hence, this configuration is excluded from the admissible operating region.

Next, consider $\sigma_\mathrm{P}(0)= 0$. If the prescribed terminal conditions satisfy $t_\mathrm{f}=r(0)/v_\mathrm{P}$ and $\theta_\mathrm{d}=0$, then the interceptor is already on the required collision course and no maneuver is needed, so that $\sigma_\mathrm{P}(t)\equiv 0$ and $a_\mathrm{P}(t)\equiv 0$. This is a degenerate but nonsingular case. Excluding this trivial situation, if $t_\mathrm{f}\neq r(0)/v_\mathrm{P}$ and $\sigma_\mathrm{P}(0)\neq 0$, then $\sigma_\mathrm{P}=0$ cannot occur for any $t<t_\mathrm{f}$. Otherwise, the interceptor would already be on the final collision course before satisfying the full terminal requirements. The same argument holds even if  $\sigma_\mathrm{P}(0)= 0$. In this case, since $t_\mathrm{f}\neq r(0)/v_\mathrm{P}$, the lead angle will immediately deviate from this zero initial value to place the interceptor on the requisite trajectory.

Consequently, we focus our attention on the steady-state, i.e., after the respective errors have converged. Hence, for nontrivial engagements, $\sigma_\mathrm{P}=0$ is attained only at interception. This may be confirmed by obtaining a relation between the range-to-go and the lead angle, assuming a small lead angle.

Substituting \eqref{eq:lateral  acceleration}  into \eqref{eq: sigma_dot_expression} leads to
\begin{align}\label{eq: sigma_dot_2}
    \dot{\sigma}_\mathrm{P} =   & \dfrac{rc_2}{mv_\mathrm{P} c_1 \cos \sigma_\mathrm{P}} \left[ \left (- \dfrac{v_\mathrm{P} \sin \sigma_\mathrm{P}}{r} \right)^{(2 - c_1/c_2)} - \dfrac{m c_1 v^2_\mathrm{P} \sin 2 \sigma_\mathrm{P}}{r^2 c_2} \right]  - \dfrac{K v_\mathrm{P} \tan \frac{\sigma_\mathrm{P}}{2}}{2 r \cos \sigma_\mathrm{P}} - \dfrac{v_\mathrm{P} \sin \sigma_\mathrm{P}}{2 r}  + \dfrac{v_\mathrm{P} \sin \sigma_\mathrm{P}}{r}.
\end{align}
Using \eqref{eq: r_dot} and \eqref{eq: sigma_dot_2}, one may write
\begin{align}\label{eq: rvssigma_dot2}
    \dfrac{d r}{d \sigma_\mathrm{P}}&= \dfrac{-v_\mathrm{P} \cos \sigma_\mathrm{P}}{\left(- \dfrac{v_\mathrm{P} \sin \sigma_\mathrm{P}}{r}\right)^{(2- c_1/c_2)} \left(\dfrac{rc_2}{mv_\mathrm{P}c_1 \cos \sigma_\mathrm{P}}\right) -\dfrac{2 v_\mathrm{P} \sin \sigma_\mathrm{P}}{r} - \dfrac{Kv_\mathrm{P} \tan \dfrac{\sigma_\mathrm{P}}{2}}{2r \cos \sigma_\mathrm{P}} - \dfrac{v_\mathrm{P}\sin \sigma_\mathrm{P}}{2r} + \dfrac{v_\mathrm{P} \sin \sigma_\mathrm{P}}{r}} \nonumber \\
    &= \dfrac{-v_\mathrm{P} \cos \sigma_\mathrm{P}}{\left(- \dfrac{v_\mathrm{P} \sin \sigma_\mathrm{P}}{r}\right)^{(2- c_1/c_2)} \left(\dfrac{rc_2}{mv_\mathrm{P}c_1 \cos \sigma_\mathrm{P}}\right) - \dfrac{Kv_\mathrm{P} \tan \dfrac{\sigma_\mathrm{P}}{2}}{2r \cos \sigma_\mathrm{P}} - \dfrac{3v_\mathrm{P} \sin \sigma_\mathrm{P}}{2r}}.
\end{align}
In the terminal phase (near interception), the interceptor's lead angle is typically small because its velocity is nearly aligned with the LOS angle. Then, \eqref{eq: rvssigma_dot2} can be simplified further as
\begin{align}\label{eq: rvssigma_dot2_simplified}
    \dfrac{d r}{d \sigma_\mathrm{P}}=&~ \dfrac{-v_\mathrm{P}}{\left(- \dfrac{v_\mathrm{P}\sigma_\mathrm{P}}{r}\right)^{(2- c_1/c_2)} \left(\dfrac{rc_2}{mv_\mathrm{P}c_1 }\right)  - \dfrac{Kv_\mathrm{P} \sigma_\mathrm{P}}{4r} - \dfrac{3v_\mathrm{P} \sigma_\mathrm{P}}{2r}} \nonumber \\
     =&~ \dfrac{-v_\mathrm{P}}{\left(\dfrac{v_\mathrm{P}c_2 \sigma_\mathrm{P}^2}{mrc_1} \right)\left(\dfrac{-r}{v_\mathrm{P} \sigma_\mathrm{P}} \right)^{c_1/c_2}- \left(\dfrac{v_\mathrm{P}\sigma_\mathrm{P}}{r} \right)\left(\dfrac{K}{4} + \dfrac{3}{2} \right)}
\end{align}
under the assumption of a small lead angle in the terminal phase.
Upon taking the reciprocal of \eqref{eq: rvssigma_dot2_simplified} and after further simplifications, one has
\begin{align}\label{eq:rvssigma_dot2}
    \dfrac{d \sigma_\mathrm{P}}{dr} &=~ \dfrac{\sigma_\mathrm{P}}{r}\left(\dfrac{K}{4} + \dfrac{3}{2} \right) - \dfrac{c_2\sigma_\mathrm{P}^2}{mrc_1}\left(\dfrac{-r}{v_\mathrm{P}\sigma_\mathrm{P}}\right)^{c_1/c_2},
    \end{align}
which, upon integrating from $r(0)$ to $r$ (with corresponding $\sigma_\mathrm{P}(0)$ to $\sigma_\mathrm{P}$) yields
\begin{align}\label{eq: sigmaVsrange}
    \sigma_\mathrm{P} = r^{\left(\frac{K}{4} + \frac{3}{2}\right)} \left( \dfrac{\sigma_\mathrm{P}(0)}{r(0)^{\left(\frac{K}{4}+ \frac{3}{2}\right)}}-\dfrac{c_1 -c_2}{mc_1v_\mathrm{p}^{c_1/c_2}}\dfrac{r(0)^{1 - \alpha}}{1-\alpha} + \dfrac{c_1 - c_2}{mc_1v_\mathrm{P}^{c_1/c_2}} \dfrac{r^{1 - \alpha}}{1 - \alpha}\right),
\end{align}
where  $\alpha = \dfrac{(c_1 - c_2)(K - 2)}{4c_2}$. Since $r(0)\neq 0$,  $v_\mathrm{P}\neq 0$, $\alpha\neq 1$, and the design parameters $m,c_1,c_2$ are also nonzero, it is now evident from \eqref{eq: sigmaVsrange} that the lead angle vanishes only at the time of interception (i.e., when $r\to 0$), and hence $a_\mathrm{P}$ remains nonsingular in the steady-state.

It remains to exclude $\sigma_\mathrm{P}=\pm \pi/2$. For this purpose, suppose, for contradiction, that $\sigma_\mathrm{P}= \dfrac{\pi}{2}$ is an equilibrium point, which means once the lead angle reaches $\dfrac{\pi}{2}$, it stays there, thereby $\dot{\sigma}_\mathrm{P}=0$. Then it follows from \eqref{eq: sigma_dot_2} that 
\begin{align}\label{eq: sigma_dot_3}
     K v_\mathrm{P}\tan \frac{\sigma_\mathrm{P}}{2} = \dfrac{mc_1v^2_\mathrm{P} \sin \sigma_\mathrm{P} \cos \sigma_\mathrm{P} +2r^2 c_2\left[ \left (- \dfrac{v_\mathrm{P} \sin \sigma_\mathrm{P}}{r} \right)^{(2 - c_1/c_2)} - \dfrac{m c_1 v^2_\mathrm{P} \sin 2 \sigma_\mathrm{P}}{r^2 c_2} \right]}{mc_1v_\mathrm{P} }.
\end{align}
If $\sigma_\mathrm{P}= \dfrac{\pi}{2}$, then \eqref{eq: sigma_dot_3} reduces to
\begin{align}\label{eq:simp}
         K v_\mathrm{P} &= \dfrac{2r^2c_2}{mc_1}\left(-\dfrac{v_\mathrm{P}}{r}\right)^{2 - c_1/c_2}. \nonumber \\
    \end{align}
     Solving \eqref{eq:simp} for $r$ yields
    \begin{align}\label{eq:steady}
        r  = \left[\dfrac{Km c_1}{2c_2}\left(-v_\mathrm{P}\right)^{c_1/c_2} \right]^{c_2/c_1}.
    \end{align}
    In \eqref{eq:steady}, the LHS varies with time, while the RHS remains constant, and thus the equality cannot be satisfied. Hence, $\sigma_\mathrm{P}=\dfrac{\pi}{2}$ is not an equilibrium point. The case for $\sigma_\mathrm{P}=-\dfrac{\pi}{2}$ can be argued similarly. 
  \end{proof}  
It can be inferred from \Cref{thm:leadangle} that the proposed guidance command, \eqref{eq:lateral  acceleration}, is nonsingular by design everywhere, except in the situation when the trajectories cross isolated points $\sigma_\mathrm{P}=\pm \dfrac{\pi}{2}$ in the transient phase. Since the interceptor's lateral acceleration remains bounded in practice, such isolated points, even if encountered, do not pose any implementation issues. \Cref{fig:rangeVslead} shows a typical $r-\sigma_\mathrm{P}$ plane where one may observe that the $\sigma_\mathrm{P}\to 0$ as $r\to 0$ for any nontrivial case.
\begin{figure}[h!]
    \centering
    \includegraphics[width=0.5\linewidth]{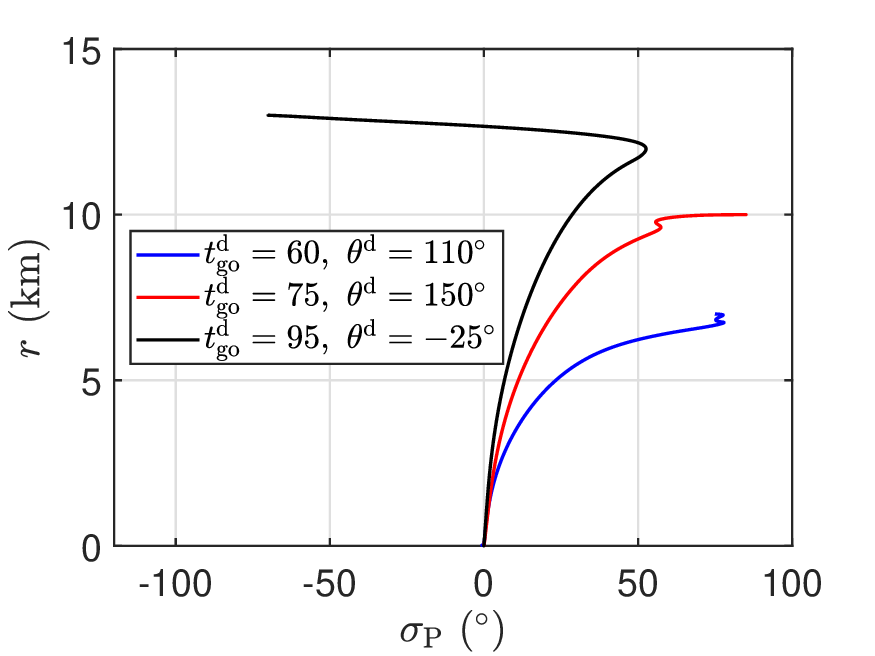}
    \caption{Range versus lead angle for different initial and terminal conditions showing that the lead angle vanishes at the impact time only.}
    \label{fig:rangeVslead}
\end{figure}

To demonstrate the generality of the proposed framework, we redesign the guidance law using a different time-to-go expression and show that the proposed approach retains its effectiveness under this alternative formulation. Let the time-to-go be chosen as \cite{doi:10.1177/09544100211029817}
\begin{align}\label{eq:tgo_2}
    t_\mathrm{go} = \dfrac{r}{v_\mathrm{P}}\left(1+ \dfrac{\sigma_\mathrm{P}^2 + \sigma^2_\mathrm{P_f}}{15} - \dfrac{\sigma_\mathrm{P}\sigma_\mathrm{P_f}}{30}\right),
\end{align}
where
\begin{align}\label{eq:sigma_final_tgo2}
    \sigma_\mathrm{P_f}= \gamma_\mathrm{P_f}-\theta,
\end{align}
is the lead angle at the impact, and $\gamma_\mathrm{P_{f}}$ is the interceptor's heading at that time.
\begin{lemma}\label{lem:tgo_2}
    The dynamics of the interceptor's time-to-go in \eqref{eq:tgo_2} has a relative degree of one with respect to its lateral acceleration.
    \end{lemma}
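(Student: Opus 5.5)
The plan is to differentiate \eqref{eq:tgo_2} directly with respect to time, as in \Cref{lemma: dynamics of time-to-go}, and show that $a_\mathrm{P}$ appears with a coefficient that is not identically zero. First we fix how $\sigma_\mathrm{P_f}$ evolves. By \eqref{eq:sigma_final_tgo2}, $\sigma_\mathrm{P_f}=\gamma_\mathrm{P_f}-\theta$, where $\gamma_\mathrm{P_f}$ is a prescribed terminal heading and hence constant in time. Therefore
\begin{align}
\dot{\sigma}_\mathrm{P_f} = -\dot{\theta} = \dfrac{v_\mathrm{P}\sin\sigma_\mathrm{P}}{r},
\end{align}
by \eqref{eq: theta_dot}. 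This term contains no $a_\mathrm{P}$.

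Writing $t_\mathrm{go}=\frac{r}{v_\mathrm{P}}F(\sigma_\mathrm{P},\sigma_\mathrm{P_f})$ with $F=1+\frac{\sigma_\mathrm{P}^2+\sigma_\mathrm{P_f}^2}{15}-\frac{\sigma_\mathrm{P}\sigma_\mathrm{P_f}}{30}$, the product and chain rules give
\begin{align}
\dot{t}_\mathrm{go}=\dfrac{\dot r}{v_\mathrm{P}}F+\dfrac{r}{v_\mathrm{P}}\left[\left(\dfrac{2\sigma_\mathrm{P}}{15}-\dfrac{\sigma_\mathrm{P_f}}{30}\right)\dot{\sigma}_\mathrm{P}+\left(\dfrac{2\sigma_\mathrm{P_f}}{15}-\dfrac{\sigma_\mathrm{P}}{30}\right)\dot{\sigma}_\mathrm{P_f}\right].
\end{align}
We substitute \eqref{eq: r_dot} for $\dot r$, \eqref{eq: sigma_dot_expression} from \Cref{lemma 2} for $\dot\sigma_\mathrm{P}$, and the expression above for $\dot\sigma_\mathrm{P_f}$. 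We then collect terms into the control-affine form
\begin{align}
\dot{t}_\mathrm{go}=f(r,\sigma_\mathrm{P},\sigma_\mathrm{P_f})+g(r,\sigma_\mathrm{P},\sigma_\mathrm{P_f})\,a_\mathrm{P},
\end{align}
where
\begin{align}
g=\dfrac{r}{v_\mathrm{P}^2}\left(\dfrac{2\sigma_\mathrm{P}}{15}-\dfrac{\sigma_\mathrm{P_f}}{30}\right)=\dfrac{r\left(4\sigma_\mathrm{P}-\sigma_\mathrm{P_f}\right)}{30v_\mathrm{P}^2}.
\end{align}
The drift $f$ gathers $-\cos\sigma_\mathrm{P}\,F$ together with the $\sin\sigma_\mathrm{P}$ terms coming from $\dot\sigma_\mathrm{P}$ and $\dot\sigma_\mathrm{P_f}$. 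Its exact simplification is routine and only needed later for the equivalent control.

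The remaining step, and the only delicate one, is to argue that $g$ is not identically zero along the engagement. It vanishes only when $r=0$, which is interception, or on the isolated set $4\sigma_\mathrm{P}=\sigma_\mathrm{P_f}$. This mirrors how \Cref{rem:nonsingularity} treats the isolated singular values of $\sin2\sigma_\mathrm{P}$ in \Cref{lemma: dynamics of time-to-go}. Since $a_\mathrm{P}$ appears explicitly in the first derivative of $t_\mathrm{go}$ with a generically nonzero gain, the relative degree is one. The main point to justify carefully is that $\gamma_\mathrm{P_f}$ is treated as a constant desired quantity rather than the instantaneous heading. Otherwise $\dot\sigma_\mathrm{P_f}$ would itself contain $a_\mathrm{P}$, though the relative degree would still be one.
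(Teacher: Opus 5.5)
Your proposal is correct and follows the paper's proof essentially step for step. Both arguments differentiate \eqref{eq:tgo_2}, treat $\gamma_\mathrm{P_f}$ as constant so that $\dot\sigma_\mathrm{P_f}=-\dot\theta$, substitute \eqref{eq: r_dot} and \eqref{eq: sigma_dot_expression}, and read off the input gain $\frac{r}{v_\mathrm{P}^2}\left(\frac{2\sigma_\mathrm{P}}{15}-\frac{\sigma_\mathrm{P_f}}{30}\right)$ appearing in \eqref{eq:tgo_2_dot_fin}. The differences are minor: the paper writes out the drift explicitly as $-\cos\sigma_\mathrm{P}F+\frac{1}{10}\sin\sigma_\mathrm{P}(\sigma_\mathrm{P}+\sigma_\mathrm{P_f})$, whereas you add the check that the gain vanishes only at $r=0$ or on $4\sigma_\mathrm{P}=\sigma_\mathrm{P_f}$, which the paper leaves implicit.
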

    \begin{proof}
        Differentiating the time-to-go estimate, \eqref{eq:tgo_2}, with respect to time, yields
       \begin{align}\label{eq:tgo_2_dot1}
    \dot{t}_\mathrm{go} = \dfrac{\dot{r}}{v_\mathrm{P}} \left( 1 + \dfrac{\sigma_\mathrm{P}^2 + \sigma_{\mathrm{P}_\mathrm{f}}^2}{15} - \dfrac{\sigma_\mathrm{P}\sigma_{\mathrm{P}_\mathrm{f}}}{30} \right) + \dfrac{r}{v_\mathrm{P}} \left( \dfrac{2}{15} \left( \sigma_\mathrm{P}\dot{\sigma}_\mathrm{P} + \sigma_{\mathrm{P}_\mathrm{f}}\dot{\sigma}_{\mathrm{P}_\mathrm{f}} \right) - \dfrac{1}{30} \left( \dot{\sigma}_\mathrm{P}\sigma_{\mathrm{P}_\mathrm{f}} + \sigma_\mathrm{P}\dot{\sigma}_{\mathrm{P}_\mathrm{f}} \right) \right).
\end{align}
        Using the results from \Cref{lemma 2} and substituting for the term, $\dot{\sigma}_\mathrm{P_f}$ by differentiating \eqref{eq:sigma_final_tgo2} with respect to time, \eqref{eq:tgo_2_dot1} can be simplified as 
        \begin{align}
            \dot{t}_\mathrm{go} = \dfrac{\dot{r}}{v_\mathrm{P}}\left(  1+ \dfrac{\sigma_\mathrm{P}^2 + \sigma^2_\mathrm{P_f}}{15} - \dfrac{\sigma_\mathrm{P}\sigma_\mathrm{P_f}}{30} \right) + \dfrac{r}{v_\mathrm{P}}\left(\dfrac{2\sigma_\mathrm{P}}{15v_\mathrm{P}}a_\mathrm{P} -\dfrac{2\sigma_\mathrm{P}}{15}\dot{\theta}  -\dfrac{2\sigma_\mathrm{P_f}}{15}\dot{\theta} +\dfrac{\sigma_\mathrm{P}}{30}\dot{\theta} - \dfrac{\sigma_\mathrm{P_f}}{30v_\mathrm{P}}a_\mathrm{P} + \dfrac{\sigma_\mathrm{P_f}}{30}\dot{\theta}\right).
        \end{align}
    On rearranging the terms above yields
    \begin{align}\label{eq:tgo_2_dot2}
        \dot{t}_\mathrm{go} = \dfrac{\dot{r}}{v_\mathrm{P}}\left(1+ \dfrac{\sigma_\mathrm{P}^2 + \sigma^2_\mathrm{P_f}}{15} - \dfrac{\sigma_\mathrm{P}\sigma_\mathrm{P_f}}{30}\right) - \dfrac{r \dot{\theta}}{10v_\mathrm{P}} \left(\sigma_\mathrm{P} + \sigma_\mathrm{P_f} \right) +  \dfrac{r}{v_\mathrm{P}^2} \left(\dfrac{2\sigma_\mathrm{P}}{15} - \dfrac{\sigma_\mathrm{P_f}}{30} \right)a_\mathrm{P}.
    \end{align}
    Substituting the range rate and LOS rate from \eqref{eq: r_dot} and \eqref{eq: theta_dot}, \eqref{eq:tgo_2_dot2} can be further simplified to 
    \begin{align}\label{eq:tgo_2_dot_fin}
        \dot{t}_\mathrm{go} = -\cos \sigma_\mathrm{P}\left(1+ \dfrac{\sigma_\mathrm{P}^2 + \sigma^2_\mathrm{P_f}}{15} - \dfrac{\sigma_\mathrm{P}\sigma_\mathrm{P_f}}{30}\right) + \dfrac{\sin \sigma_\mathrm{P}\left(\sigma_\mathrm{P} + \sigma_\mathrm{P_f} \right)}{10} +  \dfrac{r}{v_\mathrm{P}^2} \left(\dfrac{2\sigma_\mathrm{P}}{15} - \dfrac{\sigma_\mathrm{P_f}}{30} \right)a_\mathrm{P}.
    \end{align}
    It can be observed from \eqref{eq:tgo_2_dot_fin} that this time-to-go dynamics also possesses a relative degree of one with respect to the interceptor's lateral acceleration.
    \end{proof}
\Cref{lem:tgo_2} reveals that the lateral acceleration design corresponding to the time-to-go expression in \eqref{eq:tgo_2} follows directly along the same lines as the previous development, because \eqref{eq: time-to-go expression} and \eqref{eq:tgo_2} share the same relative degree. Therefore, we present the following results without proof.

\begin{lemma}\label{lem:stgo2}
    Consider the sub-sliding surface defined in \eqref{eq: time-to-go error}. Suppose the time-to-go dynamics induced by the chosen time-to-go expression is given by \eqref{eq:tgo_2_dot_fin}. Then, the equivalent control associated with the manifold $s_\mathrm{t}=0$ is given by
    \begin{align}\label{eq:utgo2eq}
      u_\mathrm{t}  &= - \dfrac{\left(1 -\cos \sigma_\mathrm{P}\left(1+ \dfrac{\sigma_\mathrm{P}^2 + \sigma^2_\mathrm{P_f}}{15} - \dfrac{\sigma_\mathrm{P}\sigma_\mathrm{P_f}}{30}\right) + \dfrac{\sin \sigma_\mathrm{P}\left(\sigma_\mathrm{P} + \sigma_\mathrm{P_f} \right)}{10}\right)30v_\mathrm{P}^2}{r\left(4\sigma_\mathrm{P} - \sigma_\mathrm{P_f} \right)}.
    \end{align}
    Under this control, the manifold $s_\mathrm{t}=0$ remains invariant, which implies $\dot{s}_\mathrm{t}=0$.
\end{lemma}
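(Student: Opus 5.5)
The argument mirrors \Cref{lem:steq}, with the time-to-go dynamics of \Cref{lem:tgo_2} taking the place of those of \Cref{lemma: dynamics of time-to-go}. The equivalent control is the control that makes the manifold invariant: on $s_\mathrm{t}=0$ we impose $\dot s_\mathrm{t}=0$ and solve for $a_\mathrm{P}$. The only structural input needed is that the new $\dot t_\mathrm{go}$ is control-affine with relative degree one, which \Cref{lem:tgo_2} already provides.

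The first step is to differentiate \eqref{eq: time-to-go error} as in \eqref{eq: error dot 1}, which gives $\dot s_\mathrm{t}=\dot t_\mathrm{go}+1$, since $\dot t_\mathrm{go}^\mathrm{d}=-1$ regardless of which time-to-go estimate is used. Next, substituting \eqref{eq:tgo_2_dot_fin} gives
\begin{align*}
\dot s_\mathrm{t} = F(r,\sigma_\mathrm{P},\sigma_\mathrm{P_f}) + \dfrac{r}{v_\mathrm{P}^2}\left(\dfrac{2\sigma_\mathrm{P}}{15}-\dfrac{\sigma_\mathrm{P_f}}{30}\right)a_\mathrm{P},
\end{align*}
where
\begin{align*}
F = 1-\cos\sigma_\mathrm{P}\left(1+\dfrac{\sigma_\mathrm{P}^2+\sigma_\mathrm{P_f}^2}{15}-\dfrac{\sigma_\mathrm{P}\sigma_\mathrm{P_f}}{30}\right)+\dfrac{\sin\sigma_\mathrm{P}(\sigma_\mathrm{P}+\sigma_\mathrm{P_f})}{10}.
\end{align*}
We then set $\dot s_\mathrm{t}=0$. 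The input gain simplifies as $\tfrac{2\sigma_\mathrm{P}}{15}-\tfrac{\sigma_\mathrm{P_f}}{30}=\tfrac{4\sigma_\mathrm{P}-\sigma_\mathrm{P_f}}{30}$. Dividing through by $r(4\sigma_\mathrm{P}-\sigma_\mathrm{P_f})/(30v_\mathrm{P}^2)$ yields $a_\mathrm{P}=-30v_\mathrm{P}^2F/\bigl(r(4\sigma_\mathrm{P}-\sigma_\mathrm{P_f})\bigr)$, which is \eqref{eq:utgo2eq}.

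For invariance, we substitute $u_\mathrm{t}$ back into the control-affine expression. The drift $F$ is cancelled exactly, so $\dot s_\mathrm{t}\equiv0$, and any trajectory with $s_\mathrm{t}=0$ therefore remains on the manifold.

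The main obstacle is well-posedness rather than algebra. The division requires $r\neq0$ and $4\sigma_\mathrm{P}\neq\sigma_\mathrm{P_f}$, and this plays the role that $\sigma_\mathrm{P}\notin\{0,\pi/2,\pi\}$ plays in \Cref{rem:nonsingularity}. I would state explicitly that the lemma holds on the region where this gain is nonzero, and treat the isolated crossings as in the discussion following \Cref{thm:leadangle}. A further subtlety concerns $\dot\sigma_\mathrm{P_f}$. It should equal $-\dot\theta$, because $\gamma_\mathrm{P_f}$ is a prescribed constant, whereas the intermediate step in \Cref{lem:tgo_2} appears to attach $a_\mathrm{P}$ to the $\sigma_\mathrm{P_f}$ term. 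I would follow \eqref{eq:tgo_2_dot_fin} as stated, since the lemma is explicitly conditioned on those dynamics.
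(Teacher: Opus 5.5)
Your proposal is correct and takes the route the paper intends. The paper states this lemma without proof, saying only that it repeats the argument of \Cref{lem:steq}: set $\dot s_\mathrm{t}=1+\dot t_\mathrm{go}=0$ with $\dot t_\mathrm{go}$ from \eqref{eq:tgo_2_dot_fin}, then solve using $\tfrac{2\sigma_\mathrm{P}}{15}-\tfrac{\sigma_\mathrm{P_f}}{30}=\tfrac{4\sigma_\mathrm{P}-\sigma_\mathrm{P_f}}{30}$, which is what you do. Your worry about $\dot\sigma_\mathrm{P_f}$ is unfounded but harmless: the term $-\sigma_\mathrm{P_f}a_\mathrm{P}/(30v_\mathrm{P})$ comes from the product-rule piece $-\tfrac{1}{30}\dot\sigma_\mathrm{P}\sigma_\mathrm{P_f}$ with $\dot\sigma_\mathrm{P}=a_\mathrm{P}/v_\mathrm{P}-\dot\theta$, so the paper's intermediate step is fully consistent with $\dot\sigma_\mathrm{P_f}=-\dot\theta$.
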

\Cref{lem:sthetaeq} still holds since the impact angle error and its definition remain unchanged under the adoption of a different time-to-go dynamics. Hence, we present the modified lateral acceleration directly in the next theorem.

\begin{theorem} \label{thm:ap2}
     Consider the interceptor-target engagement kinematics whose relative motion is governed by \eqref{eq:enggeo}, the time-to-go formulation in \eqref{eq:tgo_2}, and the impact angle error \eqref{eq: Error variable 2}. The proposed interceptor's lateral acceleration command,
 \begin{align}
     a_\mathrm{P} &= - \dfrac{\left(1 -\cos \sigma_\mathrm{P}\left(1+ \dfrac{\sigma_\mathrm{P}^2 + \sigma^2_\mathrm{P_f}}{15} - \dfrac{\sigma_\mathrm{P}\sigma_\mathrm{P_f}}{30}\right) + \dfrac{\sin \sigma_\mathrm{P}\left(\sigma_\mathrm{P} + \sigma_\mathrm{P_f} \right)}{10}\right)30v_\mathrm{P}^2}{r\left(4\sigma_\mathrm{P} - \sigma_\mathrm{P_f} \right)} - \bar{\psi}_1\sign(s) -\bar{\psi}_2s \nonumber \\ &- \dfrac{rc_2}{mc_1 \cos \sigma_\mathrm{P}\left(-\dfrac{v_\mathrm{P}\sin \sigma_\mathrm{P}}{r}\right)^{(c_1/c_2 -1)}}\left[\dfrac{v_\mathrm{P}\sin \sigma_\mathrm{P}}{r}+ \dfrac{mc_1 v^2_\mathrm{P}\sin 2 \sigma_\mathrm{P}}{r^2c_2} \left(-\dfrac{v_\mathrm{P}\sin \sigma_\mathrm{P}}{r}\right)^{(c_1/c_2 -1)}\right],\label{eq:aP2}
 \end{align}
 where
 \begin{align}
     \bar{\psi}_1 =\dfrac{\psi_1 }{\chi_2 + \left(-\frac{mc_1 \cos \sigma_\mathrm{P}}{rc_2}\right)\left(-\frac{v_\mathrm{P}\sin \sigma_\mathrm{P}}{r} \right)^{(c_1/c_2 -1)} + \lambda \left(\chi_1 +\dfrac{\lambda r\left(4\sigma_\mathrm{P}-\sigma_\mathrm{P_f}\right)}{30v_\mathrm{P}^2} \right)}, \label{eq:psi1bar} \\
      \bar{\psi}_2 =\dfrac{\psi_2}{\chi_2 + \left(-\frac{mc_1 \cos \sigma_\mathrm{P}}{rc_2}\right)\left(-\frac{v_\mathrm{P}\sin \sigma_\mathrm{P}}{r} \right)^{(c_1/c_2 -1)} + \lambda \left(\chi_1 +\dfrac{\lambda r\left(4\sigma_\mathrm{P}-\sigma_\mathrm{P_f}\right)}{30v_\mathrm{P}^2} \right)}, \label{eq:psi2bar} 
 \end{align}
   are adaptive gains such that the design parameters satisfy $\psi_1,\psi_2>0$, $\chi_1,\chi_2\ge0$, ensures that the target is intercepted at the prescribed impact time and impact angle.
 \end{theorem}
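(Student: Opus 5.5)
The plan is to repeat the argument of \Cref{theorem 1} step by step, with the time channel's control gain and drift replaced by those from \Cref{lem:tgo_2}. By \eqref{eq:tgo_2_dot_fin}, the dynamics of the sub-sliding surface \eqref{eq: time-to-go error} become
\begin{align*}
\dot{s}_\mathrm{t} = f_\mathrm{t} + g_\mathrm{t}a_\mathrm{P}, \quad g_\mathrm{t} = \dfrac{r\left(4\sigma_\mathrm{P}-\sigma_\mathrm{P_f}\right)}{30v_\mathrm{P}^2},
\end{align*}
where $f_\mathrm{t}$ is the drift term. \Cref{lem:stgo2} then gives $u_\mathrm{t}=-f_\mathrm{t}/g_\mathrm{t}$. \Cref{lem:sthetaeq} is unchanged and gives $u_\theta$. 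I would first check that the last bracketed term in \eqref{eq:aP2} is exactly $u_\theta$ rewritten from \eqref{eq: Dynamics of S_theta}. With this identification, the command is $a_\mathrm{P}=u_\mathrm{t}+u_\theta+u_\mathrm{c}$ with $u_\mathrm{c}=-\bar{\psi}_1\sign(s)-\bar{\psi}_2 s$, exactly as in \eqref{eq: a_P form}.

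Next I take the composite surface $s=\lambda s_\mathrm{t}+s_\theta$ and the Lyapunov candidate $V=s^2/2$. Write $g_\theta$ for the input gain of $s_\theta$ in \eqref{eq: Dynamics of S_theta}. Substituting the decomposed command, both drifts cancel against their equivalent controls, and $\dot V$ takes the form of \eqref{eq:v_dot SS_dot} with $g_\mathrm{t}$ in place of $r\sin2\sigma_\mathrm{P}/(Kv_\mathrm{P}^2)$:
\begin{align*}
\dot V = s\left[\dot\lambda s_\mathrm{t} + \lambda g_\mathrm{t}u_\theta + g_\theta u_\mathrm{t} + \left(\lambda g_\mathrm{t}+g_\theta\right)u_\mathrm{c}\right].
\end{align*}
I then choose the adaptation law \eqref{eq: adaptive law} with the same substitution of $g_\mathrm{t}$ and with the regularizers $\chi_1,\chi_2,\Omega$. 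This makes $\dot\lambda s_\mathrm{t}$ cancel the cross terms $\lambda g_\mathrm{t}u_\theta+g_\theta u_\mathrm{c}$-free coupling $\lambda g_\mathrm{t}u_\theta+g_\theta u_\mathrm{t}$, in the sense of \eqref{eq: lambda_dot}. The denominators of $\bar\psi_1,\bar\psi_2$ in \eqref{eq:psi1bar}--\eqref{eq:psi2bar} are the regularized version of $\lambda g_\mathrm{t}+g_\theta$, so $\dot V=-\psi_1|s|-\psi_2 s^2$ as in \eqref{eq: Lyapunov Derivative}. This takes the denominator to be equal to the effective input gain, as was done in \Cref{theorem 1}. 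I would keep that convention and simply quote the $\chi_i$ as the margin that absorbs the mismatch.

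From $\dot V=-\psi_1|s|-\psi_2 s^2$, the $\mathcal{L}_1\cap\mathcal{L}_2\cap\mathcal{L}_\infty$ bounds and Barbalat's lemma give $s\to0$, word for word as before. Actually the bound $\dot V\le-\psi_1\sqrt{2V}$ yields finite-time reaching at some $t_\alpha$. The sub-surfaces are then handled by the same two-phase argument. On $[0,t_\alpha]$ I use the truncation operator $h_{t_\alpha}$. On $(t_\alpha,\infty)$ I use the LaSalle argument leading to \eqref{eq: final step}, supplemented by \Cref{rem:s_rel}. The remark applies with $\dot s_\mathrm{t}=g_\mathrm{t}u_\theta$, and with boundedness of $\lambda$ it gives $s_\theta=-\lambda s_\mathrm{t}\to0$. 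Finally, $s_\mathrm{t}=0$ gives $t_\mathrm{go}=t_\mathrm{d}-t$. Since $t_\mathrm{go}=0\iff r=0$ (the bracket in \eqref{eq:tgo_2} is positive), interception occurs at $t_\mathrm{d}$. Likewise $s_\theta=0$ gives finite-time convergence of $e_\theta$, which fixes the impact angle.

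The main obstacle is nonsingularity, as in \Cref{rem:nonsingularity}. Here the gain $g_\mathrm{t}$ vanishes when $4\sigma_\mathrm{P}=\sigma_\mathrm{P_f}$ instead of at $\sigma_\mathrm{P}\in\{0,\pi/2\}$, and the impact-angle channel still requires $\cos\sigma_\mathrm{P}\neq0$ and $\sin\sigma_\mathrm{P}\neq0$. I would try to mimic \Cref{thm:leadangle}. First I would show that $4\sigma_\mathrm{P}=\sigma_\mathrm{P_f}$ cannot be an equilibrium of the closed-loop $\dot\sigma_\mathrm{P}$, since that would force $r$ to a constant, which contradicts $\dot r<0$. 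Then I would treat such crossings as isolated transient points where the bounded physical acceleration makes them harmless, and reuse the small-angle $r$--$\sigma_\mathrm{P}$ relation to argue that $\sigma_\mathrm{P}\to0$ only as $r\to0$.
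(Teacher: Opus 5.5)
Your route is the paper's own: its proof of this theorem writes $a_\mathrm{P}=u_\mathrm{t}+u_\theta+u_\mathrm{c}$ with $u_\mathrm{t}$ from \Cref{lem:stgo2}, reuses $V=s^2/2$, and defers to \Cref{theorem 1}. Your bookkeeping is correct: $g_\mathrm{t}=r(4\sigma_\mathrm{P}-\sigma_\mathrm{P_f})/(30v_\mathrm{P}^2)$, the last term of \eqref{eq:aP2} is $u_\theta$, and the bracket in \eqref{eq:tgo_2} is positive. The step you flag and then wave through, however, is a genuine gap. From $\dot V=s[\dot\lambda s_\mathrm{t}+\lambda g_\mathrm{t}u_\theta+g_\theta u_\mathrm{t}+(\lambda g_\mathrm{t}+g_\theta)u_\mathrm{c}]$, the adaptation law \eqref{eq: adaptive law} (with $g_\mathrm{t}$ substituted) yields $\dot\lambda s_\mathrm{t}=-(\lambda g_\mathrm{t}u_\theta+g_\theta u_\mathrm{t})$ in general only when $\chi_1=\chi_2=\Omega=0$. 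In that case $\dot\lambda=-(\lambda g_\mathrm{t}u_\theta+g_\theta u_\mathrm{t})/s_\mathrm{t}$ blows up as $s_\mathrm{t}\to0$. Even granting the cancellation, $\dot V=-\frac{\lambda g_\mathrm{t}+g_\theta}{D}\left(\psi_1|s|+\psi_2s^2\right)$, where $D$ is the denominator of $\bar\psi_{1,2}$. As printed in \eqref{eq:psi1bar}--\eqref{eq:psi2bar}, $D=\chi_2+g_\theta+\lambda\chi_1+\lambda^2g_\mathrm{t}$ (note the extra $\lambda$), so it is not a regularized $\lambda g_\mathrm{t}+g_\theta$. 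The ratio therefore differs from $1$ even for $\chi_1=\chi_2=0$, and its sign is not controlled. Calling the $\chi_i$ a ``margin'' does not repair this. You would need three things: a uniform bound $\frac{\lambda g_\mathrm{t}+g_\theta}{D}\ge\rho_0>0$; a bound on the residual $\dot\lambda s_\mathrm{t}+\lambda g_\mathrm{t}u_\theta+g_\theta u_\mathrm{t}$ (which survives when $\chi_i,\Omega>0$) that $\psi_1\rho_0$ dominates; and a proof that $\lambda g_\mathrm{t}+g_\theta$ (or $D$) never vanishes. Your nonsingularity paragraph does not cover that last singularity, since it treats $g_\mathrm{t}=0$ and $g_\theta=0$ only separately.

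The second gap is the passage from $s\to0$ to $s_\mathrm{t},s_\theta\to0$, which is exactly where the underactuation bites. The manifold $\{s=0\}$ is itself invariant under the sliding-mode control, so LaSalle's principle applied to $V(s)$ returns the whole line $s_\theta=-\lambda s_\mathrm{t}$, not its origin. On that line, under your cancellation premise (equivalent value of $u_\mathrm{c}$ equal to zero), the reduced dynamics are $\dot s_\mathrm{t}=g_\mathrm{t}u_\theta$. The right-hand side is a function of $(r,\theta,\sigma_\mathrm{P})$ only, while $s_\mathrm{t}$ also depends on $t$ through $t_\mathrm{d}-t$, so nothing forces $s_\mathrm{t}\dot s_\mathrm{t}<0$. \Cref{rem:s_rel} does not supply this: it imposes $s_\mathrm{t}\dot s_\mathrm{t}\le-\varrho_\mathrm{t}s_\mathrm{t}^2$ and $\sup|\lambda|\le\bar\lambda$ as hypotheses, so invoking it assumes the conclusion. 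Finally, even exponential decay of $s_\mathrm{t}$ would not give the theorem. Because $t_\mathrm{go}(t_\mathrm{f})=0$, the impact-time error is exactly $s_\mathrm{t}(t_\mathrm{f})=t_\mathrm{f}-t_\mathrm{d}$, and a $t\to\infty$ statement says nothing about an engagement that ends at a finite $t_\mathrm{f}$. You need $s_\mathrm{t}$ to become identically zero, and then $s_\theta$ and $e_\theta$ to vanish, strictly before $r$ does. The paper's one-line proof inherits all of these issues from \Cref{theorem 1}, so you are not missing an idea it supplies. As it stands, though, neither argument establishes the statement.
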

\begin{proof}
   Similar to the proof of \Cref{theorem 1}, one has
\begin{equation}\label{eq:a_Pform_tgo2}
    a_\mathrm{P} = u_\mathrm{t} + u_\theta + u_\mathrm{c},
\end{equation}
where $u_\mathrm{t}$ is obtained from \Cref{lem:stgo2}, and $u_\theta$ is the same as given in \Cref{lem:sthetaeq}. The corrective term $u_\mathrm{c}$ is also designed to have the same structure as in the proof of \Cref{theorem 1}. Thereafter, one may choose the same Lyapunov function candidate \eqref{eq: Lyapunov candidate function} and follow the procedures outlined in the proof of \Cref{theorem 1} to obtain the modified command \eqref{eq:aP2} under the time-to-go \eqref{eq:tgo_2}.
\end{proof}

The proposed strategy may be extended for the impact time- and angle-constrained interception of a moving non-maneuvering target by leveraging the concept of predicted interception point (PIP) \cite{doi:10.2514/1.G007122}. This point is a virtual or predicted point at which the target is expected to be intercepted. The interceptor steers its heading towards the PIP, where it perceives the moving target as stationary. If the actual position of the target coincides with the PIP at the desired impact time, then the target is guaranteed to be captured. In \Cref{fig:PIP_figure}, the actual relative range between the target and the interceptor is represented by $r$. However, the interceptor aims to capture the moving target at P, whose relative range is given by $r_\mathrm{P}$, and the corresponding LOS angle in this direction is predicted LOS. The position of the target at PIP for the desired impact time $t_\mathrm{d}$ can be written as
 \begin{subequations}
    \begin{align} \label{eq: PIP coordinates}
        x_\mathrm{T}(t_\mathrm{d}) =  x_\mathrm{T}(t) + v_\mathrm{T}\cos{\gamma_\mathrm{T}}(t_\mathrm{go}),\\
        y_\mathrm{T}(t_\mathrm{d}) =  y_\mathrm{T}(t) + v_\mathrm{T}\sin{\gamma_\mathrm{T}}(t_\mathrm{go}),
    \end{align}
\end{subequations}
where $\left(x_\mathrm{T}(t), y_\mathrm{T}(t)\right)$ and $\left(x_\mathrm{T}(t_\mathrm{d}), y_\mathrm{T}(t_\mathrm{d})\right)$ represent the target coordinates at time instants $t$ and $t_\mathrm{d}$, and $\sigma_\mathrm{T}=\gamma_\mathrm{T}-\theta$ is the lead angle of the target. 
\begin{figure}[h!] 
\centering
\resizebox{.5\linewidth}{!}{%
\begin{tikzpicture}[
    scale=5,
    >=Stealth,
    line cap=round,
    line join=round,
    every node/.style={font=\small},
    los/.style={very thick},
    tgtvel/.style={very thick, red!75!black, -Stealth},
    predlos/.style={very thick, dashed, blue!70!black, -Stealth},
    extline/.style={semithick, dashed, violet!80!black, -Stealth},
    point/.style={circle, fill=black, inner sep=0.8pt},
    lab/.style={fill=none, inner sep=1.2pt, outer sep=1pt}
]

\coordinate (P0) at (0,0);
\coordinate (T)  at (1.10,0.30);
\coordinate (Pp) at (1.60,1.20);
\coordinate (XL) at ($(T)+(1.00,0.30)$);

\draw[los]     (P0) -- (T)  node[midway, below left=1pt, lab] {$r$};
\draw[tgtvel]  (T)  -- (Pp) node[midway, right=2pt, lab] {$v_\mathrm{T}t_\mathrm{go}$};
\draw[predlos] (P0) -- (Pp) node[pos=0.48, left=2pt, lab] {$r_\mathrm{p}$};
\draw[extline] (T)  -- (XL) node[pos=0.82, above=2pt, lab] {$X_\mathrm{L}$};

\node[point, label={[xshift=-2pt,yshift=-2pt]below left:{$\mathrm{P}$}}] at (P0) {};
\node[point, label={[xshift=2pt,yshift=-1pt]below right:{$\mathrm{T}$}}] at (T) {};
\node[point, label={[yshift=2pt]above:{PIP}}] at (Pp) {};

\draw[->, semithick]
    let
        \p1 = ($(XL)-(T)$),
        \p2 = ($(Pp)-(T)$),
        \n1 = {atan2(\y1,\x1)},
        \n2 = {atan2(\y2,\x2)}
    in
    (T) ++(\n1:0.22) arc[start angle=\n1, end angle=\n2, radius=0.22];

\node[lab] at ($(T)+(0.34,0.17)$) {$\gamma_\mathrm{T}$};

\node[lab, text=blue!70!black] at ($(P0)!0.53!(Pp)+(-0.27,0.17)$) {Predicted LOS};

\end{tikzpicture}
}
\caption{Illustration of the PIP for interception of a moving target.}\label{fig:PIP_figure}
\end{figure}

\section{Simulation Results}

We now demonstrate the performance of the proposed approach via simulations. Without loss of generality, we assume that the heading of the target is $0^\circ$. The interceptor's speed remains fixed at $150$ m/s, and the lateral acceleration bound on the interceptor is $|20|$ g, where g is the acceleration due to gravity. In all trajectory plots, the initial positions of the target and the interceptor are depicted by red and black diamond markers  $(\diamond)$, respectively. The navigation constant $N$ is selected as $3$, thus $K=10$.

The effectiveness of the proposed framework against a stationary target is shown in \Cref{fig:Results_com} for various different initial and terminal conditions. The trajectories of the interceptor are depicted in \Cref{fig:traj_com}. It can be observed that by adopting the guidance command \eqref{eq:lateral  acceleration}, the interceptor captures the target successfully in all cases. \Cref{fig:tgo_com} illustrates that the interceptor follows the required trajectories to intercept the target at the desired impact time. The time-to-go profiles exhibit initial deviation, which arises from their trajectories initially deviating and subsequently converging toward the desired path (as shown in \Cref{fig:traj_com}). This signifies that the interceptor takes a detour necessary to attain the requisite time-constrained geometry. The interceptor's lateral acceleration profiles are demonstrated in \Cref{fig:aP_com}. It is apparent from \Cref{fig:aP_com} that during the transient phase, the interceptor requires high lateral acceleration to achieve course correction, followed by a smooth convergence to zero in the terminal phase. As shown in \Cref{fig:snet_com}, the LOS angles converge to their desired values in the terminal phase, whereas sliding mode is enforced on the composite sliding surfaces within around $20$ s in all cases.  The lead angles in \Cref{fig:sig_com} show initial variations due to initial detour, but then decrease to zero monotonically after sliding mode is enforced. This is consistent with the results established in \Cref{thm:leadangle}. The range profiles are demonstrated in \Cref{fig:r_com}. It is apparent that the individual ranges converge to zero at the desired impact time, thereby satisfying the impact-time constraint. Note that both range and the lead angle are zero at the impact time.

\begin{figure}[ht!]
  \centering
  
  \begin{subfigure}[t]{.49\linewidth}
    \centering
    \includegraphics[width=\linewidth]{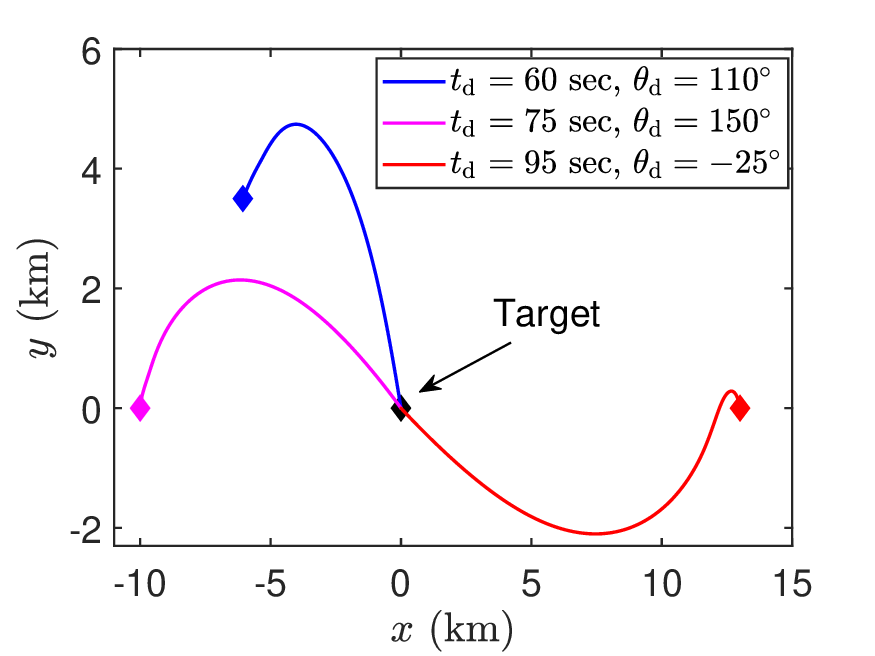}
    \caption{Interceptor's trajectories.}
    \label{fig:traj_com}
  \end{subfigure} 
  \begin{subfigure}[t]{0.49\linewidth}
    \centering
    \includegraphics[width=\linewidth]{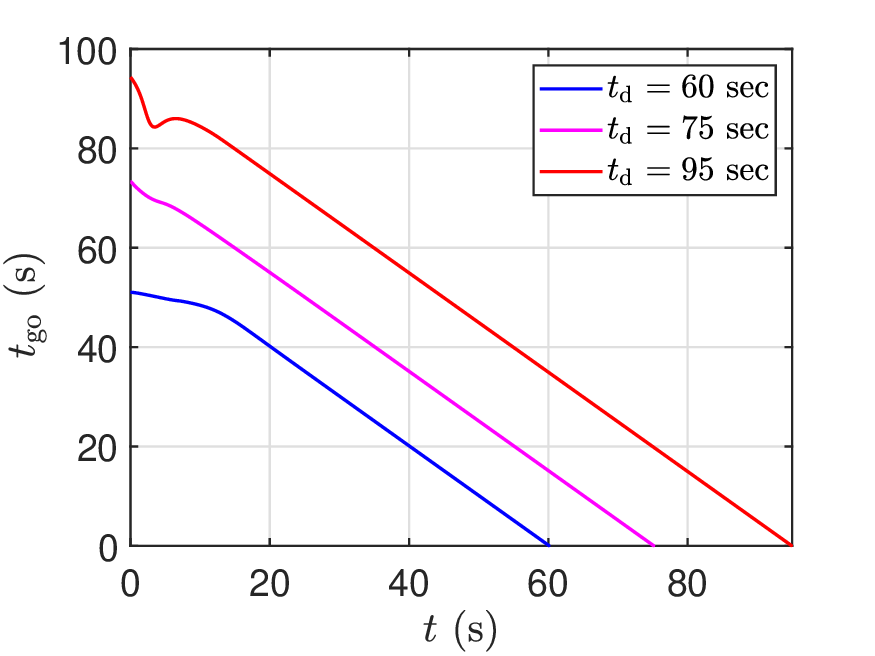}
    \caption{Time-to-go profiles.}
    \label{fig:tgo_com}
  \end{subfigure}
  \begin{subfigure}[t]{.49\linewidth}
    \centering
    \includegraphics[width=\linewidth]{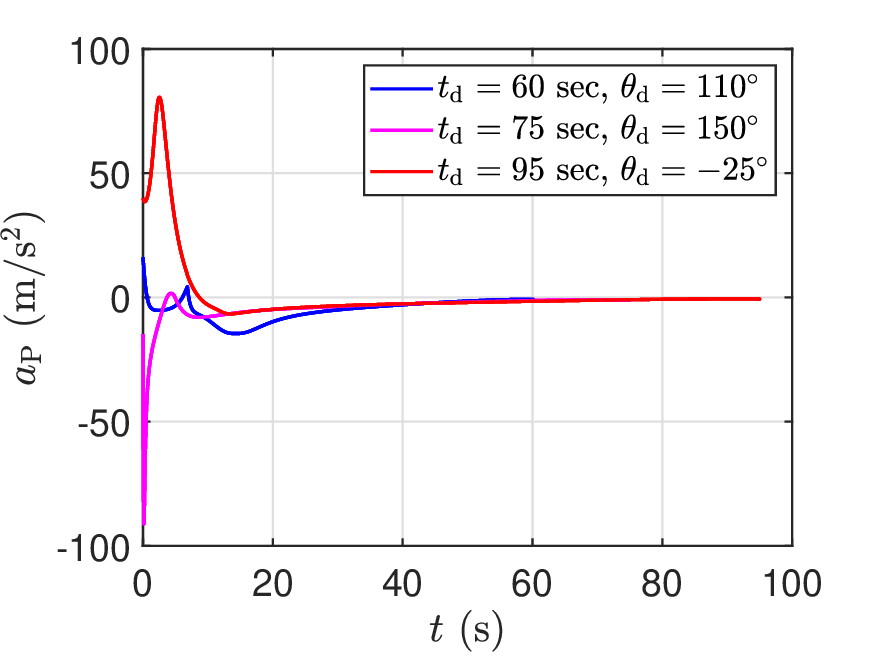}
    \caption{Lateral acceleration profiles.}
\label{fig:aP_com}
  \end{subfigure}
  \begin{subfigure}[t]{.49\linewidth}
    \centering
    \includegraphics[width=\linewidth]{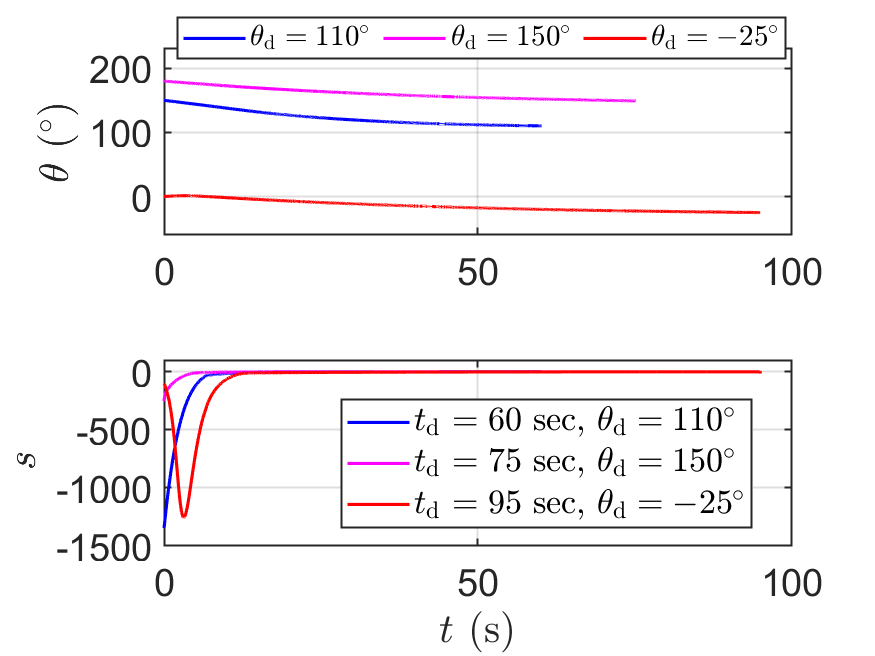}
    \caption{ LOS angles and composite sliding manifold profiles.}
    \label{fig:snet_com}
  \end{subfigure}
   \begin{subfigure}[t]{.49\linewidth}
    \centering
    \includegraphics[width=\linewidth]{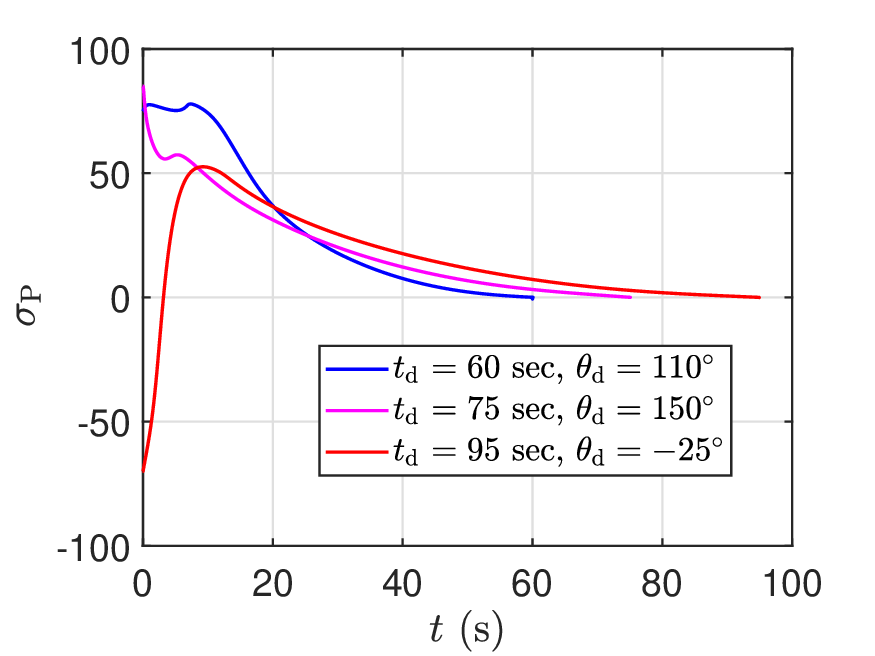}
    \caption{Lead angle profiles.}
    \label{fig:sig_com}
  \end{subfigure}
   \begin{subfigure}[t]{.49\linewidth}
    \centering
    \includegraphics[width=\linewidth]{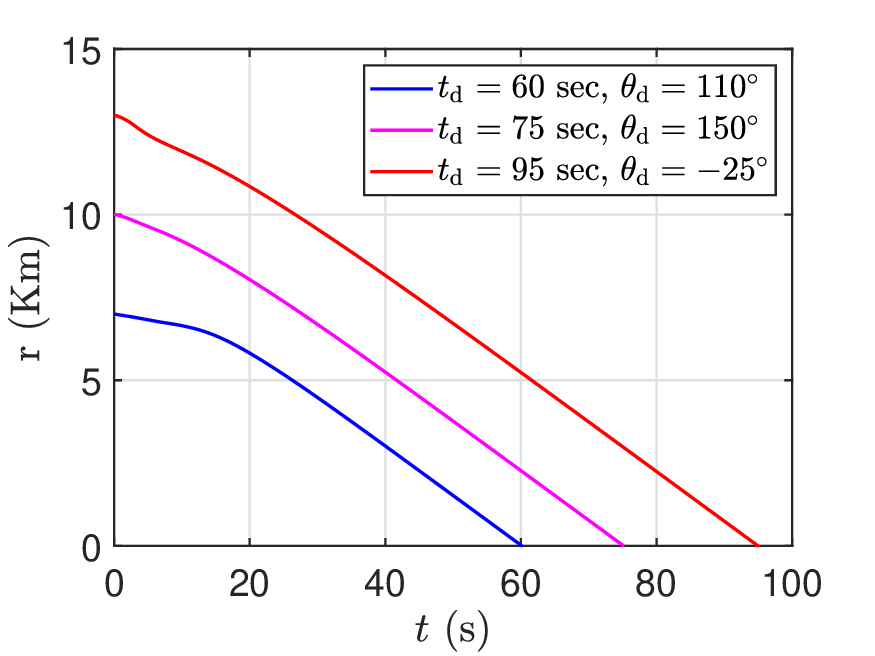}
    \caption{Range profiles.}
    \label{fig:r_com}
  \end{subfigure}
  \caption{Performance of the proposed strategy against a stationary target.}
  \label{fig:Results_com}
\end{figure}

Next, simulations are performed for the interception of a non-maneuvering target. The target's heading, impact time, and impact angle are fixed at $0 ^\circ$, $100$ s, and $135 ^\circ$, respectively. The target is moving with a speed of $65$ m/s. The initial range and the LOS angle between the interceptor and the target are $7$ km and $150^\circ$, respectively, while the initial heading of the interceptor is $45^\circ$. The design parameters are $c_1=11, c_2=9, m=185,\eta=75, \kappa=1.8, \chi_1=\chi_2=0.2$, and $\Omega=5$. The simulation results are demonstrated in \Cref{fig:Results_moving}. From the agents' trajectories plot in \Cref{fig:traj_moving}, it is evident that the moving target is intercepted successfully, indicating the effectiveness of the proposed approach. The time-to-go profile in \Cref{fig:time_to_go_moving} confirms that the interceptor achieves the target interception at the desired impact time. The control input of the interceptor is portrayed in \Cref{fig:acceleration_moving}, indicating a higher control effort requirement in the transient phase for course correction, followed by a smooth convergence to zero in the steady state. The LOS angle and the composite sliding manifold profiles are shown in \Cref{fig:errors_moving}. The LOS angle converges to its desired value monotonically to help achieve the desired impact angle. The composite sliding surface increases from a negative value and ultimately converges to zero smoothly. As shown in \Cref{fig:sig_moving}, the lead angle initially increases as the interceptor detours, followed by a smooth convergence to zero during the terminal phase. The range profile depicted in \Cref{fig:r_moving} decreases to zero as the lead angle converges to zero.

\begin{figure}[ht!]
  \centering
  \begin{subfigure}[t]{.49\linewidth}
    \centering
    \includegraphics[width=\linewidth]{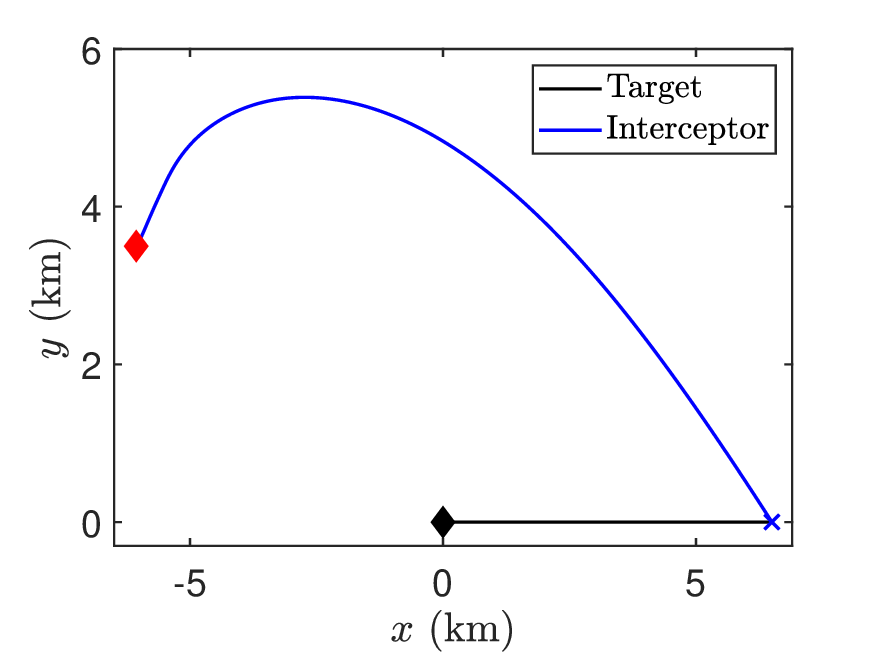}
    \caption{Agents' trajectories.}
    \label{fig:traj_moving}
  \end{subfigure}%
  \begin{subfigure}[t]{0.49\linewidth}
    \centering
    \includegraphics[width=\linewidth]{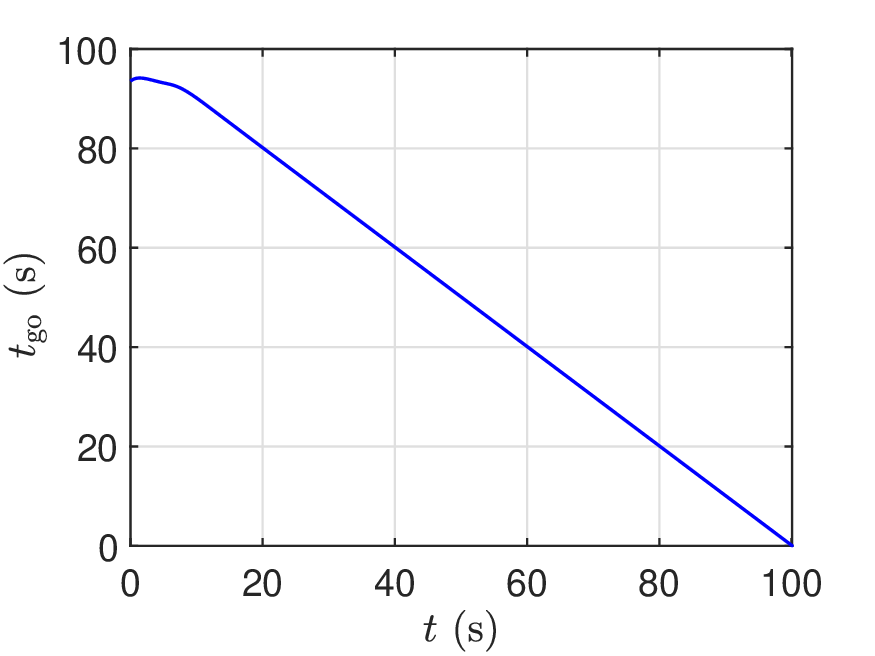}
    \caption{Time-to-go profile.}
    \label{fig:time_to_go_moving}
  \end{subfigure}
  \begin{subfigure}[t]{.49\linewidth}
    \centering
    \includegraphics[width=\linewidth]{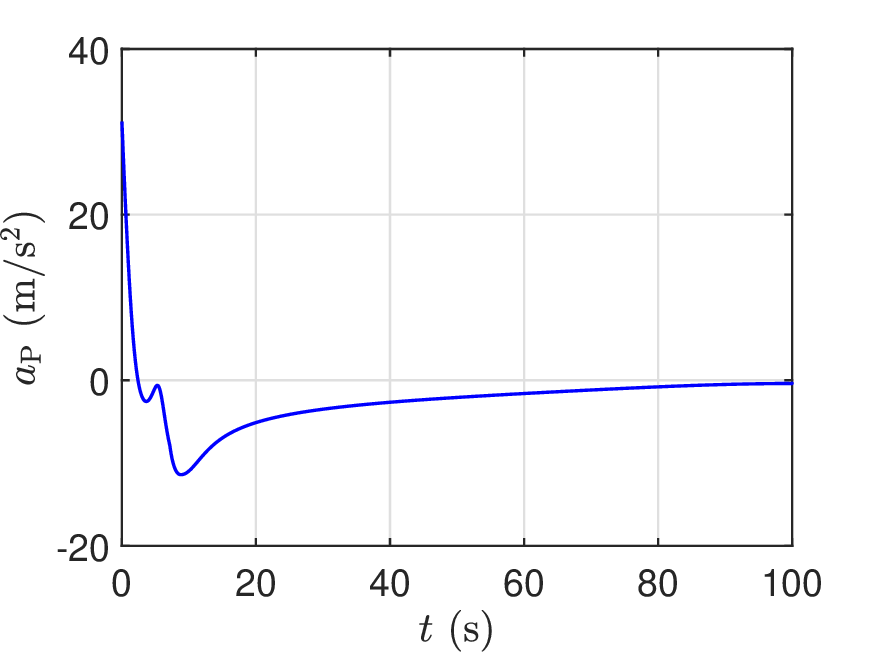}
    \caption{Lateral acceleration (steering control).}
    \label{fig:acceleration_moving}
  \end{subfigure}
  \begin{subfigure}[t]{.49\linewidth}
    \centering
    \includegraphics[width=\linewidth]{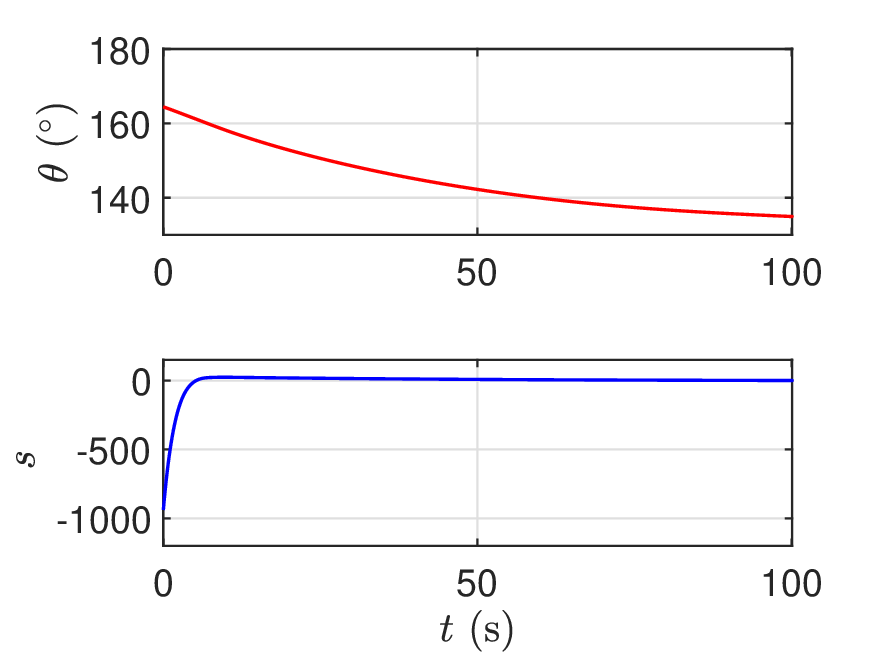}
    \caption{LOS angle and composite sliding manifold profiles.}
    \label{fig:errors_moving}
  \end{subfigure}
  \begin{subfigure}[t]{.49\linewidth}
    \centering
    \includegraphics[width=\linewidth]{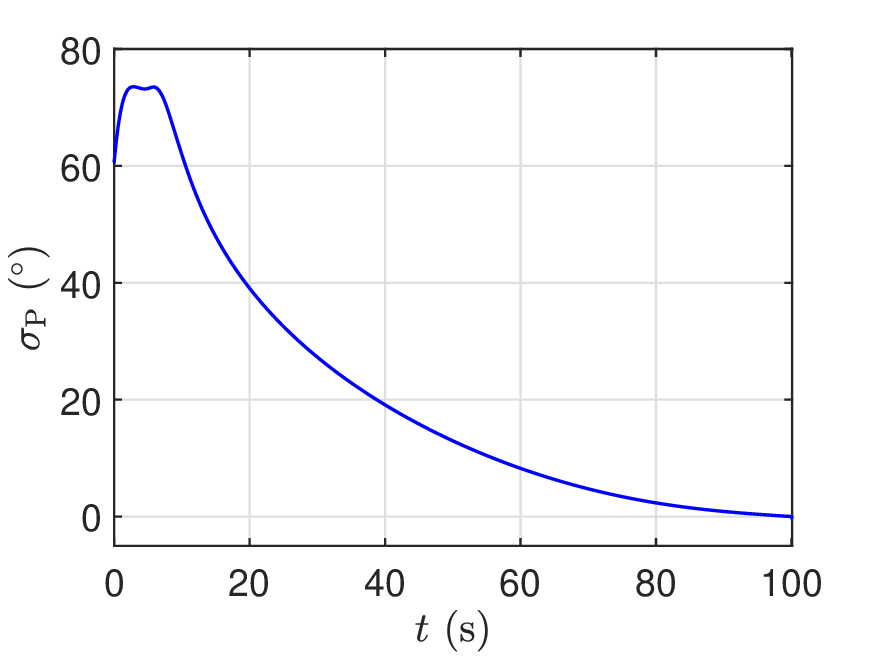}
    \caption{Lead angle profile.}
    \label{fig:sig_moving}
  \end{subfigure}
  \begin{subfigure}[t]{.49\linewidth}
    \centering
    \includegraphics[width=\linewidth]{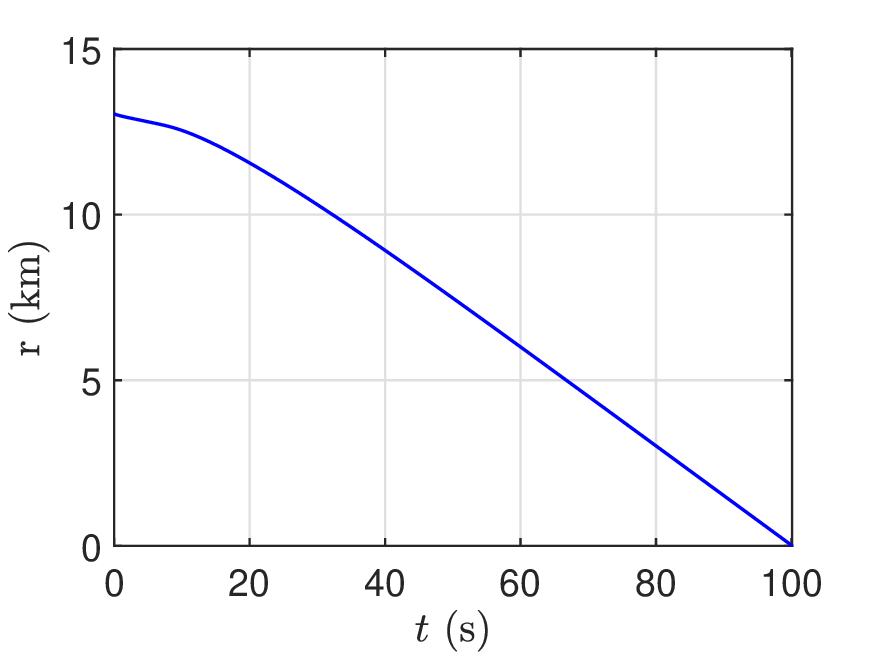}
    \caption{Range profile.}
    \label{fig:r_moving}
  \end{subfigure}
  \caption{Performance of the proposed strategy against a moving target.}
  \label{fig:Results_moving}
\end{figure}
\begin{figure}[ht!]
  \centering
  \begin{subfigure}[t]{.49\linewidth}
    \centering
    \includegraphics[width=\linewidth]{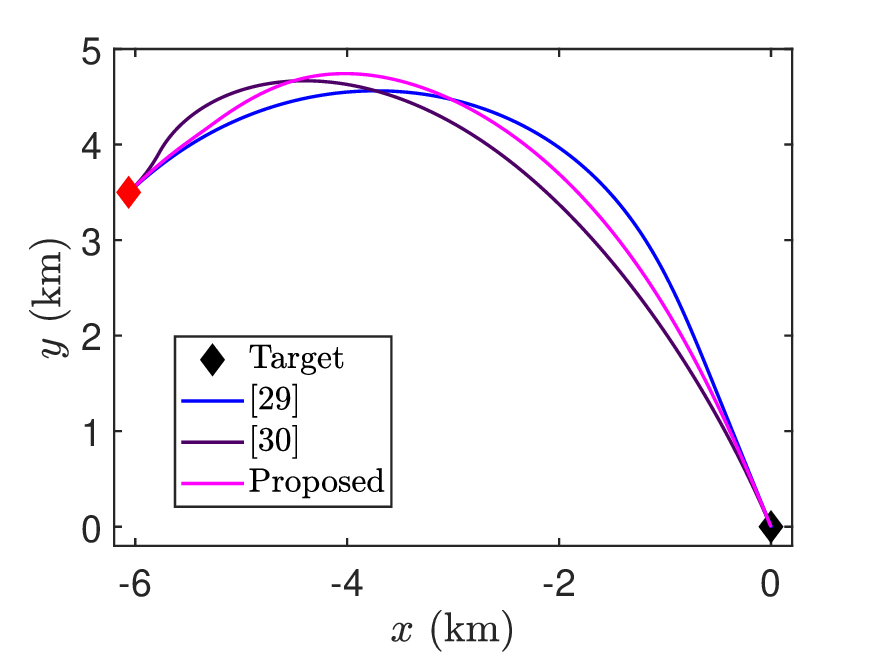}
     \caption{Trajectories.}
    \label{fig:without_auto_traj}
  \end{subfigure}
  \begin{subfigure}[t]{.49\linewidth}
    \centering
    \includegraphics[width=\linewidth]{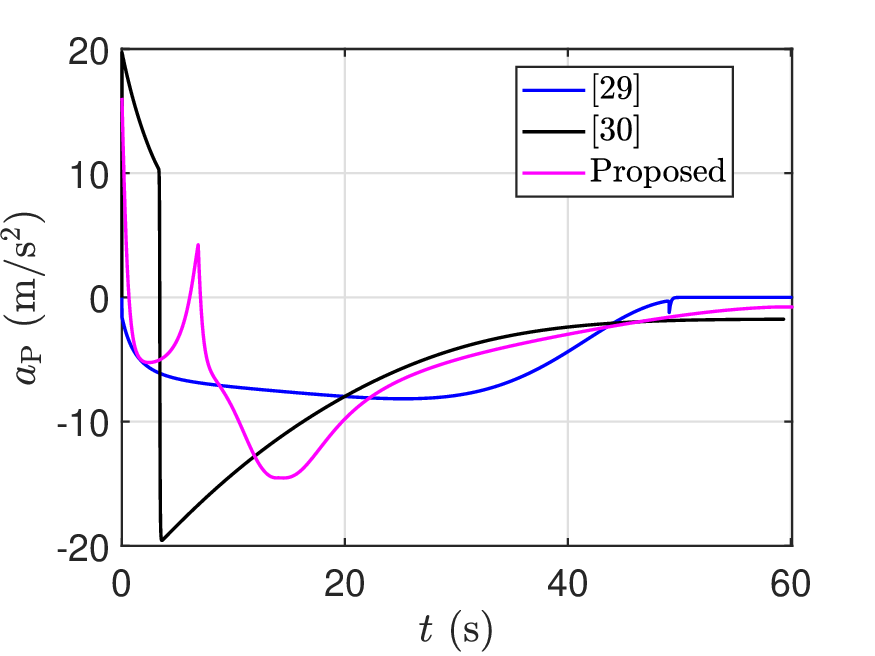}
    \caption{Lateral accelerations.}
    \label{fig:without_auto_acc}
  \end{subfigure}
  \begin{subfigure}[t]{.49\linewidth}
    \centering
    \includegraphics[width=\linewidth]{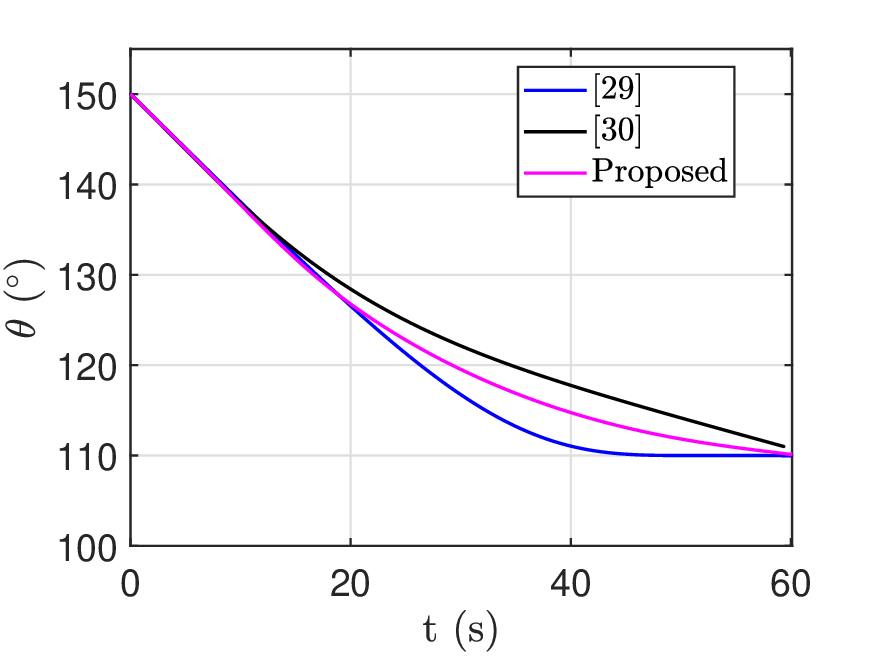}
     \caption{LOS angles.}
     \label{fig:without_auto_theta}
  \end{subfigure}
  \begin{subfigure}[t]{.49\linewidth}
    \centering
    \includegraphics[width=\linewidth]{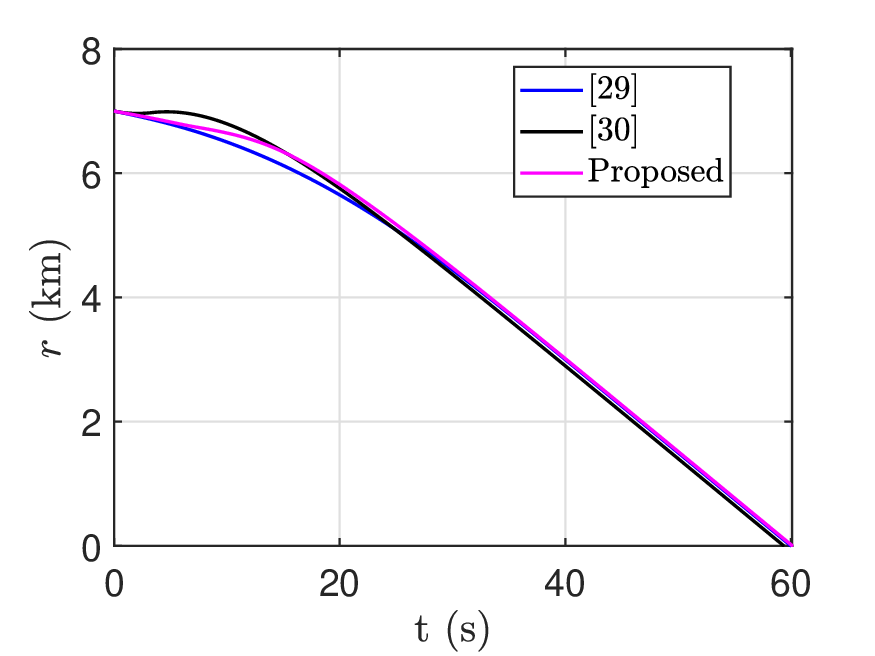}
     \caption{Range profiles.}
    \label{fig:without_auto_range}
  \end{subfigure}
  \begin{subfigure}[t]{.49\linewidth}
    \centering
    \includegraphics[width=\linewidth]{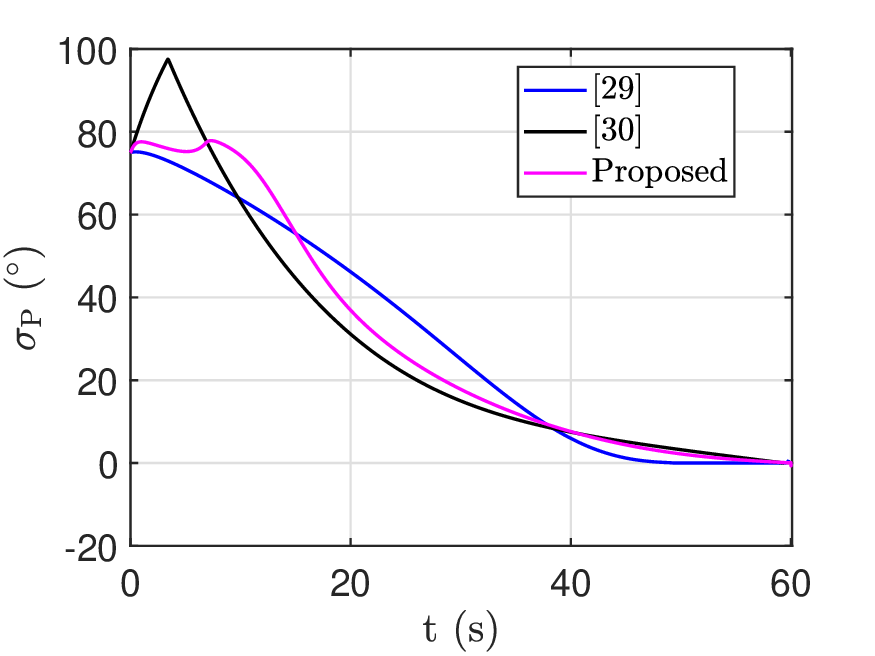}
     \caption{Lead angles' profiles.}
    \label{fig:without_auto_sig}
  \end{subfigure}
  \begin{subfigure}[t]{.49\linewidth}
    \centering
    \includegraphics[width=\linewidth]{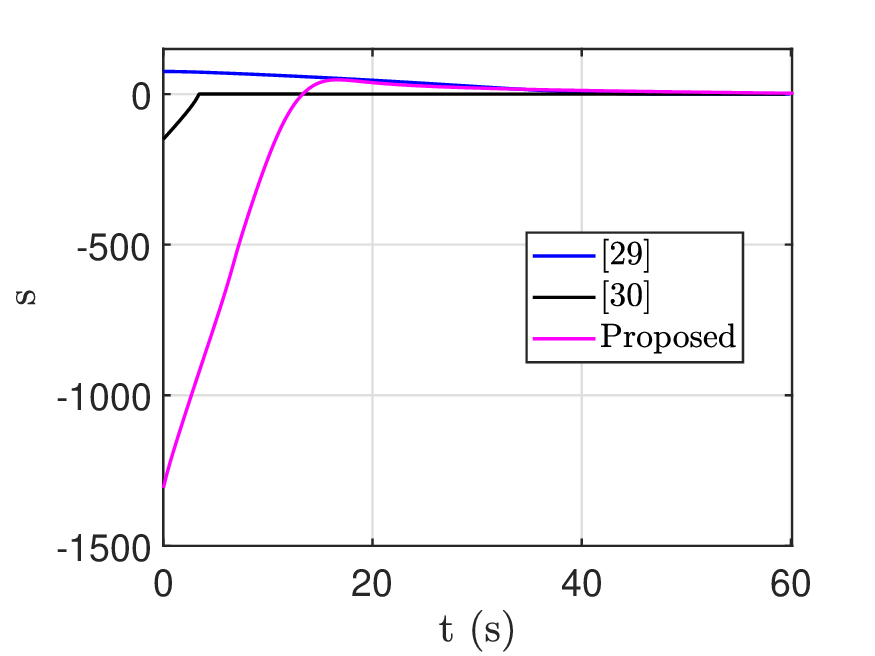}
     \caption{Sliding manifold profiles.}
    \label{fig:without_auto_s}
  \end{subfigure}
  \caption{Performance comparison of the proposed guidance strategy with the existing guidance strategies.}
   \label{fig:without_auto_comparison}
\end{figure}

The performance of the proposed strategy is also compared with the ones in \cite{Chen_Wang} and \cite{doi:10.1177/09544100211029817}, and the corresponding results are shown in \Cref{fig:without_auto_comparison}. The initial range and the LOS angle between the interceptor and the target are $7$ km and $150^\circ$, respectively, while the initial heading of the interceptor is $45^\circ$. The desired impact time and impact angle are set to $60$ s and $110^\circ$. All simulations are performed under the same conditions. It is evident from \Cref{fig:without_auto_traj} that, under all guidance strategies, the target is intercepted successfully, though the interceptor's trajectories are different. The lateral acceleration profiles are depicted in \Cref{fig:without_auto_acc}. It can be observed that the guidance strategy proposed in \cite{doi:10.1177/09544100211029817} results in a sharp drop in lateral acceleration at the time instant when the sliding mode is enforced, which is due to the time-varying weight assigned in the beginning to allow a correction to the impact time. Such correction is unnecessary in the proposed design. One may also notice that both the proposed strategy and that in \cite{Chen_Wang} have smooth lateral acceleration profiles. Furthermore, the acceleration demand is compared based on the control effort calculation using $\displaystyle\int_{t_0}^{t_\mathrm{f}} a_\mathrm{P}^2 \, dt$. Based on this integral-of-square effort, the lateral acceleration demand using the strategy in \cite{Chen_Wang} is $157 \%$ higher than the proposed strategy, whereas the control demand using \cite{doi:10.1177/09544100211029817} is $13\%$
higher. 

The LOS angle profiles are depicted in \Cref{fig:without_auto_theta}. It can be observed that the LOS angle profile converges to its desired value faster using the strategy in \cite{Chen_Wang}, while for the proposed strategy and the strategy presented in \cite{doi:10.1177/09544100211029817}, the LOS profiles converge to their desired value in the terminal phase to meet the impact angle constraint. Forcing angle correction early may require larger lateral acceleration demand, and could be too stringent. The range profiles of the interceptor are depicted in \Cref{fig:without_auto_range}, indicating that range profiles converge to zero at the prescribed impact time. The proposed approach thus demonstrates effectiveness even under large initial heading errors, successfully caters to non-maneuvering targets, and operates independently of any specific sliding surface or time-to-go formulation. The lead angle profiles are shown in \Cref{fig:without_auto_sig}. It is apparent from the profiles that the lead angle profile converges to zero faster using \cite{Chen_Wang}, as the LOS angle converges to zero, while for the proposed strategy and the strategy presented in \cite{doi:10.1177/09544100211029817}, the lead angles converge to zero in the terminal phase. The sliding manifold profiles are depicted in \Cref{fig:without_auto_s}. The sliding manifold converges to zero within around $5$ s using the strategy in \cite{doi:10.1177/09544100211029817}, while for the proposed strategy and the strategy proposed in \cite{Chen_Wang}, the sliding manifold profiles converge to zero in around $40$ s.

The performance of the proposed guidance strategy is further compared with the ones in \cite{Chen_Wang} and \cite{doi:10.1177/09544100211029817} in the presence of an autopilot modeled as a first-order lag with a time-constant of $0.1$ s. The simulation results for this case are presented in \Cref{fig:with_auto_comparison}. The initial conditions are kept the same as in the previous case of \Cref{fig:without_auto_comparison}. The trajectories of the interceptor are depicted in \Cref{fig:with_auto_traj}, where the interceptor successfully intercepts the target under all guidance strategies under the presence of autopilot lag. The lateral acceleration profiles are shown in \Cref{fig:with_auto_acc}, indicating a similar trend as in \Cref{fig:without_auto_acc}. From the control effort calculation, the lateral acceleration demand for the strategy in \cite{Chen_Wang} is now $132 \%$ higher than the proposed strategy, while the control demand for the strategy in \cite{doi:10.1177/09544100211029817}  is $32 \%$ high. The LOS angle profiles are represented in \Cref{fig:with_auto_theta}, showing a similar behavior as in the absence of autopilot lag (\Cref{fig:without_auto_theta}). The range profiles are demonstrated in \Cref{fig:with_auto_range}, where it converges to zero at the desired impact time under all guidance strategies. The lead angle profiles are shown in \Cref{fig:with_auto_sig}. Similar to the previous case (\Cref{fig:without_auto_sig}), the lead angle profile converges to zero faster using the strategy in \cite{Chen_Wang}, while for the proposed strategy and the strategy presented in \cite{doi:10.1177/09544100211029817}, the lead angles converge to zero in the terminal phase. The sliding manifold profiles are demonstrated in \Cref{fig:with_auto_s}, showing a similar trend as in the absence of autopilot lag (\Cref{fig:without_auto_s}). The performance comparison of control efforts in different guidance laws is presented in \Cref{tab:control_com}.

\begin{table}[ht!]
\centering
\renewcommand{\arraystretch}{2}
\begin{tabular}{m{1.5cm}>{\centering\arraybackslash}m{8.5cm}cc}
\toprule
\textbf{Reference} & \textbf{Guidance Law} & \textbf{\makecell[c]{Control Effort \\ without \\Autopilot (m$^2$/s)}} & \textbf{\makecell[c]{Control Effort \\ with\\ Autopilot (m$^2$/s)}} \\ \midrule
\cite{Chen_Wang} &$\begin{aligned}
    a_\mathrm{P}=&-\dfrac{v_\mathrm{P}^2\sin \sigma_\mathrm{P}}{r} -v_\mathrm{P}k\sign^{c_2/c_1}(s)\\ &+ \dfrac{4v^2_\mathrm{P} r\left(\cos \sigma_\mathrm{P}-1\right)e_\theta}{r^2e_\theta^2 + 4 \left(v_\mathrm{P}t_\mathrm{d}-v_\mathrm{P}t - r\right)^2} \\ &
    \dfrac{+ 4v^2_\mathrm{P}\left(\cos \sigma_\mathrm{P}e_\theta + \sin \sigma_\mathrm{P}\right)\left(v_\mathrm{P}t_\mathrm{d}-v_\mathrm{P}t - r\right)}{r^2e_\theta^2 + 4 \left(v_\mathrm{P}t_\mathrm{d}-v_\mathrm{P}t - r\right)^2} 
\end{aligned}$  & 8008.9 & 6112.8\\ \midrule
\cite{doi:10.1177/09544100211029817} & $\begin{aligned}
a_\mathrm{P}=&- \dfrac{k f_t + \dfrac{\left(\dot{e}_\theta -\dfrac{mc_1}{c_2}\dot{e}_\theta^{c_1/c_2 -1}\dfrac{2 \dot{r}\dot{\theta}}{r} \right)}{t_\mathrm{go}^\mathrm{d}-t_\mathrm{d}}}{\left(k - \dfrac{\left(e_\theta + m \dot{e}_\theta^{c_1/c_2} \right)}{\left(t_\mathrm{go}^\mathrm{d}-t_\mathrm{d}\right)^2}\right)g_t + h_t} \\ & -\dfrac{ - \dfrac{\left(e_\theta + m \dot{e}_\theta^{c_1/c_2} \right)}{\left(t_\mathrm{go}^\mathrm{d}-t_\mathrm{d}\right)^2}\left(f_t -1\right)}{\left(k - \dfrac{\left(e_\theta + m \dot{e}_\theta^{c_1/c_2} \right)}{\left(t_\mathrm{go}^\mathrm{d}-t_\mathrm{d}\right)^2}\right)g_t + h_t}
\end{aligned}$
  $\begin{aligned}
    \text{where }f_t &= 1 + \frac{\dot{r}}{v_\mathrm{P}}\left( 1 + \frac{\sigma_\mathrm{P}^2 + \sigma_\mathrm{P_f}^2}{15} - \frac{\sigma_\mathrm{P}\sigma_\mathrm{P_f}}{30}\right) \\& + \frac{r\dot{\theta}}{v_\mathrm{P}}\left( -\frac{2\left(\sigma_\mathrm{P} + \sigma_\mathrm{P_f}\right)}{15}  + \frac{\sigma_\mathrm{P} + \sigma_\mathrm{P_f}}{30}\right),
\end{aligned}$
$\begin{aligned}
    g_t&= \left(\frac{r}{v_\mathrm{P}^2}\left(\frac{2\sigma_\mathrm{P}}{15}-\frac{\sigma_\mathrm{P_f}}{30} \right) \right)
\end{aligned}$, ~
$\begin{aligned}
    h_t= - \frac{{mc_1}\dot{e}_\theta^{c_1/c_2 -1}\cos \sigma_\mathrm{P}}{rc_2 \left(t_\mathrm{go}^\mathrm{d}-t_\mathrm{d}\right)}
\end{aligned}$
& 3513.3 & 3479.2 \\ \midrule
Proposed & Given in 
 \eqref{eq:lateral  acceleration}
& 3108.1 & 2634.8 \\ \bottomrule
\end{tabular}
\caption{Comparison of performance of various guidance laws.}
\label{tab:control_com}
\end{table}

\begin{figure}[ht!]
  \centering
  \begin{subfigure}[t]{.49\linewidth}
    \centering
    \includegraphics[width=\linewidth]{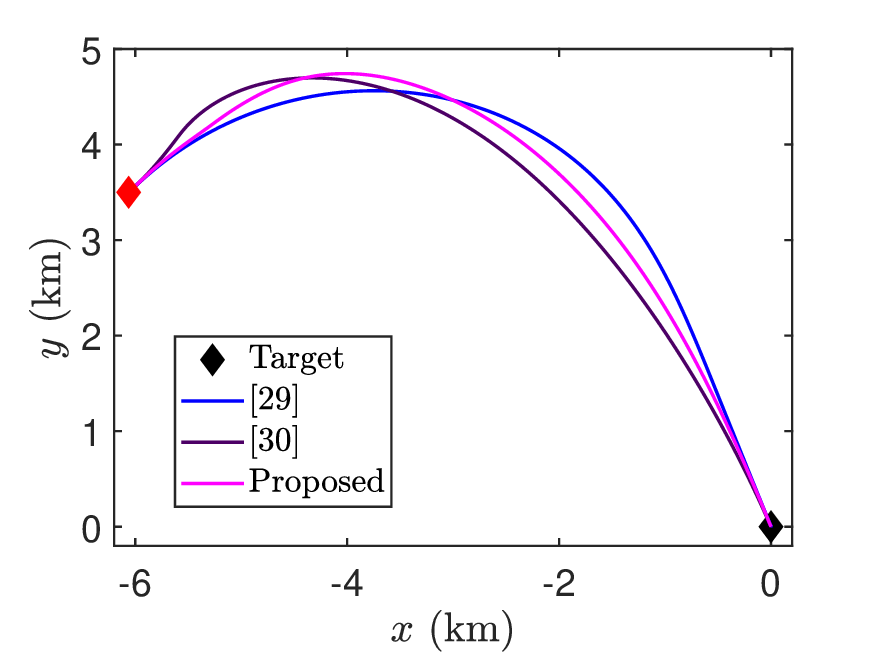}
     \caption{Trajectories.}
    \label{fig:with_auto_traj}
  \end{subfigure}
  \begin{subfigure}[t]{.49\linewidth}
    \centering
    \includegraphics[width=\linewidth]{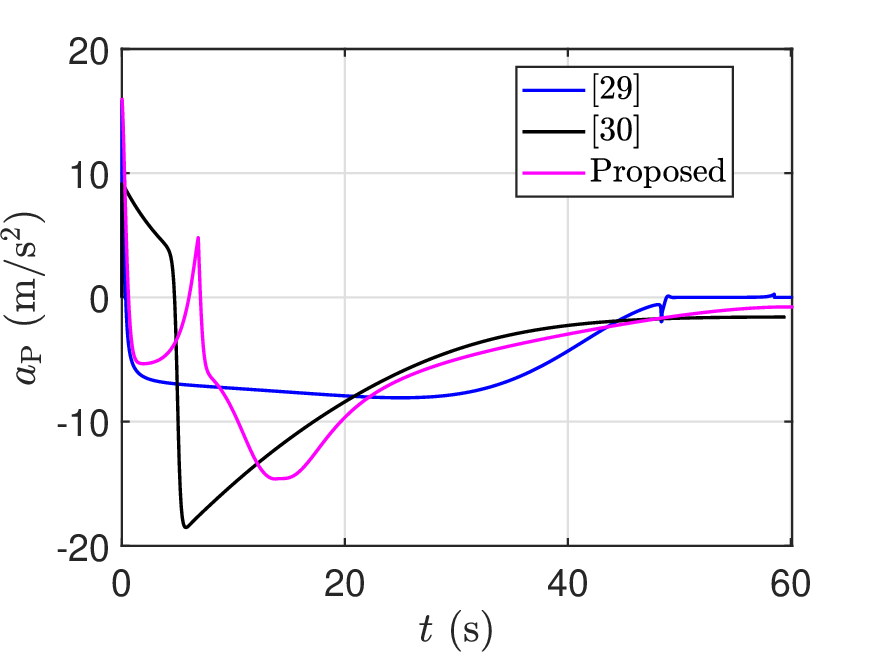}
    \caption{Lateral accelerations.}
    \label{fig:with_auto_acc}
  \end{subfigure}
  \begin{subfigure}[t]{.49\linewidth}
    \centering
    \includegraphics[width=\linewidth]{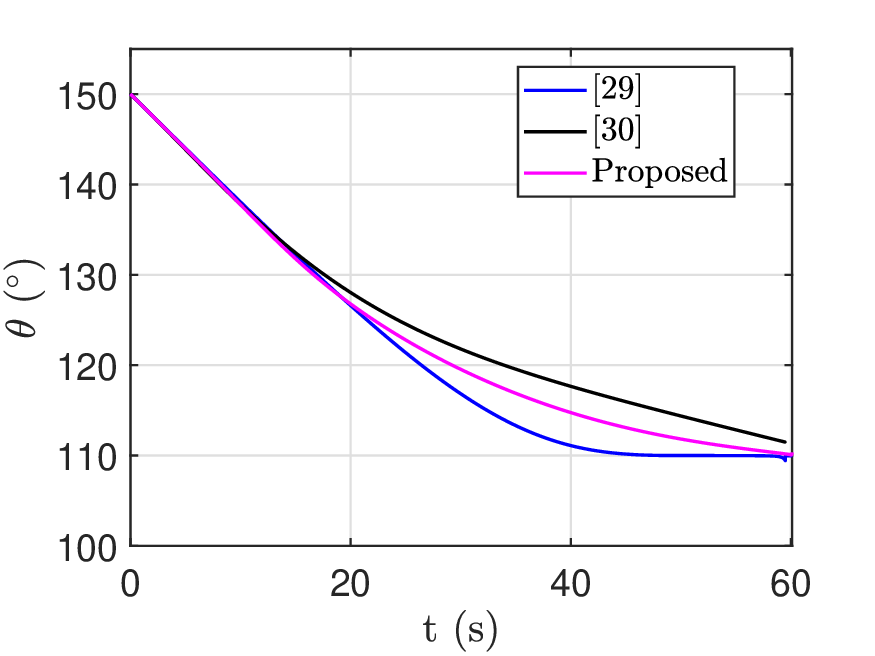}
     \caption{LOS angles.}
     \label{fig:with_auto_theta}
  \end{subfigure}
  \begin{subfigure}[t]{.49\linewidth}
    \centering
    \includegraphics[width=\linewidth]{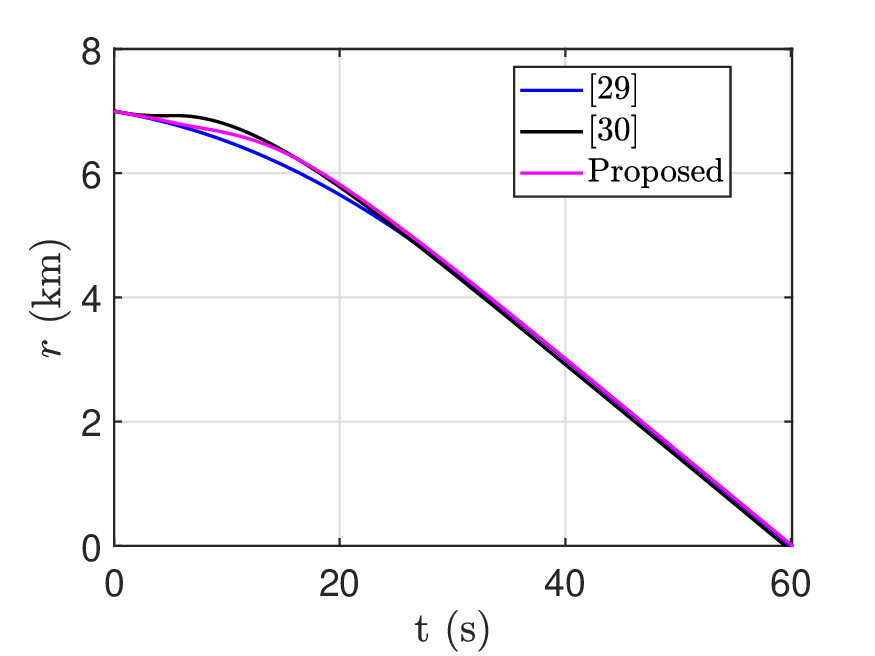}
     \caption{Range profiles.}
    \label{fig:with_auto_range}
  \end{subfigure}
  \begin{subfigure}[t]{.49\linewidth}
    \centering
    \includegraphics[width=\linewidth]{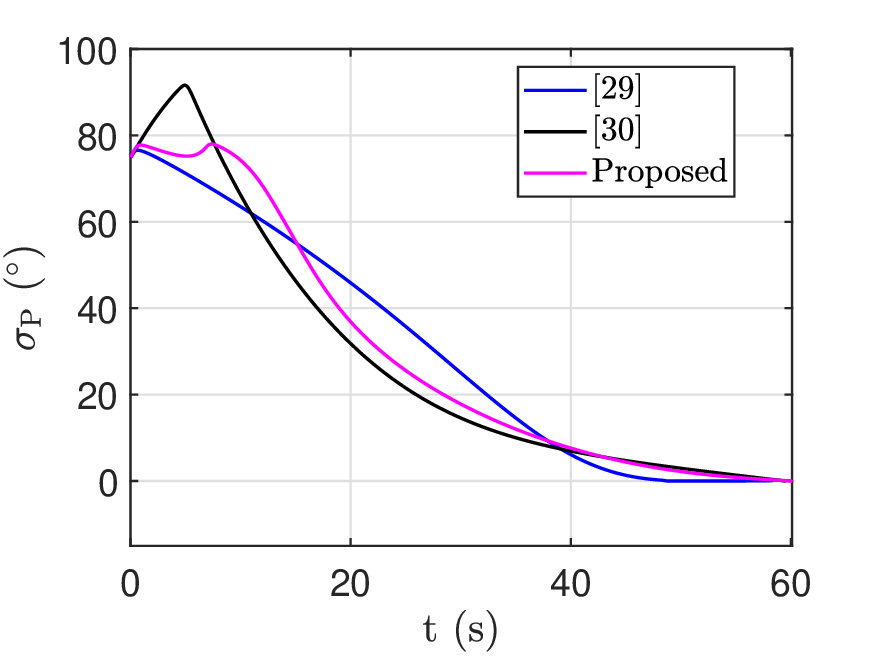}
     \caption{Lead angle profiles.}
    \label{fig:with_auto_sig}
  \end{subfigure}
  \begin{subfigure}[t]{.49\linewidth}
    \centering
    \includegraphics[width=\linewidth]{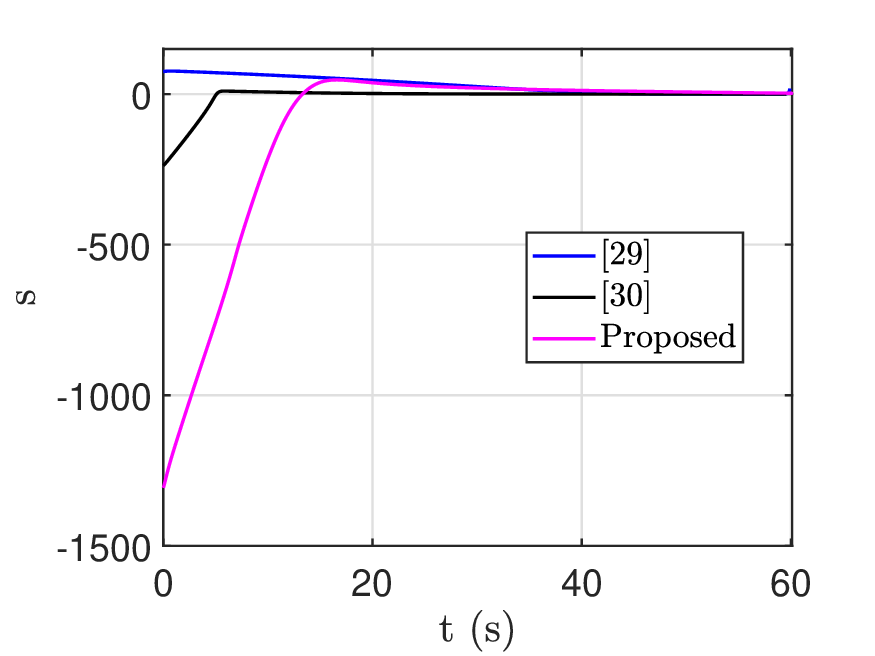}
     \caption{Sliding manifold profiles.}
    \label{fig:with_auto_s}
  \end{subfigure}
  \caption{Performance comparison of proposed strategy with existing guidance strategies under first-order autopilot lag.}
   \label{fig:with_auto_comparison}
\end{figure}

 The robustness of the proposed approach with respect to another time-to-go estimate given in \eqref{eq:tgo_2}, is validated through the simulation results next.  The initial range and LOS angle between the interceptor and the target are $6$ km and $170^\circ$, respectively, while the initial heading of the interceptor is $70^\circ$. The desired time and angle are set to $45$ s and $110^\circ$, respectively. The design parameters are $c_1=9, c_2=7, m=250,\eta=0.1, \kappa=5, \chi_1=\chi_2=0$, and $\Omega=0.75$. The simulation results for this case are shown in \Cref{fig:Results_tgo2}. The interceptor's trajectory is illustrated in \Cref{fig:tgo22}, from which one may notice that the interceptor successfully intercepts the target even if the time-to-go is different. The time-to-go profile is depicted in \Cref{fig:tgo22}, in which it is apparent that the target is intercepted at the desired impact time. Additionally, even though the initial time-to-go estimate is greater than the desired impact time, the interceptor effectively adjusts its trajectory using the strategy following the command in \eqref{eq:aP2} to ensure interception at the desired impact time. The interceptor's lateral acceleration profile is depicted in \Cref{fig:a_P_tgo2}. It can be observed that the interceptor requires higher control authority in the transient phase to achieve its course correction, followed by a smooth convergence close to zero. Furthermore, at around $40$ s, there appears a smooth increase in the lateral acceleration demand to achieve the impact angle constraint. The LOS angle and the composite sliding manifold are demonstrated in \Cref{fig:s_tgo2}. The LOS angle approaches its desired value in the terminal phase. The composite sliding surface rapidly converges to zero and remains on it. The lead angle in \Cref{fig:sig_tgo2} converges to zero at the interception instant. As depicted in \Cref{fig:r_tgo2}, the range eventually converges to zero as the lead angle decreases, which is consistent with the previous cases.
\begin{figure}[ht!]
  \centering

  \begin{subfigure}[t]{.49\linewidth}
    \centering
    \includegraphics[width=\linewidth]{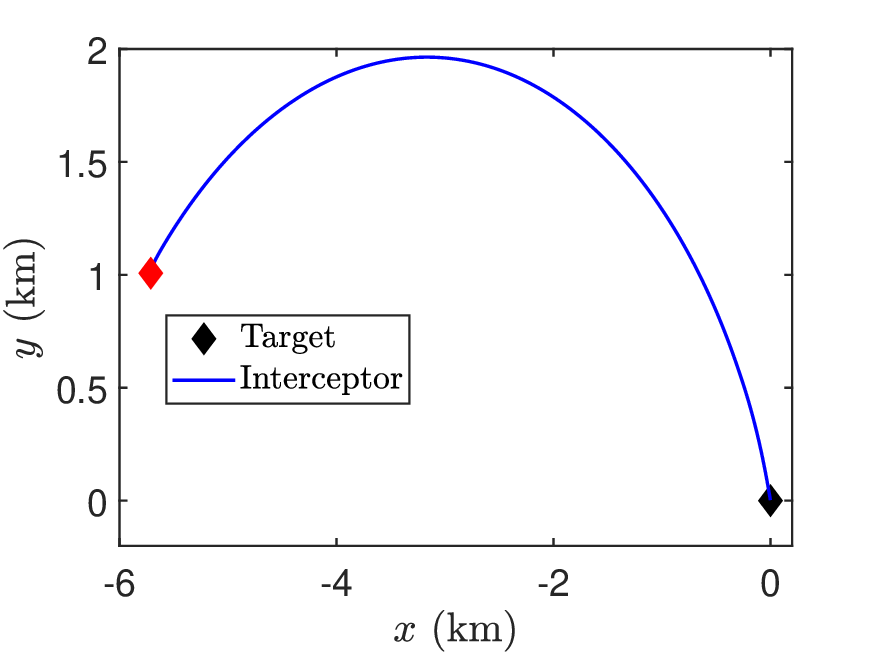}
    \caption{Interceptor's trajectory.}
    \label{fig:traj_tgo2}
  \end{subfigure}\hfill 
  \begin{subfigure}[t]{0.49\linewidth}
    \centering
    \includegraphics[width=\linewidth]{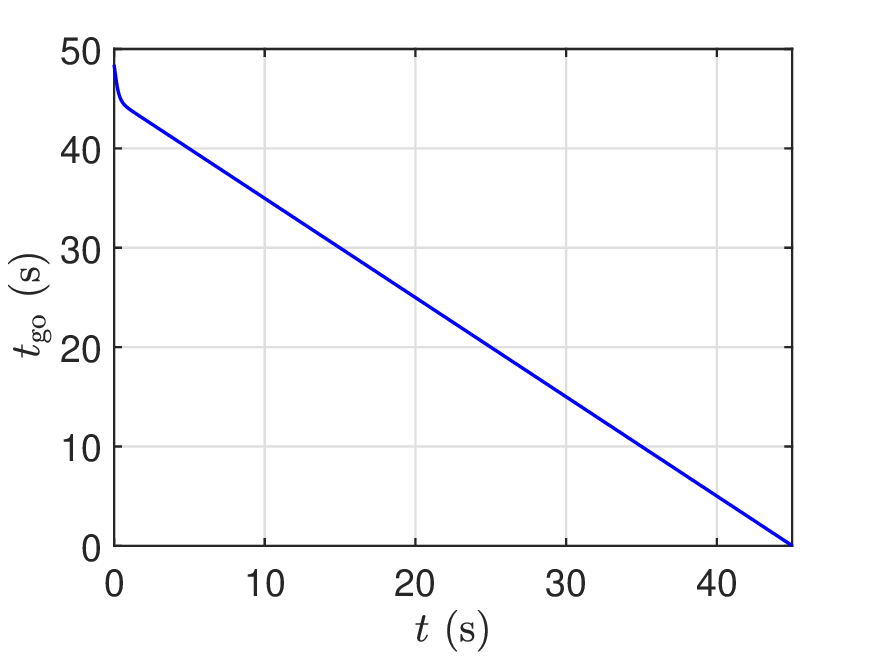}
    \caption{Time-to-go profile.}
    \label{fig:tgo22}
  \end{subfigure}

  \begin{subfigure}[t]{.49\linewidth}
    \centering
    \includegraphics[width=\linewidth]{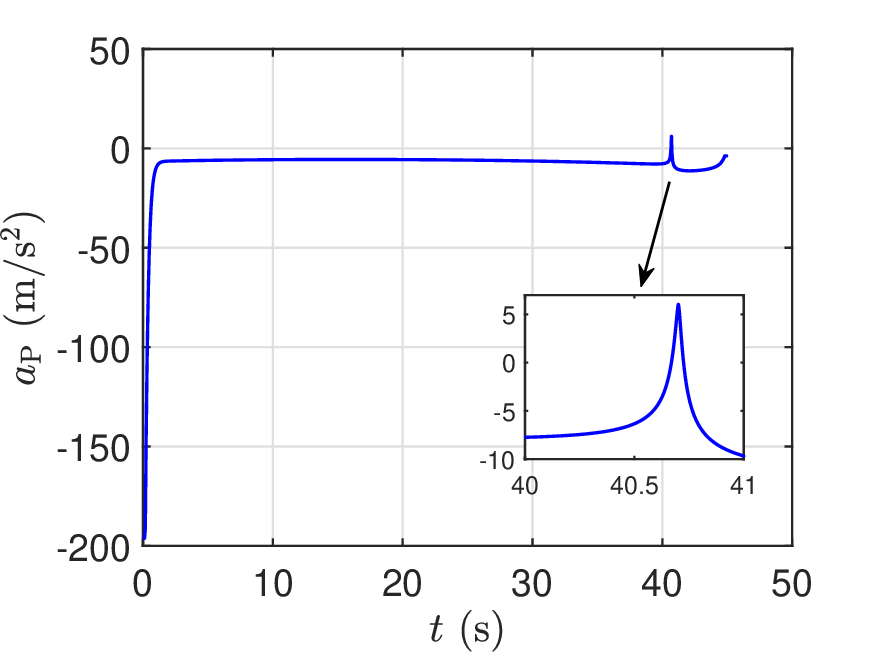}
    \caption{Lateral acceleration (steering control).}
    \label{fig:a_P_tgo2}
  \end{subfigure}\hfill
  \begin{subfigure}[t]{.49\linewidth}
    \centering
    \includegraphics[width=\linewidth]{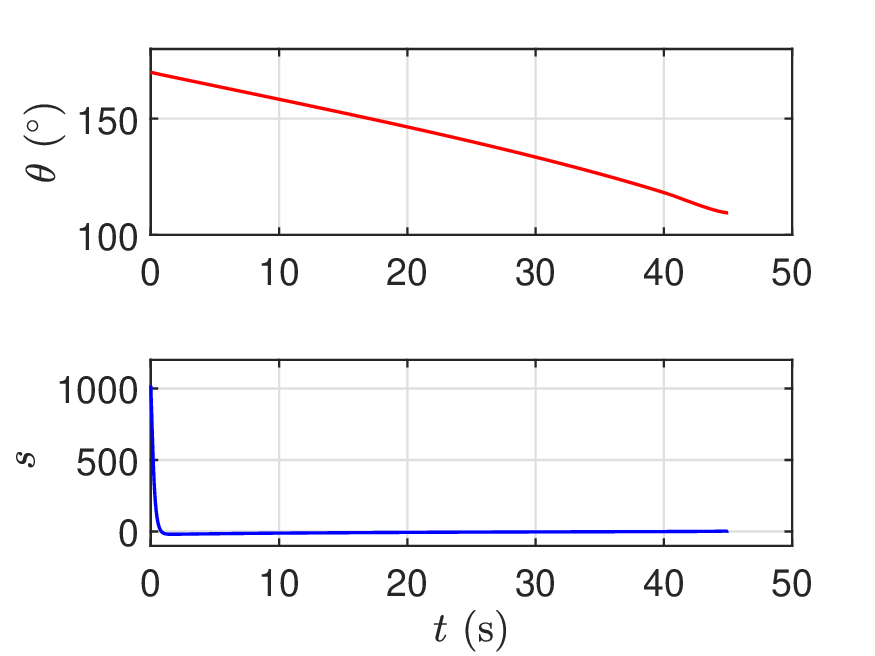}
    \caption{LOS angle and composite sliding manifold profiles.}
    \label{fig:s_tgo2}
  \end{subfigure}
  \begin{subfigure}[t]{.49\linewidth}
    \centering
    \includegraphics[width=\linewidth]{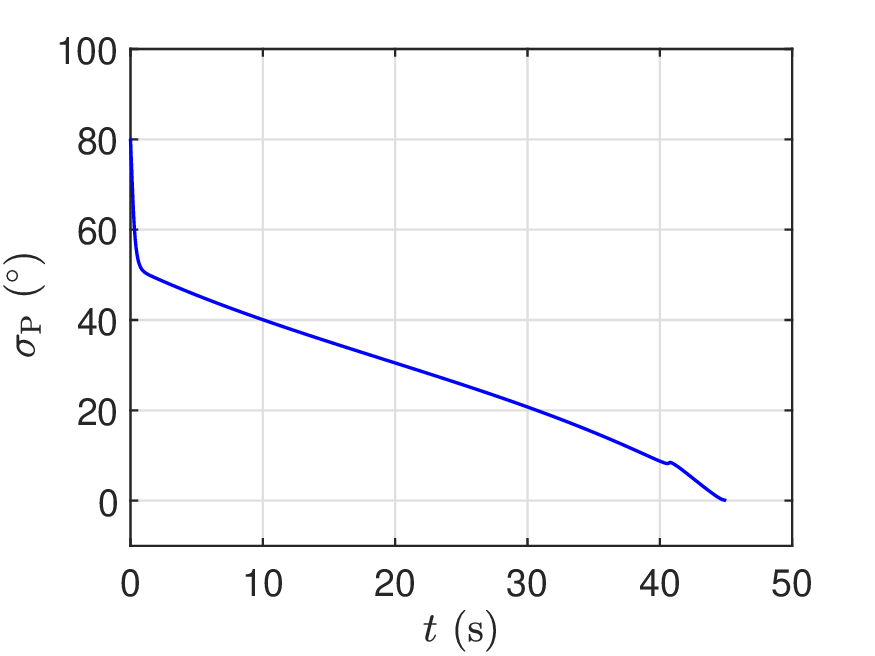}
    \caption{Lead angle profile.}
    \label{fig:sig_tgo2}
  \end{subfigure}
  \begin{subfigure}[t]{.49\linewidth}
    \centering
    \includegraphics[width=\linewidth]{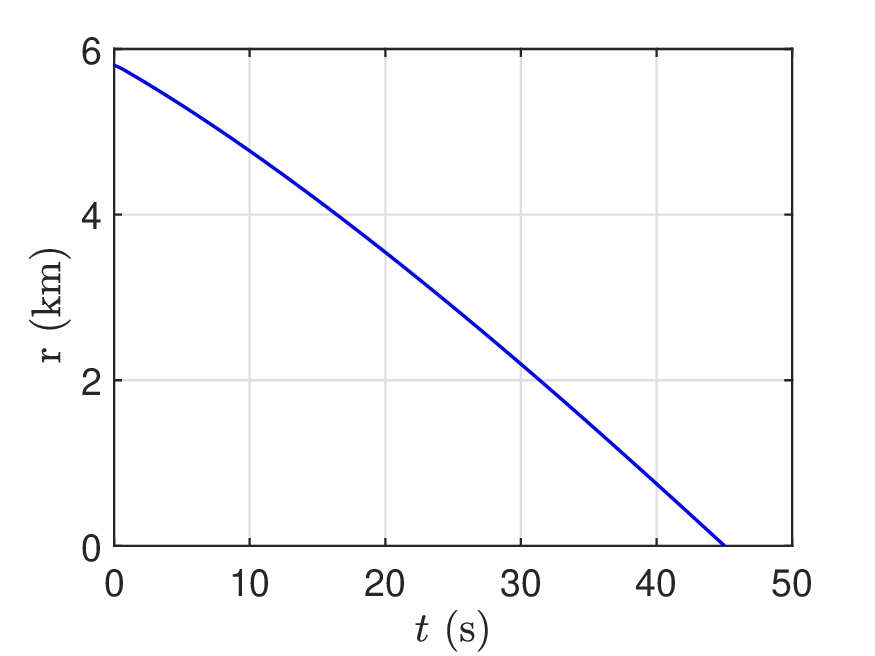}
    \caption{Range.}
    \label{fig:r_tgo2}
  \end{subfigure}
  \caption{Performance of the proposed strategy for the time-to-go \eqref{eq:tgo_2} for $t_\mathrm{f}=45$ s and $\theta_\mathrm{d}= 110^\circ$.}
  \label{fig:Results_tgo2}
\end{figure}

The variations of the adaptive gain ($\lambda$) for different initial and terminal conditions are depicted in \Cref{fig:lambda}. It is apparent from the plot that the gain varies during the transient phase as the interceptor adjusts its course to drive the impact time error to zero, followed by convergence to a constant value in the steady state. Additionally, it is evident that the gain does not cross zero, indicating that the sub-sliding surface associated with the impact angle error, $s_\theta$, converges to zero when the sub-sliding surface associated with the impact time error, $s_\mathrm{t}$, approaches zero, which is consistent with \Cref{rem:s_rel}.
\begin{figure}[h!]
    \centering
    \includegraphics[width=0.5\linewidth]{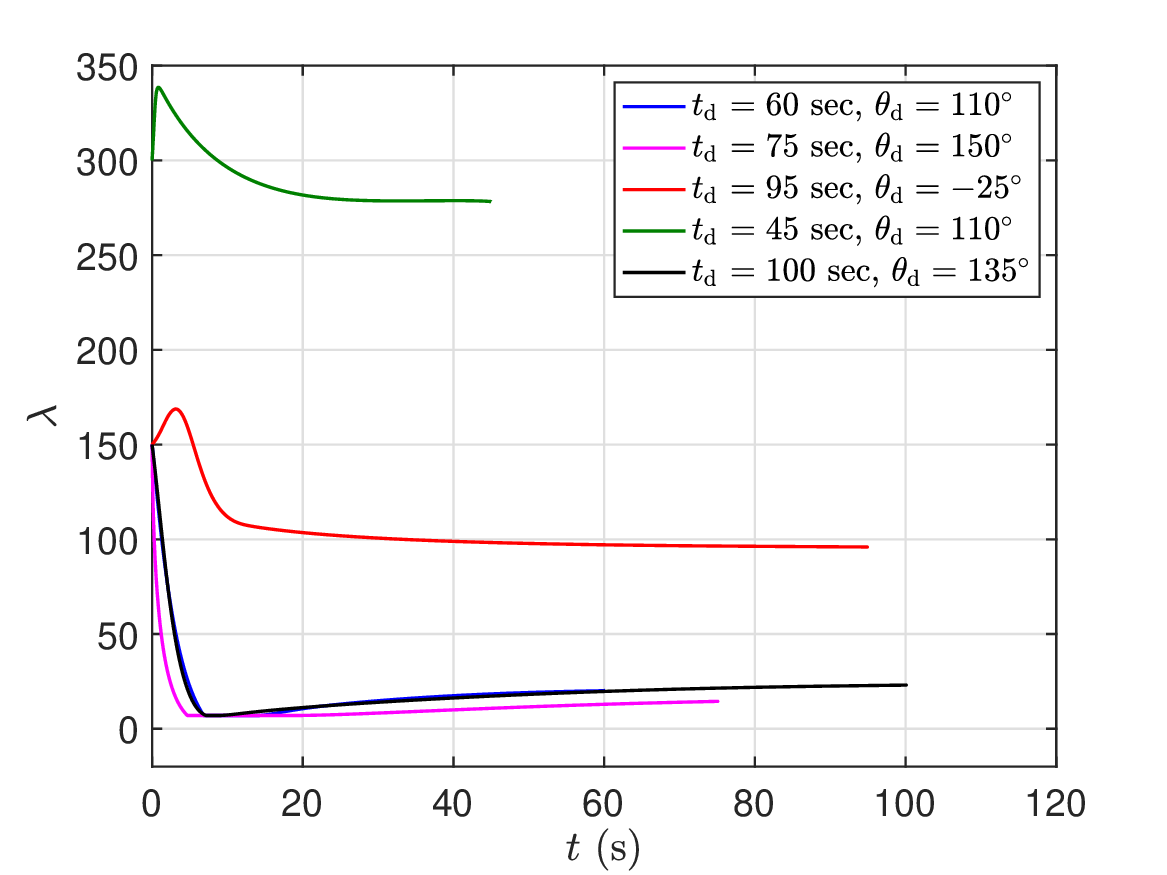}
    \caption{Variation of adaptive gain ($\lambda$).}
    \label{fig:lambda}
\end{figure}

\section{Conclusions}
In this work, we designed an adaptive guidance strategy to control both impact time and impact angle simultaneously using the interceptor's lateral acceleration as the sole control input. The proposed approach was built upon a hierarchical sliding mode framework comprising two layers-- the first layer consists of two sub-sliding surfaces associated with impact time and impact angle error variables, while the second layer integrates these sub-sliding surfaces into a composite sliding surface.  An adaptive gain was assigned to the sub-sliding surface associated with impact time. The simulation results demonstrate the effectiveness of the proposed framework for various initial engagement geometries and different time-to-go estimates. Furthermore, the performance of the proposed approach was compared with that of the existing strategies in the absence and presence of system lag. It was found that the proposed strategy is the most cost efficient in terms of lateral acceleration demand and robust to time-to-go estimates used for guidance design. Moreover, no specialized sliding manifold needs to be constructed for specific target motion, and hence, the proposed strategy was also shown to be effective against a moving target.

\bibliography{reference}
 \end{document}